\title{On Polynomial Kernels for Traveling Salesperson Problem and its Generalizations} %
\titlerunning{On Polynomial Kernels for TSP} %
\author{Václav Blažej}{Department of Theoretical Computer Science, Faculty of Information Technology,
\\Czech Technical University in Prague, Prague, Czech~Republic}
{vaclav.blazej@fit.cvut.cz}{https://orcid.org/0000-0001-9165-6280}{}
\author{Pratibha Choudhary}{Department of Theoretical Computer Science, Faculty of Information Technology,
\\Czech Technical University in Prague, Prague, Czech~Republic}
{pratibha.choudhary@fit.cvut.cz}{https://orcid.org/0000-0002-1648-288X}{}
\author{Dušan Knop}{Department of Theoretical Computer Science, Faculty of Information Technology,
\\Czech Technical University in Prague, Prague, Czech~Republic}
{dusan.knop@fit.cvut.cz}{https://orcid.org/0000-0003-2588-5709}{}
\author{Šimon Schierreich}{Department of Theoretical Computer Science, Faculty of Information Technology,
\\Czech Technical University in Prague, Prague, Czech~Republic}
{schiesim@fit.cvut.cz}{https://orcid.org/0000-0001-8901-1942}{}
\author{Ondřej Suchý}{Department of Theoretical Computer Science, Faculty of Information Technology,
\\Czech Technical University in Prague, Prague, Czech~Republic}
{ondrej.suchy@fit.cvut.cz}{https://orcid.org/0000-0002-7236-8336}{}
\author{Tomáš Valla}{Department of Theoretical Computer Science, Faculty of Information Technology,
\\Czech Technical University in Prague, Prague, Czech~Republic}
{tomas.valla@fit.cvut.cz}{https://orcid.org/0000-0003-1228-7160}{}
\authorrunning{V. Blažej, P. Choudhary, D. Knop, Š. Schierreich, O. Suchý, T. Valla} %
\keywords{Traveling Salesperson, Subset TSP, Waypoint Routing, Kernelization}
\def\resultbox<#1>(#2)(#3)(#4;#5)(#6;#7)(#8;#9)%
	\tikzstyle{cpxity} = [minimum height=0.4cm,inner sep=0pt,align=center,node distance=0];
	\def\totalwidth{3.85cm};
\newlength{\RoundedBoxWidth}
\newsavebox{\GrayRoundedBox}
\newenvironment{GrayBox}[1]%
{\setlength{\RoundedBoxWidth}{.93\columnwidth}%
	\def\boxheading{#1}%
	\begin{lrbox}{\GrayRoundedBox}%
		\begin{minipage}{\RoundedBoxWidth}}%
		{   \end{minipage}
	\end{lrbox}
	\begin{center}
		\begin{tikzpicture}%
		\node(Text)[draw=black!20,fill=white,rounded corners,inner sep=2ex,text width=\RoundedBoxWidth]
		{\usebox{\GrayRoundedBox}};
		\coordinate(x) at (current bounding box.north west);
		\node [draw=white,rectangle,inner sep=3pt,anchor=north west,fill=white]
		at ($(x)+(6pt,.75em)$) {\boxheading};
		\end{tikzpicture}
\end{center}}
\newenvironment{defproblemx}[1]{\noindent\ignorespaces%
	\FrameSep=6pt%
	\parindent=0pt%
	\begin{GrayBox}{#1}%
		\begin{tabular*}{\columnwidth}{!{\extracolsep{\fill}}@{\hspace{.1em}} >{\itshape} p{1.1cm} p{0.89\columnwidth} @{}}%
		}{
			\vspace*{-1em}
		\end{tabular*}%
	\end{GrayBox}%
	\ignorespacesafterend
}
\newcommand{\defProblemQuestion}[3]{%
	\begin{defproblemx}{#1}
		Input: & #2 \\
		Question: & #3
	\end{defproblemx}
}
\newcommand{\N}{\ensuremath{\mathbb{N}}}
\newcommand{\Z}{\ensuremath{\mathbb{Z}}}
\newcommand{\TSP}{\textsc{Traveling Salesperson Problem}\xspace}
\newcommand{\TSPshort}{\textsc{TSP}\xspace}
\newcommand{\sTSP}{\textsc{Subset \TSPshort}\xspace}
\newcommand{\sTSPshort}{{\sc sub\TSPshort}\xspace}
\newcommand{\WRP}{\textsc{Waypoint Routing Problem}\xspace}
\newcommand{\WRPshort}{\textsc{WRP}\xspace}
\newcommand{\fes}{\ensuremath{\operatorname{fes}}\xspace}
\newcommand{\vc}{\ensuremath{\operatorname{vc}}\xspace}
\newcommand{\fn}{\ensuremath{\operatorname{fn}}\xspace}
\newcommand{\tw}{\ensuremath{\operatorname{tw}}\xspace}
\newcommand{\modulator}{\ensuremath{\operatorname{modul}}\xspace}
\renewcommand{\O}{\ensuremath{\mathcal{O}}}
\newcommand{\poly}{\ensuremath{\operatorname{poly}}}
\renewcommand{\P}{\ensuremath{\mathsf{P}}\xspace}
\newcommand{\NP}{\ensuremath{\mathsf{NP}}\xspace}
\newcommand{\NPh}{\NP-hard\xspace}
\newcommand{\NPc}{\NP-complete\xspace}
\newcommand{\coNPpoly}{{\rm co}\NP{\rm /poly}\xspace}
\newcommand{\FPT}{\ensuremath{\mathsf{FPT}}\xspace}
\newcommand{\XP}{\ensuremath{\mathsf{XP}}\xspace}
\newcommand{\W}[1][1]{\ensuremath{\mathsf{W[#1]}}\xspace}
\newcommand{\Wh}[1][1]{\W[#1]-hard\xspace}
\newcommand{\Whness}[1][1]{\W[#1]-hardness\xspace}
\newcommand{\WK}[1][1]{\ensuremath{\mathsf{WK[#1]}}\xspace}
\newcommand{\WKh}[1][1]{\WK[#1]-hard\xspace}
\newcommand{\WKc}[1][1]{\WK[#1]-complete\xspace}
\DeclareMathAlphabet{\mathpzc}{OT1}{pzc}{m}{it}
\newcommand{\budget}{\ensuremath{\mathpzc{b}}} %
\newcommand{\wFn}{\ensuremath{\omega}} %
\newcommand{\WP}{\ensuremath{W}} %
\newcommand{\cFn}{\ensuremath{\kappa}} %
\newcommand{\natBeh}{b^{\operatorname{nat}}}
\newcommand{\allBeh}{\ensuremath{\mathcal{B}}}
\newcommand{\Imp}{\ensuremath{\operatorname{imp}}} %
\newcommand{\allImp}{\ensuremath{\mathcal{I}}} %
\newcommand{\YES}{\emph{yes}\xspace}
\newcommand{\YESi}{\YES-instance\xspace}
\newcommand{\NO}{\emph{no}\xspace}
\newcommand{\vasek}[1]{}
\newtheorem{rrule}{Reduction Rule}
\Crefname{rrule}{Reduction Rule}{Reduction Rules}
\Crefname{rrule}{Reduction Rule}{Reduction Rules}
\Crefname{claim}{Claim}{Claims}
\begin{document}

\maketitle

\begin{abstract}
	For many problems, the important instances from practice possess certain structure that one should reflect in the design of specific algorithms.
	As data reduction is an important and inextricable part of today's computation, we employ one of the most successful models of such precomputation---the kernelization.
	Within this framework, we focus on \TSP (\TSPshort) and some of its generalizations.
	
	We provide a kernel for \TSPshort with size polynomial in either the feedback edge set number or the size of a modulator to constant-sized components.
	For its generalizations, we also consider other structural parameters such as the vertex cover number and the size of a modulator to constant-sized paths.
	We complement our results from the negative side by showing that the existence of a polynomial-sized kernel with respect to the fractioning number, the combined parameter maximum degree and treewidth, and, in the case of \sTSP, modulator to disjoint cycles (i.e., the treewidth two graphs) is unlikely.
\end{abstract}

\section{Introduction}

The \TSP is among the most popular and most intensively studied combinatorial optimization problems in graph theory\footnote{\TSPshort Homepage: \url{http://www.math.uwaterloo.ca/tsp/index.html}.}.
From the practical point of view, the problem was already studied in the 1950s; see, e.g., \cite{Croes58}.
We define the problem formally as follows.

\defProblemQuestion{\TSP (\TSPshort)}
{An undirected graph $G=(V,E)$, edge weights $\wFn \colon E \to \N$, and a budget $\budget \in \N$.}
{Is there a closed walk in $G$ of total weight at most $\budget$ that traverses each vertex of $G$ at least once?}

\TSPshort is known to be \NPh~\cite{GareyJ79}. %
It is worth mentioning that the input to \TSPshort is usually a complete graph and it is required that each vertex is visited exactly once (single-visit version).
In this paper, we aim to study the impact of the structure of the input graph on the complexity of finding a solution to \TSPshort.
Towards this, we employ parameterized analysis~\cite{CyganFKLMPPS15} and therefore we also consider the decision version of \TSPshort.
Many variants of the problem have been studied in detail~\cite{GutinP07}. %
Our formulation is also known as \textsc{Graphical TSP}~\cite{CornuejolsFN85}.

In this paper, we further study two natural generalizations of \TSPshort---\sTSP and \WRP.
In \sTSP (\sTSPshort) the objective is to find a closed walk required to traverse only a given subset $\WP$ of vertices (referred to as \textit{waypoints}) instead of the whole vertex set as is in \TSPshort.
An instance of \TSPshort can be interpreted as an instance of \sTSPshort by letting $\WP=V(G)$.
In \WRP (\WRPshort) the edges of the input graph have capacities (given by a function $\cFn \colon E(G) \to \N$) and the objective is to find a closed walk traversing a given subset of vertices respecting the edge capacities, i.e., there is an upper limit on the number of times each edge can be traversed in a solution.
As we show later, an instance of \sTSPshort can be interpreted as an instance of \WRPshort by letting $\cFn(e)=2$ for every $e \in E(G)$.
Note that both \sTSPshort and \WRPshort are \NPc.

\subparagraph*{Related Work.}
\TSPshort (and its variants) were extensively studied from the viewpoint of approximation algorithms.
The single-visit version of \TSPshort in general cannot be approximated, unless $\P=\NP$~\cite{SahniG76}.
For the metric single-visit version of the problem Christofides~\cite{Christofides76} provided a $\frac{3}{2}$-approximation, while it is known that unless $\P=\NP$, there is no $\frac{117}{116}$-approximation algorithm~\cite{ChlebikC19}.
This is so far the best known approximation algorithm for the general case,
despite a considerable effort, e.g., \cite{ShmoysW90, Goemans95, CarrVM00, GamarnikLS05, BoydC11, HaddadanN19, KarlinKG20}.

The problem remains \textsf{APX}-hard even in the case of weights 1 and 2, however, a $\frac{7}{6}$-approximation algorithm is known for this special case~\cite{PapadimitriouY93}.
A \textsf{PTAS} is known for the special case of Euclidean~\cite{Arora98, Mitchell} and planar~\cite{GrigniKP95, Arora96, Klein05} \TSPshort.
Also, the case of graph metrics received significant attention.
Gharan et al.~\cite{GharanSS11} found a $\frac{3}{2}-\varepsilon_0$ approximation for this variant. M\"omke and Svensson~\cite{MomkeS11} then obtained a combinatorial algorithm for graphic TSP with an approximation factor of $1.461$.
This was later improved by Mucha~\cite{Mucha12} to $\frac{13}{9}$ and then by Sebö and Vygen~\cite{SeboV14} to $1.4$.
See, e.g., the monograph of Applegate et al.~\cite{ApplegateBixbyChvatalCook2011} for further information.

A popular practical approach to the single-visit version of \TSPshort is to gradually improve some given solution using local improvements---the so-called $q$-OPT moves (see, e.g., \cite{Johnson2007,JohnsonMcGeoch2018}).
Marx~\cite{Marx08} proved that \TSPshort is \Wh w.r.t.~$q$.
Later, Bonnet et al.~\cite{BonnetIJK19} studied the $q$-OPT technique on bounded degree graphs.
They also investigated the complexity in the case where $q$ is a fixed constant.
The first significant theoretical improvement in the general case was by de Berg et al.~\cite{deBergBJW21} (announced at ICALP~'16) who gave an $\O(n^{\lfloor 2q/3 \rfloor + 1})$ time dynamic programming based algorithm.
Furthermore, they showed that improving upon the $\O(n^3)$ time for the case of $3$-OPT would yield a breakthrough result by a connection with the \textsc{All Pairs Shortest Paths} (APSP) problem (see, e.g., \cite{ChanW21}).
Lancia and Dalpasso~\cite{LanciaD2019} observed that the DP-based approach is indeed slow in practice and implemented a practical fast algorithm tailored for $4$-OPT.
Later, Cygan et al.~\cite{CyganKS19} improved the result of de Berg et al.~\cite{deBergBJW21} by presenting an $\O(n^{(1/4 + \varepsilon_q)q})$ time algorithm.
Moreover, they showed a connection of $4$-OPT to APSP.
Guo et al.~\cite{GuoHNS13} considered other suitable measures (parameters) for local search (e.g., $r$-swap or $r$-edit) and provided a \Whness result in general graphs (for both parameters).
Furthermore, they gave an \FPT algorithm for planar graphs and showed that the existence of polynomial kernels is unlikely even in this case.

\sTSP can be solved in time $2^{\ell} \cdot n^{\O(1)}$ by first computing the distance between every pair of waypoints, and then solving the resulting $\ell$-waypoint instance using the standard Bellman-Held-Karp dynamic programming algorithm~\cite{Bellman62,HeldK71}.
Klein and Marx~\cite{KleinM14} showed that if $G$ is an undirected planar graph with polynomially bounded edge weights, then the problem can be solved significantly faster, in $2^{\O(\sqrt{\ell}\log \ell)} \cdot n^{\O(1)}$ time.
It is also known that \sTSP can be solved in $2^{\O(\sqrt{\ell}\log \ell)}\cdot n^{\O(1)}$ on edge-weighted directed planar graphs~\cite{MarxPP18}.

Amiri et al.~\cite{AmiriFS20} initiated the study of the variant of \WRPshort discussed in this paper.
They showed that the problem admits an \XP algorithm w.r.t.\ the treewidth of the input graph.
They also gave a randomized algorithm for the problem with running time $2^\ell\cdot n^{\O(1)}$, where $\ell$ is the number of waypoints; the algorithm can be derandomized if $\ell=\O((\log \log n)^{1/10})$.
Schierreich and Suchý~\cite{SchierreichS22} improved the former algorithm by giving a deterministic algorithm with running time $2^{\O(\operatorname{tw})}\cdot n$, which is tight under ETH~\cite{ImpagliazzoP2001}.
Note that \textsc{Hamiltonian Path} and therefore \TSPshort is \Wh w.r.t.\ the cliquewidth of the input graph~\cite{FominGLS10}.

It is somewhat surprising that, despite its popularity in theoretical computer science, the \TSPshort problem did not receive much attention from the perspective of data reduction (kernelization).
A notable exception is the work of Mannens et al.~\cite{MannensNSS21,MannensNSS21arxiv} who studied \textsc{Many Visits TSP} and gave a polynomial kernel when the problem is parameterized by the vertex cover number of the input graph.
In this work, we significantly expand the kernelization approach to \TSPshort and its variants.

\subparagraph{Structural parameters within the class of bounded treewidth.}
Even though all problems in \FPT admit a kernel, not all of them admit a polynomial-sized kernel.
It is not hard to see that \TSPshort is AND-compositional and therefore we can (conditionally) exclude the existence of a polynomial kernel w.r.t.\ treewidth; more precisely, we show the following.

\begin{restatable}{lemma}{lemTSPnoPKwrtFN}
	\label{lem:TSP:noPKwrtFN}
	There is no polynomial kernel for unweighted \TSP with respect to either the fractioning number or the combined parameter treewidth and maximum degree, unless the polynomial hierarchy collapses.
\end{restatable}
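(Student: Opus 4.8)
The plan is to prove an AND-cross-composition from \textsc{Hamiltonian Cycle} (which is \NPc) into unweighted \TSPshort parameterized by the parameter in question; by the standard cross-composition machinery relying on Drucker's theorem (an AND-cross-composable problem admits no polynomial kernelization, nor compression, unless $\NP\subseteq\coNPpoly$, which collapses the polynomial hierarchy) this yields the statement. As the polynomial equivalence relation we group the \textsc{Hamiltonian Cycle} instances by their number of vertices, so we may assume we are given $t$ instances $G_1,\dots,G_t$, each on exactly $n\ge 3$ vertices. The elementary fact behind both constructions is the following: if $F$ is a graph, $v\in V(F)$, and $W_1,\dots,W_k$ are closed walks in $F$, each of positive length, starting and ending at $v$, whose traversed vertices together cover $V(F)$, then $\sum_i |W_i|\ge |V(F)|$, with equality only if $k=1$ and $W_1$ is a Hamiltonian cycle of $F$ traversed from $v$. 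This holds because a closed walk on $\ell$ edges visits at most $\ell$ distinct vertices while all the $W_i$ share the vertex $v$, and tightness forces a single spanning cycle.

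For the \emph{fractioning number}, let $H$ be the disjoint union of $G_1,\dots,G_t$ together with one new vertex $r$ joined by an edge to an arbitrarily chosen vertex $v_i\in V(G_i)$ for each $i$; let all edge weights be $1$ and set $\budget=t(n+2)$. Each edge $\{r,v_i\}$ is a bridge that separates $V(G_i)$ from the rest of $H$, so any closed walk visiting all of $V(H)$ crosses it an even---hence at least twice---number of times; and the maximal segments of such a walk that stay inside $G_i$ are closed walks at $v_i$, so by the fact above they use at least $n$ internal edges in total. Summing over $i$, every feasible walk has weight at least $2t+tn=t(n+2)$, with equality possible only when every $G_i$ has a Hamiltonian cycle; conversely, if every $G_i$ is Hamiltonian, the walk that from $r$ repeatedly enters some $G_i$, traverses its Hamiltonian cycle, and returns to $r$ has weight exactly $t(n+2)$. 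Hence $(H,\mathbf{1},\budget)$ is a \YESi if and only if all the $G_i$ are. Deleting $r$ leaves the $t$ components $G_1,\dots,G_t$, each of size $n$, so the fractioning number of $H$ is at most $\max\{1,n\}=n\le\max_i|x_i|$, which is within the bound required by cross-composition.

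For the \emph{combined parameter treewidth and maximum degree}, we connect the chosen vertices into a path rather than a star: let $H'$ be the disjoint union of $G_1,\dots,G_t$ together with the edges $\{v_i,v_{i+1}\}$ for $i=1,\dots,t-1$, all weights $1$, and set $\budget'=tn+2(t-1)$. Each connector $\{v_i,v_{i+1}\}$ is again a bridge, $v_i$ is still the only vertex of $G_i$ meeting a connector, and the same reasoning shows that every spanning closed walk has weight at least $tn+2(t-1)$, with equality only when all $G_i$ are Hamiltonian, while if they all are, such a walk exists (traverse each $G_i$ along its Hamiltonian cycle, walking the connector path forth and back). Thus $(H',\mathbf{1},\budget')$ is a \YESi iff all $G_i$ are. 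Since $H'$ arises from the pieces $G_i$ by gluing along a path, $\tw(H')\le\max_i\tw(G_i)+\O(1)$ and $\Delta(H')\le\max_i\Delta(G_i)+2$; as $\tw(G_i),\Delta(G_i)< n$, the combined parameter of $H'$ is $\O(n)=\O(\max_i|x_i|)$, again within the required bound.

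The part needing genuine care is the walk-counting claim and the way it pins down the budgets: one must verify that a minimum-weight spanning closed walk of $H$ (resp.\ $H'$) crosses every bridge exactly twice and spends exactly $n$ edge-traversals inside each $G_i$, and that the latter is attainable precisely when $G_i$ has a Hamiltonian cycle. Everything else---the polynomial running time and polynomial size of the construction, the parameter bounds, and the appeal to the AND-cross-composition framework---is routine.
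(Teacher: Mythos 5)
Your proof is correct, and at the top level it takes the same route as the paper: an AND-cross-composition of a Hamiltonicity problem into unweighted \TSPshort, one construction for the fractioning number and one for treewidth plus maximum degree, concluding via Drucker's theorem. The constructions themselves are genuinely different, though. The paper composes \textsc{Hamiltonian Path}, and because a Hamiltonian path may start and end anywhere, it must attach the connector vertices to \emph{all} vertices of each instance (an apex adjacent to every vertex for the fractioning number; each $v_i$ joined to all of $V(G_i)\cup V(G_{(i+1)\bmod t})$ for degree plus treewidth); the budget analysis then counts the connected components of the solution restricted to each $G_i$. You compose \textsc{Hamiltonian Cycle} instead, which lets you attach each instance by a single edge, so every connector is a bridge; your lower bound becomes a clean parity-plus-length count (each bridge is crossed at least twice, and the closed segments inside $G_i$ rooted at $v_i$ use at least $|V(G_i)|$ edge traversals, with equality exactly for a single Hamiltonian cycle). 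What your version buys is a more elementary and tighter equality analysis, and a composed graph whose maximum degree stays within $\Delta(G_i)+2$ rather than jumping to $2k$; what the paper's version buys is essentially nothing you lose, since both satisfy the parameter bound $\poly(\max_i|x_i|)$ required by cross-composition. The one step you flag as needing care — that exactly $n$ internal traversals force a single spanning cycle — is handled correctly by your observation that the $k$ segments share $v_i$, giving $\sum_j \ell_j \ge n+k-1$ and hence $k=1$ at equality.
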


It follows that in order to obtain positive results we should investigate parameters within the class of bounded treewidth.
Note that, roughly speaking, fractioning number is a \emph{modulator to parameter-sized components}.
A~modulator is a set of vertices $M$ such that when we remove $M$ from a graph $G$ the components of $G \setminus M$ fall into a specific (preferably simple) graph class (see \Cref{sec:prelim} for formal definitions).
Since almost all graph classes are closed under disjoint union, it is also popular to speak of addition of $k$ vertices to a graph class (i.e., for a graph class $\mathcal{G}$ we write $\mathcal{G} + kv$ if we are ``adding'' $k$ vertices).
Thus, it naturally models the situation when the input is close to a class of graphs where we can solve the problem efficiently (also called ``Distance to $\mathcal{G}$'').
Modulator size is a popular structural parameter (see, e.g.,~\cite{Cai03,LokshtanovRSSZ18,MajumdarRS18}).
Thus, the above result motivates us to study modulator-based parameters and their effect on kernelization of \TSPshort and its variants.
Indeed, many problems admit much more efficient algorithms w.r.t. vertex cover than for treewidth parameterization; see, e.g., \cite{JansenK13,FellowsFLRSST11,FialaGK11}.
Therefore, this is exactly the point where the positive part of our journey begins.

Before we do so, we discuss one more negative result.
For \sTSPshort we can, under a standard complexity-theoretic assumption, exclude the existence of a polynomial (Turing) kernel for the parameter size of a modulator to graphs of treewidth two. %
More specifically, we show the following.

\begin{restatable}{theorem}{thmsTSPWKcomplete}
	\label{thm:sTSP:WKcomplete}%
	Unweighted \sTSPshort with respect to the minimum $k'$ such that there is a size $k'$ modulator to cycles of size at most $k'$ and there are at most $k'$ non-terminals is \WKh.
\end{restatable}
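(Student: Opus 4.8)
The plan is to prove that the stated parameterized version of \sTSPshort is \WKh by exhibiting a \emph{polynomial parameter transformation} (PPT) from a problem already known to be \WKh---that is, a polynomial-time many-one reduction under which the parameter of the produced instance is bounded by a polynomial in the parameter of the source instance. (The problem itself is fixed-parameter tractable, so the absence of polynomial and polynomial Turing kernels is the meaningful question to settle.) As the source I would take a \W[1]-complete problem equipped with a compact, logarithmic-per-element witness encoding; the prototypical such problem, which is \WKh and lies in \FPT because it is solvable in time $n^{\O(k)}$, is \textsc{Clique} parameterized by $k\log n$.

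Given a \textsc{Clique} instance $(H,k)$ with $|V(H)|=n$, set $b=\lceil\log n\rceil$. The constructed \sTSPshort instance has a modulator $M$ of $\O(kb)$ vertices: for each of the $k$ intended clique vertices one block of $\Theta(b)$ \emph{register} vertices encoding its name in binary, plus a constant number of \emph{hub} vertices. Off $M$ I would attach $\poly(n)$ pairwise disjoint cycles, each of length $\O(b)$, of two kinds: for every block $i$ and every bit pattern that does \emph{not} name a vertex of $H$, a \emph{validity cycle}, and for every pair of blocks $\{i,j\}$ and every non-edge $\{u,v\}$ of $H$, an \emph{adjacency cycle}. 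Each cycle is threaded through the register vertices relevant to its constraint and all but a few of its vertices are declared waypoints, so the solution walk is forced to traverse every cycle; the gadget is engineered so that a traversal of some fixed \emph{nominal} length is available precisely when the register assignment satisfies the associated constraint (block $i$ avoids the forbidden pattern; blocks $i$ and $j$ do not jointly spell a non-edge), while violating the constraint forces an extra unit of weight. All cycles are attached to the hubs so that the cost of moving from one cycle to the next does not depend on the order of visits; the budget $\budget$ is then the sum of all nominal costs plus this fixed travel cost. Since $|M|$, the maximum cycle length, and the number of non-terminals are each $\O(kb)=\O(k\log n)$, the output parameter $k'$ is polynomially (in fact linearly) bounded in the source parameter, as a PPT requires.

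For correctness, the forward direction is routine: a $k$-clique of $H$ prescribes the register values spelling its vertices, every validity and adjacency constraint then holds, every cycle is traversed at nominal cost, and $\budget$ is met. The reverse direction carries the content. Since every cycle vertex is a waypoint and the nominal cost of each cycle is already forced, a walk of weight at most $\budget$ has no slack anywhere, so it must realise each cycle at exactly its nominal cost; this pins down one bit value in each register, hence one vertex per block, and the satisfied adjacency cycles witness that the $k$ selected vertices are pairwise adjacent in $H$. The step I expect to be the main obstacle is the design of the cycle gadgets themselves, which must be simultaneously (i) of length polynomial in $k\log n$, (ii) \emph{faithful}, in that the only way a forced traversal can economise is by committing a particular register bit, so that every inconsistent assignment is penalised, and (iii) \emph{non-interfering}, so that the global weight accounting over all $\poly(n)$ gadgets is tight and one unit of overspending is provably unavoidable as soon as any single constraint fails. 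Making a single cycle test a condition on register bits stored elsewhere in $M$ (rather than only on its own points of attachment), and doing so while the at most $k'$ permitted non-terminals are spent just generously enough to let the walk choose the register values but not so generously as to open up weight-saving shortcuts, is the delicate core of the argument.
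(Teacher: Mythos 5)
Your high-level strategy coincides with the paper's: a polynomial parameter transformation from a clique problem parameterized by $k\log n$ (the paper uses \textsc{Multicolored Clique}, known to be \WKc), with a modulator of $\O(k\log n)$ non-terminals encoding the selected vertices in binary and one small cycle per non-edge enforcing the adjacency constraints. However, the proposal has a genuine gap, and you flag it yourself: the ``delicate core''---designing selection and constraint gadgets that are faithful and non-interfering under a tight budget---is exactly the content of the proof, and you do not construct them. Without these gadgets there is no argument that a cheap traversal of a constraint cycle is available \emph{precisely} when the encoded vertices are adjacent, nor that the global edge count is tight enough that a single violated constraint exceeds $\budget$. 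Two further points would need repair even at the outline level: you invoke ``an extra unit of weight,'' but the theorem concerns \emph{unweighted} \sTSPshort, so all penalties must be realized as extra forced edge traversals; and attaching every cycle to common hubs (so that reaching a cycle ``does not depend on the order of visits'') undercuts the mechanism you need, since the only leverage an unweighted instance has is that a cycle reachable solely through an \emph{unselected} port costs strictly more edges to absorb into the closed walk.

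For comparison, the paper resolves these issues as follows. It pads each color class to exactly $N=2^{\lceil\log n\rceil}$ vertices, so every bit pattern names a vertex and your ``validity cycles'' are unnecessary. The selection gadget is a cyclic chain of $k\log N$ \emph{cherries}, each with one terminal and two non-terminal ports; a counting argument shows any solution within budget traverses exactly one port per cherry, which is the binary choice. Each non-edge $\{v_a^i,v_{a'}^{i'}\}$ gets a cycle of $2\log N$ \emph{triplets} (all terminals), each triplet joined by two edges to the port $x^{\neg\operatorname{bit}(\cdot,j)}$, i.e.\ to the \emph{complement} of the corresponding bit; a cycle-gadget can be swallowed by the walk with only $3\cdot 2\log N+1$ edge traversals iff at least one of its attachment ports lies on the selection walk, which fails exactly when both endpoints of the non-edge are selected. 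The budget $2k\log N+(3\cdot 2\log N+1)\cdot(\#\text{non-edges})$ then has zero slack, completing the reverse direction. These are precisely the constructions your proposal defers.
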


Therefore, \sTSPshort parameterized by the size of a modulator to (disjoint) cycles does not admit a polynomial (Turing) kernel, unless all problems in \WK admit a polynomial Turing kernel.
Moreover, it does not admit (classical) polynomial kernel, unless the polynomial hierarchy collapses.
For more details on \WKc{}ness, see \Cref{ssec:prelim:turingKernelization} and~\cite{HermelinKSWW15}.

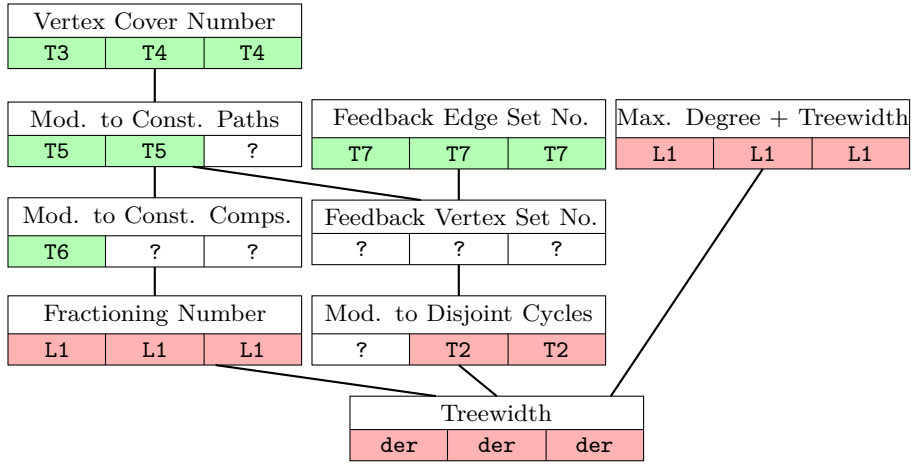
\begin{figure}[bt]
	\centering
	\begin{tikzpicture}

		\tikzstyle{polyk} = [fill=green!30]
		\tikzstyle{npk} = [fill=red!30]
		\tikzstyle{unknown} = []
		\def\ystep{-1.3}

		\resultbox<Vertex Cover Number>(VC)(-3,1*\ystep)%
		(polyk;T\ref{thm:TSPVertexCoverKernel})%
		(polyk;T\ref{thm:WRP:VCkernel})%
		(polyk;T\ref{thm:WRP:VCkernel});

		\resultbox<Mod. to Const. Paths>(MCP)(-3,2*\ystep)%
		(polyk;T\ref{thm:WRP:modulatorToConstPathsKernel})%
		(polyk;T\ref{thm:WRP:modulatorToConstPathsKernel})%
		(unknown;?);

		\resultbox<Feedback Edge Set No.>(FES)(1,2*\ystep)%
		(polyk;T\ref{thm:WRP:FESkernel})%
		(polyk;T\ref{thm:WRP:FESkernel})%
		(polyk;T\ref{thm:WRP:FESkernel});

		\resultbox<Mod. to Const. Comps.>(MCC)(-3,3*\ystep)%
		(polyk;T\ref{thm:WRP:modulatorToConstComponentsKernel})%
		(unknown;?)%
		(unknown;?);

		\resultbox<Feedback Vertex Set No.>(FVS)(1,3*\ystep)%
		(unknown;?)%
		(unknown;?)%
		(unknown;?);

		\resultbox<Max. Degree + Treewidth>(DTW)(5,2*\ystep)%
		(npk;L\ref{lem:TSP:noPKwrtFN})(npk;L\ref{lem:TSP:noPKwrtFN})(npk;L\ref{lem:TSP:noPKwrtFN});

		\resultbox<Fractioning Number>(FN)(-3,4*\ystep)%
		(npk;L\ref{lem:TSP:noPKwrtFN})(npk;L\ref{lem:TSP:noPKwrtFN})(npk;L\ref{lem:TSP:noPKwrtFN});

		\resultbox<Mod. to Disjoint Cycles>(MDC)(1,4*\ystep)%
 		(unknown;?)(npk;T\ref{thm:sTSP:WKcomplete})(npk;T\ref{thm:sTSP:WKcomplete});
		(unknown;?)(npk;)(npk;);

		\resultbox<Treewidth>(TW)(1.5,5*\ystep)(npk;der)(npk;der)(npk;der);

		\draw[thick] ($( VC_bot)+(0.0,0)$) -- ($(MCP_top)+(0.0,0)$);
		\draw[thick] ($(MCP_bot)-(0.0,0)$) -- ($(MCC_top)+(0.0,0)$);
		\draw[thick] ($(MCP_bot)+(0.5,0)$) -- ($(FVS_top)-(0.5,0)$);
		\draw[thick] ($(FES_bot)-(0.0,0)$) -- ($(FVS_top)+(0.0,0)$);
		\draw[thick] ($(MCC_bot)+(0.0,0)$) -- ($( FN_top)+(0.0,0)$);
		\draw[thick] ($(FVS_bot)+(0.0,0)$) -- ($(MDC_top)+(0.0,0)$);
		\draw[thick] ($( FN_bot)+(0.8,0)$) -- ($( TW_top)-(0.8,0)$);
		\draw[thick] ($(MDC_bot)+(0.0,0)$) -- ($( TW_top)+(0.0,0)$);
		\draw[thick] ($(DTW_bot)+(0.0,0)$) -- ($( TW_top)+(1.5,0)$);
	\end{tikzpicture}
	\caption{
		Overview of our results.
		Each node names a graph parameter and at its bottom it contains boxes representing (from left to right) \TSPshort, \sTSPshort, and \WRPshort.
		If a node $A$ is connected to a lower node $B$ with an edge, then $B(G) = \O(A(G))$ for respective parameters.
		\textit{Modulator}, \textit{Constant}, and \textit{Components} are shortened.
		Each box contains either a reference to the theorem which proves existence or non-existence of polynomial kernel (e.g. {\tt T2}), or it contains {\tt ?} if the problem is open, or it contains {\tt der} if the result for such setting is derived from another result.
	}
	\label{fig:results_overview}
\end{figure}

\subparagraph{Our Contribution.}
On a positive note, we begin with a polynomial kernel for \TSPshort and \WRPshort w.r.t.\ vertex cover number to demonstrate the general techniques we employ in this paper.
We study the properties of a nice solution to the instance with respect to a particular structural parameter and its interaction with the modulator.
We begin the study with \TSPshort as a warm-up.
In this case, we show that if there are ``many'' vertices in $G \setminus M$, where $M$ is the modulator (vertex cover), then all but few of them behave in a ``canonical way''.
Based on the properties and cost of this canonical traversal, we identify which of these vertices can be safely discarded from the instance. In particular, we show the following.

\begin{restatable}{theorem}{thmTSPVCkernel}\label{thm:TSPVertexCoverKernel}%
	\TSP admits a kernel with $\O(k^3)$ vertices, $\O(k^4)$ edges, and with a total bit-size $\O(k^{16})$, where $k$ is the vertex cover number of $G$.
\end{restatable}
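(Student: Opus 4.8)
The plan is to fix a vertex cover $M$ of $G$ with $|M| = \O(k)$ (a $2$-approximate one, found in polynomial time, suffices) and to study an optimal solution via the standard reformulation: a closed walk that traverses every vertex at least once is exactly a connected spanning sub-multigraph $H$ of $G$ in which every vertex has positive even degree, and \TSPshort asks for such an $H$ of minimum weight. Let $I = V(G) \setminus M$ be the independent set of non-cover vertices. Every edge of $H$ incident to a vertex $v \in I$ runs to $M$, so $v$ can be traversed only in two basic ways: as a \emph{pendant}, via two copies of a single edge $vm$ (cheapest, for cost $2a(v)$, when $m$ is the neighbour $m_1(v)$ realising $a(v) := \min_{u \in N(v)} \wFn(vu)$), or as a \emph{transit}, i.e.\ a degree-two vertex joining two distinct neighbours $p,q \in M$ at cost $\wFn(vp)+\wFn(vq)$. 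The first milestone is a sequence of local exchange arguments proving that some optimal $H$ is \emph{nice}, meaning that all but $\O(|M|)$ vertices of $I$ are \emph{canonical}, i.e.\ pendants attached to $m_1(v)$. The moves involved replace a pendant bubble at a non-cheapest neighbour, a parallel pair of transits between the same $\{p,q\}$, or a transit where a pendant would do, by the canonical pendant; the $\O(|M|)$ bound comes from the fact that the non-canonical $I$-vertices, together with the edges of $G[M]$, need only form a connected Eulerian multigraph on $M$, which can be taken with $\O(|M|)$ edges.

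This structural picture is turned into a single reduction rule. View a nice optimum as a connected Eulerian multigraph on $M$ assembled from the real edges of $G[M]$ (weight $\wFn(pq)$) and, for each transit vertex $v$ between $p$ and $q$, a \emph{virtual edge} $pq$ of weight $\wFn(vp)+\wFn(vq)-2a(v) \ge 0$, plus the additive constant $2\sum_{v \in I} a(v)$, corrected by $-2a(v)$ for the few non-canonical vertices. For every pair $\{p,q\} \subseteq M$ we retain the $c\cdot|M|$ vertices of $I$ of smallest virtual weight between $p$ and $q$ (whenever that many exist), where $c$ exceeds the constant bounding the number of non-canonical $I$-vertices in a nice optimum; and we \emph{delete every other vertex $v$ of $I$ while decreasing the budget by $2a(v)$}; if along the way the budget turns negative, or some vertex of $I$ has no neighbour, or $G$ is disconnected, we output a trivial \NOi. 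Soundness in the forward direction uses the exchanges once more: any nice optimum can be massaged, by swapping each deleted transit for a retained candidate of the same pair (one is free, since fewer than $c\cdot|M|$ candidates of a pair can be in use at once), into a nice optimum in which every deleted vertex is a canonical pendant; removing those vertices together with their $2a(v)$ contributions leaves a feasible solution of the reduced instance of the corresponding weight. Conversely, re-inserting the deleted vertices as pendants at $m_1(v)$ sends a feasible solution of the reduced instance back to one of the original, larger in weight by exactly the subtracted amount. The reduced instance has $|M| + \binom{|M|}{2}\cdot c\,|M| = \O(k^3)$ vertices, and, since each retained $I$-vertex keeps at most $|M|$ edges, $\O(k^4)$ edges.

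It remains to bound the encoding length: the $\O(k^4)$ edge weights and the budget can still be arbitrarily large. Here I would invoke the Frank--Tardos theorem to replace this vector of $\O(k^4)$ rationals by integers of encoding length polynomial in the number of entries, while preserving the sign of $\wFn(H)-\budget$ for every candidate solution $H$; one uses once more that a minimum solution may be assumed to have every edge-multiplicity at most two, so the relevant solution vectors have $\ell_1$-norm $\O(|E|) = \O(k^4)$. This yields weights of bit-size $\O((k^4)^3) = \O(k^{12})$ each, hence $\O(k^{16})$ bits overall, which together with the vertex and edge counts above gives the claimed kernel.

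The part I expect to be the real work is the first two steps: fixing the precise notion of a \emph{nice} solution and verifying that each exchange move---simplifying a pendant bubble, collapsing a pair of parallel transits, trading a transit for a pendant, and swapping a to-be-deleted transit for a retained candidate---simultaneously preserves connectedness, the parity of the degree of every vertex (in particular every vertex of $M$), and positivity of all degrees, while never increasing the total weight. The parity bookkeeping at the affected vertices of $M$ is exactly what dictates how many candidate $I$-vertices one must keep per pair, and hence the $\O(k^3)$ vertex bound; one must also check that $I$-vertices of degree four or more can be brought into this framework and that the exchange moves can be sequenced so that, at the end, every deleted vertex is simultaneously a canonical pendant.
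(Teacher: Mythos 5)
Your proposal follows essentially the same route as the paper: classify the independent-set vertices by their two-edge attachment to the cover (your pendant/transit dichotomy is the paper's ``behavior'' with its one- or two-element ``impact''), use cycle-removal exchange arguments to show some optimal Eulerian sub-multigraph has only $\O(k)$ non-canonical vertices, keep the $\O(k)$ cheapest candidates per impact and delete the rest while subtracting $2a(v)$ from the budget, prove safeness by swapping each deleted vertex's deviant usage onto a retained candidate of no larger price, and finish with Frank--Tardos on the $\O(k^4)$ weights. The only cosmetic difference is that you retain candidates only per pair $\{p,q\}$ and dispose of non-cheapest pendants directly in the normalization step, whereas the paper also reserves candidates for singleton impacts; both variants work, and your size and bit-length bounds match the paper's.
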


In the case of \sTSPshort or \WRPshort, we cannot be sure that a solution visits all vertices in the modulator.
Quite naturally, this results in the second kind of rule where we see that if there are ``many'' vertices attached to the modulator in the same way, we can mark some vertices in the modulator as terminals (there is a solution visiting them).

\begin{restatable}{theorem}{thmWRPVCkernel}
	\label{thm:WRP:VCkernel}%
  \WRP admits a kernel with $\O(k^8)$ vertices, $\O(k^9)$ edges, and with a total bit-size $\O(k^{36})$, where $k$ is the vertex cover number of $G$.
\end{restatable}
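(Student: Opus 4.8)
The plan is to follow the same blueprint as the proof of \Cref{thm:TSPVertexCoverKernel}, adapted to the facts that a solution need not visit every vertex of the modulator and that edges carry capacities. Let $M$ be a vertex cover of $G$; computing a $2$\nobreakdash-approximation we may assume $|M| = \O(k)$, and $I := V(G) \setminus M$ is an independent set, so every vertex of $I$ sees only vertices of $M$. As in the \TSPshort case, the starting point is a structural lemma asserting the existence of a \emph{canonical} solution: a minimum\nobreakdash-weight feasible closed walk in which every edge is traversed at most twice (respecting, of course, the capacities), and in which every vertex $v \in I$ is entered and left a bounded ($\O(1)$) number of times, each time through a pair of neighbors of $v$ in $M$; moreover a non\nobreakdash-waypoint vertex of $I$ is used only to provide connectivity between two modulator vertices. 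Hence, from the solution's point of view, the only relevant data attached to $v \in I$ is, for every pair $\{u,u'\} \subseteq N(v)$ (and every singleton), the minimum cost $\wFn(uv)+\wFn(vu')$ of routing ``through'' $v$ between $u$ and $u'$ together with the corresponding effective capacity $\min\{\cFn(uv),\cFn(vu')\}$; this is the \emph{type} of $v$.

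With this in hand we design the two kinds of reduction rules foreshadowed in the introduction. The first kind performs \textbf{compression and terminal-marking}: if a modulator vertex $u$ carries many pendant waypoints of $I$, then every solution visits $u$, so we mark $u$ as a waypoint and replace all these pendants by a single pendant waypoint of the appropriate summed weight; more generally, whenever a group of same\nobreakdash-type vertices of $I$ is large and contains enough waypoints, one shows that some solution visits a particular modulator vertex $u$ adjacent to the whole group, $u$ is marked as a waypoint, and the group is replaced by a bounded\nobreakdash-size gadget realizing the same family of possible visit\nobreakdash-costs. The second kind is a \textbf{redundancy} rule: for every pair $\{u,u'\} \subseteq M$ it suffices to retain the $\poly(k)$ cheapest vertices of $I$ capable of serving the $u$--$u'$ connection — those that a canonical solution could conceivably use simultaneously, whose number is $\poly(k)$ because a connected visited set of $\le|M|$ modulator vertices needs $\le|M|-1$ bridges and, after the compression above, the waypoint contribution is again $\poly(k)$ — so any vertex of $I$ that is dominated, for \emph{every} pair among its neighbors, by $\poly(k)$ retained vertices of no larger cost and no smaller effective capacity can be deleted. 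A final clean\nobreakdash-up reduces parallel edges and edge multiplicities inside $G[M]$ and leaves at most $\O(k)$ edges incident to each surviving vertex of $I$.

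A routine but careful count then bounds the reduced instance: there are $\O(k^2)$ pairs inside $M$, for each of them we retain $\poly(k)$ vertices of $I$, and we add $\O(1)$ gadget vertices per marked structure, which altogether gives $\O(k^8)$ vertices; since each has $\O(k)$ incident edges (plus the $\O(k^2)$ edges inside $M$), the edge count is $\O(k^9)$. It remains to shrink the numbers. On the reduced graph, feasibility and the value of a solution are governed by a system of linear (in)equalities — the budget constraint $\sum_e x_e\wFn(e) \le \budget$ with $x_e \in \{0,1,2\}$ and the capacity constraints $x_e \le \cFn(e)$ — all of whose coefficients are bounded by a constant and whose total number of terms is $\O(k^9)$; applying the weight\nobreakdash-compression technique of Frank and Tardos to the vector formed by all edge weights, all capacities and the budget ($N=\O(k^9)$ numbers) yields an equivalent instance in which each number has bit\nobreakdash-length cubic in $N$, i.e.\ $\O(k^{27})$, hence total bit\nobreakdash-size $\O(k^{36})$. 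Correctness of the whole procedure — that the produced instance is a \YESi iff the original one is — is then checked rule by rule against the canonical\nobreakdash-solution lemma.

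I expect two main obstacles. The first is establishing the canonical\nobreakdash-solution lemma in the presence of edge capacities: the usual short\nobreakdash-cutting and Eulerian rerouting arguments that cap edge multiplicities and linearize the behavior at low\nobreakdash-degree vertices must be redone so as never to violate a capacity, which is the technically heaviest part and the reason the exponents grow relative to the \TSPshort kernel. The second is proving that the redundancy rule is \emph{safe}: one must show that a deleted vertex of $I$ is never indispensable — for every job it could perform, enough cheaper\nobreakdash-or\nobreakdash-equal, higher\nobreakdash-or\nobreakdash-equal\nobreakdash-capacity substitutes survive — and that the terminal\nobreakdash-marking steps only mark vertices that some optimal solution genuinely visits; both call for a careful exchange argument on a canonical solution.
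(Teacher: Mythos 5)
Your overall blueprint matches the paper's: classify the independent-set vertices by how they can attach to the modulator, keep polynomially many cheapest representatives per connectivity role, force modulator vertices into $\WP$ when many same-type waypoints hang off them, and finish with Frank--Tardos. However, there are two concrete gaps. First, your ``type'' and your redundancy rule are keyed only on single pairs $\{u,u'\}\subseteq N(v)$ (two edge occurrences), but a nice solution may visit a vertex $r\in R$ twice, using up to \emph{four} edge occurrences into $M$; the paper's behaviors explicitly allow four edges, and its impacts record which (up to four) modulator vertices are touched and with what degrees, giving $|\allImp|=\O(k^4)$ classes. Retaining the cheapest vertices per individual pair does not guarantee a single surviving substitute that can simultaneously realize a four-edge role at no larger cost, so your exchange argument for the redundancy rule does not go through as stated. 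This is also why your vertex count is unconvincing: $\O(k^2)$ pairs times $\poly(k)$ representatives does not yield $\O(k^8)$ without further justification, whereas the paper's bound comes from $|\allImp_2|\cdot|\allImp|\cdot(2|\allImp_2|+k)=\O(k^2)\cdot\O(k^4)\cdot\O(k^2)$.

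Second, you do not address parity when deleting vertices and decreasing the budget by the cost of their natural attachment. In \WRPshort a natural behavior can consist of two \emph{distinct} edges $\{r,u\},\{r,u'\}$ (and is empty for a non-terminal), so re-inserting a single deleted vertex in its natural behavior changes the degrees of $u$ and $u'$ by one each and destroys the Eulerian property of the lifted multigraph. The paper handles this by additionally marking one or two vertices per impact so that an \emph{even} number of vertices with each natural-behavior impact is removed, plus at least one survivor of that impact to preserve connectivity. Your gadget-replacement alternative might sidestep this, but it is left entirely unspecified, and the paper deliberately avoids gadgets. Finally, the safeness of marking a modulator vertex as a terminal requires the three-vertex ``blending'' exchange (reroute two same-impact vertices through the unvisited modulator vertex, using a third for connectivity); you correctly flag this as an obstacle but do not supply the argument.
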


Using similar, yet more involved results, we are able to find a polynomial kernel for \sTSPshort for a modulator to constant-sized paths and for \TSPshort for a modulator to constant-sized components.

One ingredient for our approach is that we may assume that all vertices outside the modulator are terminals, and therefore every solution visits them.
It is not clear how to use our approach in the presence of capacities already in the case of modulator to paths, since the previous observation does not hold already in this simple case.
Indeed, we show this by contracting some of the edges, which in the presence of capacities might have a side effect on the set of possible traversals of the component.
The other and more important ingredient is the so-called blending of solutions (see \Cref{lem:path_mid_behavior}).

\begin{restatable}{theorem}{thmWRPmodulatorToConstPathsKernel}\label{thm:WRP:modulatorToConstPathsKernel}%
	Let $r$ be a fixed constant.
	\sTSP admits a kernel of size $k^{\O(r)}$, where $k$ is the size of a modulator to paths of size at most~$r$.
\end{restatable}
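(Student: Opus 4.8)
The plan is to transform, in polynomial time, an instance $(G,\wFn,\WP,\budget)$ together with the given size-$k$ modulator $M$ (so each component of $G-M$ is a path on at most $r$ vertices) into an equivalent \sTSP instance of total size $k^{\O(r)}$, keeping a modulator of the same kind. The first step is normalisation: while some $v\in V(G)\setminus M$ is not a waypoint, short-cut it -- delete $v$ and, for every pair $u,u'$ of its neighbours, add the edge $uu'$ of weight $\wFn(uv)+\wFn(vu')$, keeping only the cheapest of parallel edges. Since $v$ lies on a path of $G-M$ it has at most two neighbours in $V(G)\setminus M$, so every new edge either replaces a path edge of that component (shortening it) or is incident to $M$; hence $M$ remains a modulator to paths on at most $r$ vertices and $k$ does not grow. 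In \sTSP there are no capacities, so routing solutions through, respectively around, $v$ shows the rule preserves the answer -- this is precisely where the restriction to \sTSP is used, as with capacities the set of admissible traversals of a component would change. After exhausting the rule, $\WP\supseteq V(G)\setminus M$, so every feasible solution enters every path component.

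Next I would analyse a minimum-cost solution $S$ on a single path component $P$. Cutting $S$ at its visits to $M$, each maximal sub-walk that avoids $M$ stays inside one component and is an \emph{excursion}: it enters $P$ at some vertex $v_i$ along an edge from a vertex $m_1\in M$, walks back and forth inside the path, and leaves at some $v_j$ along an edge to $m_2\in M$ (possibly $m_1=m_2$, $v_i=v_j$); its internal part has one of only $2^{\O(r)}$ shapes, determined by $v_i$, $v_j$, and the sub-interval of $P$ it covers. Contracting every component to a single vertex turns $S$ into a minimum connected even spanning sub-multigraph of the quotient; it has $\O(k+q)$ edges, where $q$ is the number of components, and since each component-vertex must be covered, all but $\O(k)$ of the components are entered by exactly one excursion of $S$. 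Finally, the blending lemma (\Cref{lem:path_mid_behavior}) says that the behaviour of a solution in the interior of a path component is rigid enough that one may excise all of its excursions into a single component and splice in any other admissible family of excursions into that component, obtaining a feasible solution whose cost changes only by the difference of the two local traversal costs.

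With this in hand I would classify components by a bounded ``signature'' encoding how the component attaches to $M$ together with the costs of its admissible single-excursion and self-returning traversals; combined with a suitable re-encoding of the edge weights and of $\budget$, one arranges that the components fall into only $k^{\O(r)}$ classes. The reduction rule then keeps $k^{\O(r)}$ representatives of each class, discards the remaining components, and adjusts the budget to pay for traversing the discarded ones. The ``add back'' direction is immediate -- splice into a solution of the reduced instance an optimal self-returning traversal of each discarded component, off a modulator vertex adjacent to it. For the converse, take a minimum solution of the original instance, use the quotient argument to assume all but $\O(k)$ components are entered once, and use blending to push the traversal roles of the discarded components onto retained representatives of their classes (possible since each overfull class retains $k^{\O(r)}\gg k$ representatives); the resulting solution of the reduced instance then meets the adjusted budget. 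Once the rule is exhausted, $G$ has $k^{\O(r)}$ vertices and edges, and a final weight-reduction step gives a kernel of total size $k^{\O(r)}$.

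The main obstacle is the reduction step, in particular its correctness in the converse direction together with the design of the signature: it must be coarse enough to leave only $k^{\O(r)}$ classes -- which forces one to tame both the unbounded edge weights and the a priori unbounded number of modulator vertices a single component can be adjacent to (the latter being exactly why the single-excursion normalisation from the quotient argument is needed) -- yet fine enough that discarding surplus components and merely adjusting the budget is reversible. Making the blending lemma bridge these two requirements, by transferring a discarded component's traversal onto a retained one while the budget released pays for it, is the delicate part of the argument.
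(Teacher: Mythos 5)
Your overall architecture (short-circuit non-waypoints outside $M$ so that $V(G)\setminus M\subseteq\WP$, decompose a nice optimal solution into excursions/segments, argue all but $\O(k)$ components are traversed ``canonically'', keep few representatives per attachment type and discard the rest while adjusting $\budget$) matches the paper's. But there is a genuine gap: you never address \emph{non-terminal vertices inside the modulator}, which is the central difficulty of this case and the reason the paper's \Cref{lem:path_mid_behavior} has the specific form it does. After your normalisation, vertices of $M\setminus\WP$ remain, and a solution of the reduced instance need not visit them. Your ``add back'' direction --- splice an optimal self-returning traversal of each discarded component ``off a modulator vertex adjacent to it'' --- silently assumes that this modulator vertex is visited by the reduced solution; if the cheapest traversal of a discarded component attaches only to some $m\in M\setminus\WP$ that the reduced solution skips, the spliced-in piece is disconnected and the argument fails. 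The paper resolves this with an extra rule: if many components have natural behaviors touching a non-terminal $m$, it is safe to add $m$ to $\WP$ (otherwise those few components are kept), and proving this safeness is exactly what the blending lemma is for --- it constructs, from the actual behavior $A$ and the natural behavior of a component, a behavior that \emph{touches the unvisited vertex $v$} while remaining connected to the visited part and costing at most $\wFn(A)$. Your paraphrase of the blending lemma as a generic ``excise and splice'' exchange misses this point entirely.

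Two smaller issues. First, parity: the natural behavior of a path component may enter at one modulator vertex and leave at another, so removing such components and re-inserting them changes modulator degrees by odd amounts; the paper ensures an \emph{even} number of components per natural-behavior impact is removed (the green marking). If you instead adjust $\budget$ by the cost of a closed self-returning traversal, that cost need not equal $\wFn(\natBeh(C))$, and the converse direction's accounting breaks. Second, your signature ``encoding the costs of admissible traversals'' cannot yield $k^{\O(r)}$ classes, since costs are unbounded integers and Frank--Tardos only compresses bit-lengths of a fixed small instance, it does not collapse cost profiles; the paper classifies only on combinatorial impacts and, within each pair of impacts, keeps the components of \emph{least} price of change, which is the mechanism your exchange argument actually needs.
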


When we allow general constant-sized components outside the modulator, we have not yet succeeded in dealing with the highly connected non-terminals in the modulator.
At this point, it is not clear whether many components with the same combination of canonical transversal and attachment to the modulator set ensure that we can safely mark a vertex in the modulator as a terminal.

\begin{restatable}{theorem}{thmWRPmodulatorToConstComponentsKernel}\label{thm:WRP:modulatorToConstComponentsKernel}%
	Let $r$ be a fixed constant.
	\TSP admits a kernel of size $k^{\O(r)}$, where $k$ is the size of a modulator to components of size at most~$r$.
\end{restatable}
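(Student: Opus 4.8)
The plan is to bring the number of components of $G \setminus M$ down to $k^{\O(r)}$ by a data-reduction rule that deletes ``redundant'' components and pays for them in the budget; after that the instance has $k^{\O(r)}$ vertices and edges, and a final weight-compression step bounds its bit-size. Let $M$ be the given modulator, $|M| = k$, and recall that every vertex of $G$ is a terminal of \TSPshort. As preprocessing, if $G$ is disconnected we output a trivial \NOi, we contract zero-weight edges, and we assume $k \ge 1$ (otherwise $|V(G)| \le r$ already). A standard exchange argument shows there is an optimal solution---a connected, spanning, Eulerian sub-multigraph of $G$---that uses each edge at most twice: deleting two copies of a thrice-used edge preserves all degree parities and, since a copy remains, also connectivity and the covering property.

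Next I describe how one component $C$ (so $|V(C)| \le r$, and $C$ has an edge to $M$ since $G$ is connected) interacts with $M$ in such a solution. The part of the solution drawn inside $C \cup \partial C$ decomposes into connected pieces each touching $M$; a piece covering no new vertex of $C$ can be discarded, so there are at most $r$ pieces, and each is a union of walks with both endpoints in $M$. To keep the combinatorics of these pieces at $k^{\O(r)}$ one first applies a separate rule bounding the attachment of each component to $M$ to $\O(r)$ vertices via $\O(r)$ edges; a component whose attachment cannot be so reduced appears too rarely for the extra connectivity it might provide to matter, hence is harmless for the kernel size. With bounded attachment, each component uses $\O(r)$ boundary edges in an optimal solution, and I define the \emph{type} of $C$ to be the isomorphism type of the unweighted graph on $V(C)$ together with its (now $\O(r)$) neighbours in $M$, the latter labelled by their identity in $M$; there are $k^{\O(r)}$ types. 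For each component I also record (by brute force over $V(C)$, in $k^{\O(r)}$ time) its \emph{canonical cost} $c^{\circ}(C)$, the minimum weight of a closed walk that leaves $M$, visits all of $V(C)$, and returns to the same vertex of $M$ (it exists because $C$ is connected), together with the analogous cost of realising each ``bridging'' role it could play---there are only $\O(1)$ of those, as the attachment has size $\O(r)$.

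The reduction rule is: if more than $N := \poly(k, r)$ components share a type, keep a reserve of $\O(k)$ of them---chosen, using the bounded set of possible bridging roles, so that every bridging role can be filled by a kept component at least as cheaply as by any deleted one, and so that the kept components are among the cheapest---delete every other component of that type, and decrease $\budget$ by the sum of the canonical costs $c^{\circ}(\cdot)$ of the deleted components. The reverse direction of safeness is the easy one: given a solution of the reduced instance of weight at most the new budget, every vertex of $M$ is visited, so one splices each deleted component back in as a canonical closed detour of weight $c^{\circ}(\cdot)$ at the vertex of $M$ it hangs off, recovering a solution of the original instance within the original budget.

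The forward direction is the main obstacle. Starting from an optimal solution of the original instance, one must transform it, without increasing its weight, into one that (a) traverses all but $\O(k)$ of the same-type components exactly as canonical closed detours of weight $c^{\circ}(\cdot)$, and (b) never relies on a deleted component to stay connected. Part~(b) follows because $M$ has only $k$ vertices, so at most $k-1$ components can simultaneously act as ``bridges'', and same-type components are interchangeable in that role---this is precisely why the attachment-bounding rule and the inclusion of the attachment in the type are needed---so all bridging may be pushed into the reserve. Part~(a) is a blending/exchange argument in the spirit of \Cref{lem:path_mid_behavior}: by pigeonhole, many of the surviving same-type components are traversed identically in the solution, and two identical non-canonical traversals can be re-routed at a time so that one becomes a canonical closed detour while the ``in series'' connection it used to provide is absorbed by the other, at no extra cost. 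The delicate points, which I expect to be the real work of the proof, are making these re-routings globally consistent and getting the weight bookkeeping exactly right, so that the transformed solution costs precisely the original weight minus the removed canonical costs and hence fits the new budget. Once the rule is applied exhaustively there are $k^{\O(r)}$ types with at most $N = \poly(k,r)$ components each, hence $k^{\O(r)}$ components in total, each on at most $r$ vertices with $\O(r)$ edges to $M$; thus the kernel has $k^{\O(r)}$ vertices and edges, and a standard weight-compression step (e.g.\ via the Frank--Tardos theorem applied to this now small instance) bounds the bit-size of all weights and of $\budget$ by $k^{\O(r)}$, giving total size $k^{\O(r)}$.
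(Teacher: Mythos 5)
Your overall architecture (classify components, keep a few cheapest representatives per ``role'', delete the rest and pay their canonical cost from the budget, splice them back in the reverse direction) matches the paper's strategy, but two of your concrete steps would fail as stated. First, the ``separate rule bounding the attachment of each component to $M$ to $\O(r)$ vertices via $\O(r)$ edges'' does not exist and your justification for it is not valid: a component on $\le r$ vertices can have up to $rk$ edges into $M$, \emph{every} component can be like this, and you cannot discard attachment edges a priori because which vertex of $M$ a component connects to is exactly what determines its value as a bridge. The paper never bounds the attachment of the graph; it bounds the number of modulator-incident edges \emph{used by a behavior} (at most $2r$, which follows from niceness of the solution and the fact that each of the $\le r$ relevant segments contributes at most two such edges). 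Consequently the classification is not by isomorphism type of the attached graph (which would give $2^{\Theta(rk)}$ classes) but by the \emph{impact} of a behavior --- the $\le 2r$ touched modulator vertices plus their grouping into connected components and degree parities --- of which there are $k^{\O(r)}$, together with, for each pair of impacts, the \emph{price} of switching from the natural to the alternative impact; components are then interchangeable exactly when they can realize the same impacts at comparable prices.

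Second, your canonical cost $c^{\circ}(C)$ --- the cheapest \emph{single} closed walk leaving and returning to the \emph{same} vertex of $M$ --- can strictly exceed the cost of the true minimum behavior, which may consist of several pieces hanging off different modulator vertices. If the optimum covers $C$ by two cheap loops at $m_1$ and $m_2$, then after you delete $C$ and subtract the larger $c^{\circ}(C)$, the surviving part of the optimal solution no longer fits the reduced budget, so a \YESi can become a \NOi. The paper's natural behavior is the minimum over all multi-piece behaviors, and the resulting parity complications when splicing them back (a multi-piece behavior can change the degree of a modulator vertex by an odd amount) are handled by always removing an \emph{even} number of components per impact and keeping one or two ``green'' witnesses. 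Finally, you explicitly defer the exchange argument showing that all but $k^{\O(r)}$ components can be forced into their natural behavior without cost increase; that argument (the paper's bound of $2|\allImp|^2+2|M|$ exceptional components, proved by extracting a removable cycle from an auxiliary multigraph on $M$ when too many segments or too many same-impact deviations exist) is the technical core of the proof and cannot be omitted.
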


We conclude our algorithmic study by providing a rather simple polynomial kernel for \WRPshort parametrized by the feedback edge set number.
On the one hand, this result is an application of rather straightforward local reduction rules.
On the other hand, these rules are heavily based on the use of edge capacities.
It should be pointed out that we get a polynomial kernel for \TSPshort as it has a polynomial compression to \WRPshort, however, it can be shown that ``local reduction rules'' do not exist in the case of \TSPshort.

\begin{restatable}{theorem}{thmWRPFESkernel}\label{thm:WRP:FESkernel}%
	\WRP admits a kernel with $\O(k)$ vertices and edges and bit size $\O(k^4)$, where $k$ is the feedback edge set number of $G$.
\end{restatable}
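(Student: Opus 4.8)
The plan is to bring the instance down to minimum degree at least three --- which together with the feedback edge set number bounds the numbers of vertices and edges by $\O(k)$ --- and only then to compress the numbers. The first step is to bound the capacities. A solution of \WRPshort is exactly an Eulerian closed walk of a connected multigraph $F$ spanning $\WP$ in which every vertex has even degree and $|F \cap e| \le \cFn(e)$ for each edge $e$, of minimum total weight. If $|F \cap e| \ge 3$ for some $e = uv$, deleting two copies of $e$ leaves $F$ connected (a copy of $e$ survives), Eulerian, spanning $\WP$, and of no larger weight; iterating, \WRPshort always has an optimal solution using each edge at most twice, so we may replace $\cFn$ by $\min\{\cFn,2\}$ without changing the answer and henceforth assume all capacities lie in $\{1,2\}$.

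Next I would apply a battery of local rules exhaustively: delete self-loops; if the waypoints lie in more than one connected component answer \NO, otherwise delete all components without a waypoint; delete every non-waypoint of degree at most $1$; suppress every non-waypoint of degree $2$ (replacing the path $u\,v\,w$ by an edge $uw$ of weight $\wFn(uv)+\wFn(vw)$ and capacity $\min\{\cFn(uv),\cFn(vw)\}$, with the evident variant if $u=w$); and, for a degree-$1$ waypoint $v$ with neighbour $u$, note that $uv$ is traversed an even number of times, hence exactly twice in an optimal solution, so answer \NO if $\cFn(uv)<2$ and otherwise delete $v$, put $u$ into $\WP$, and decrease $\budget$ by $2\wFn(uv)$ (answering \NO on underflow). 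None of these raises the feedback edge set number.

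The main rule, and the one that truly relies on capacities, shortens \emph{waypoint chains}: maximal paths $v_0 v_1 \cdots v_m v_{m+1}$ whose internal vertices $v_1,\dots,v_m$ are waypoints of degree $2$ (after the previous rules every degree-$2$ vertex lies on such a chain). Since each chain edge has capacity at most $2$ and $v_1,\dots,v_m$ have degree $2$, a case analysis on the positions of the capacity-$1$ edges and on whether $v_0=v_{m+1}$ shows that, up to reordering and up to never increasing the weight, a solution's restriction to the chain is one of $\O(1)$ shapes --- traverse it once end-to-end, or bounce into it from $v_0$, from $v_{m+1}$, or from both ends --- so the whole effect of the chain on the rest of the instance is encoded by an \emph{interface}: for each of the $\O(1)$ feasible pairs $(a,b)$ of contributions to $\deg(v_0)$ and $\deg(v_{m+1})$, the minimum weight $g(a,b)\in\N\cup\{\infty\}$ of a submultigraph of the chain realising those degrees and covering $v_1,\dots,v_m$; this is computable in polynomial time. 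If the only finite entry is $g(1,1)=W$, the total chain weight, then every solution must traverse the whole chain exactly once, so we contract it to a single new waypoint vertex and decrease $\budget$ by $W$; otherwise we replace the chain by a gadget --- $v_0$, a constant number of new waypoints, $v_{m+1}$, and the edges of a path among them --- with weights and capacities (capacity-$1$ edges placed exactly to forbid the infeasible profiles) chosen so that its interface is again $g$. This is exactly where capacities are essential: without them the interface would be an unbounded function and no bounded gadget, indeed no local rule, could reproduce it, matching the stated non-existence of local kernelization for plain \TSPshort.

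When the chain rule is exhausted every vertex has degree at least $3$ (a component that is a single cycle being handled directly), whence $|V(G)|\le 2k$ and $|E(G)|=\O(k)$; the capacities, lying in $\{1,2\}$, and the incidences together cost $\O(k\log k)$ bits. Since every edge is used at most twice, whether the instance is a \YESi depends only on the sign of $\sum_e x_e\,\wFn(e)-\budget$ over multiplicity vectors $\vec{x}$ with $\|\vec{x}\|_1 \le 2\lvert E(G)\rvert = \O(k)$, so applying the Frank--Tardos theorem to the $\O(k)$-entry vector consisting of all edge weights together with $\budget$ replaces them by integers of bit-length $\O(k^3)$ preserving all these signs, for a total bit-size of $\O(k^4)$. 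I expect the single genuinely delicate step to be the correctness of the chain-shortening rule --- proving that the $\O(1)$-entry interface captures every way a solution can use a chain, and exhibiting for each realisable interface an explicit constant-size gadget realising it; the other rules are routine and the numeric compression is a black-box application of Frank--Tardos.
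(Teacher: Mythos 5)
Your proposal is correct and follows essentially the same route as the paper: remove degree-$\le 1$ vertices, suppress non-waypoint degree-2 chains, replace each maximal degree-2 waypoint chain by a constant-size gadget determined by a case analysis on its capacity-1 edges (your ``interface'' $g(a,b)$ is an abstraction of the paper's explicit three-case construction), count $\O(k)$ remaining vertices and edges via the feedback edge set bound, and finish with Frank--Tardos. The only quibble is that after the chain rule the gadgets still contain degree-2 waypoints, so ``every vertex has degree at least $3$'' is not literally true; the $\O(k)$ bound nevertheless holds because the number of chains is $\O(k)$ and each gadget has constant size, which is exactly how the paper counts.
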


For an overview of our results, please, refer to \Cref{fig:results_overview}.

\subparagraph{Organization of the paper.}
We summarize the notation and technical results we rely on in \Cref{sec:prelim}.
\Cref{sec:generalObservations} contains a few useful technical lemmas and simple reduction rules.
In \Cref{sec:vc}, we begin with the core concepts applied in the case of \TSPshort and vertex cover number. %
The similar approach is then applied in \Cref{sec:modulatorsPolyKernel} to more general types of modulators; this is the most technical part of this manuscript.
Finally, in \Cref{sec:conclusions} we conclude the results and discuss future research directions.

\section{Preliminaries}\label{sec:prelim}

We follow the basic notation of graph theory by Diestel~\cite{Diestel17}. In parameterized complexity theory, we follow the monograph of Cygan et al.~\cite{CyganFKLMPPS15}.

A \emph{walk} in a graph $G$ is a non-empty alternating sequence of vertices and edges $S=v_1,e_1,\dots,e_{\ell-1},v_{\ell}$ such that $v_i\in V(G)$, $e_i\in E(G)$, and $e_i=\{v_i,v_{i+1}\}$, $\forall i\in\{1, \ldots, \ell-1\}$.
It is \emph{closed} if $v_\ell=v_1$.
The weight of walk~$S$ is $\wFn(S)=\sum_{i=1}^{\ell-1} \wFn(e_i)$.
If all vertices in a walk are distinct, it is called a \emph{path}.

A \emph{solution} to our problems is a closed walk $S$ visiting every vertex in $\WP$ of total weight at most $\budget$ (traversing each edge $e$ at most $\cFn(e)$ times).
The least cost such walk is called \emph{an optimal solution}.
Given such a walk, we can construct the \emph{corresponding multigraph} $G_S$ which is a multigraph with vertex set being the set of vertices visited by $S$ and each edge occuring as many times as traversed by $S$.
We naturally extend $\wFn$ to this multigraph, which yeilds $\wFn(G_S)= \sum_{e \in E(G_S)} \wFn(e)=\wFn(S)$.
The degree of a vertex in a multigraph is the number of the edges incident with it.
Conversely, if a multigraph $G_S$ is Eulerian (connected with all degrees even), then it admits a walk visiting every vertex of the graph and traversing each edge exactly as many times as occuring in $G_S$.

\subparagraph{Structural Graph Parameters.}

Let $G=(V,E)$ be a graph.%
\begin{definition}
    A set of edges $F \subseteq E$ is a \emph{feedback edge set} of the graph $G$ if $G \setminus F$ is an acyclic graph.
    \emph{Feedback edge set number} $\fes(G)$ of a graph $G$ is the size of a smallest feedback edge set $F$ of $G$.
\end{definition}
\begin{definition}
    A set of vertices $C\subseteq V$ is called \emph{vertex cover} of the graph $G$ if it holds that $\forall e \in E$ we have $e \cap C \neq \emptyset$.
    The \emph{vertex cover number} $\vc(G)$ of a graph $G$ is the least size of a vertex cover of $G$.%
\end{definition}

\begin{definition}
    Let $\mathcal{G}$ be a graph family, let $G$ be a graph, and $M\subseteq V(G)$.
    We say that $M$ is a \emph{modulator} of the graph $G$ to the class $\mathcal{G}$ if each connected component of $G\setminus M$ is in $\mathcal{G}$.
	The \emph{distance of~$G$ to $\mathcal{G}$}, denoted $\modulator(G,\mathcal{G})$, is the minimum size of a modulator of $G$ to $\mathcal{G}$.
\end{definition}

Let $r$ be a fixed constant.
Let $\mathcal{G}_{\operatorname{rc}}$ be the class of graphs with every connected component having at most $r$ vertices.
The \emph{distance of $G$ to $r$-components} is $\modulator(G,\mathcal{G}_{\operatorname{rc}})$. Similarly, if $\mathcal{G}_{\operatorname{rp}}$ is a class of graphs where every connected component is a path with at most $r$ vertices, then the \emph{distance of $G$ to $r$-paths} is $\modulator(G,\mathcal{G}_{\operatorname{rp}})$.

\begin{definition}[Fractioning Number]
	A set of vertices $C_r\subseteq V$ is called \emph{$r$-fractioning set} of the graph $G$ if $|C_r| \le r$ and every connected component of $G \setminus C_r$ has at most $r$ vertices. The \emph{fractioning number} $\fn(G)$ is a minimum $r\in\N$ such that there is an $r$-fractioning set $C_r$ in graph $G$.
\end{definition}

\subsection{(Turing) Kernelization}\label{ssec:prelim:turingKernelization}
\subparagraph{Kernelization.} A language~$L\subseteq \Sigma^* \times \N$, where~$\Sigma$ is a finite alphabet, is called a \emph{parameterized problem}. For an instance~$(x,k)\in \Sigma^* \times \N$ of $L$ we call~$k$ its \emph{parameter}. We say that function~$\psi\colon \Sigma^* \times \N \to \Sigma^* \times \N$ is a \emph{data reduction rule} if it maps an instance~$(x,k)$ to an instance~$(x',k')$ of the same parameterized problem~$L$ such that~$\psi$ is computable in time polynomial in~$|x|$ and~$k$, and instances~$(x,k)$ and~$(x',k')$ are \emph{equivalent}, i.e., $(x,k)\in L$ if and only if~$(x',k')\in L$. A data reduction rule, also called \emph{reduction rule}, is termed \emph{safe} (or \emph{correct}) if indeed the resulting instance after its application is equivalent to the original one.

A \emph{kernelization algorithm}~$\mathcal{A}$, or \emph{kernel} for short, for parameterized problem $L$ is an algorithm that works in polynomial time and, given an instance~$(x,k)\in L$, returns an equivalent instance~$(x',k')\in L$ such that~$|x'|+k \leq f(k)$ for some computable function~$f\colon\N\to\N$. If~$f$ is a polynomial function of the parameter, then we say that~$L$ admits a \emph{polynomial kernel}.

\subparagraph{Refuting Polynomial Kernels.}
When refuting existence of a polynomial kernel for a parameterized problem~$L\subseteq \Sigma^* \times \N$ we follow the framework of Bodlaender et al.~\cite{BodlaenderDFH09}; see also~\cite[Chapter~15]{CyganFKLMPPS15}.

\begin{definition}[Polynomial equivalence relation]\label{def:poly_equiv_relation}
  An equivalence relation $\mathcal{R}$ on the set $\Sigma^*$ is a \emph{polynomial equivalence relation} if:
  \begin{enumerate}
    \item
      There is an algorithm that, given $x,y \in \Sigma^*$, decides $\mathcal{R}(x,y)$ in $\poly(|x|+|y|)$ time.
    \item
      $\mathcal{R}$ restricted to strings of length at most $n$ has at most $\poly(n)$ equivalence classes.
  \end{enumerate}
\end{definition}

\begin{definition}[OR-Cross-composition]
  Let $L\subseteq \Sigma^* \times \N$ be a parameterized problem and let $Q \subseteq \Sigma^*$ be a language.
  We say that $Q$ \emph{OR-cross-composes} into~$L$ if there is an algorithm $\mathcal{A}$ that takes on input strings $x_1, \ldots, x_t \in \Sigma^*$ that are equivalent with respect to some polynomial equivalence relation, runs in time $\poly(\sum_{i=1}^t |x_i|)$ and outputs one instance $(y,k) \in \Sigma^* \times \N$ such that %
    $k \le \poly(\max_{i=1}^t, \log t)$, and
    $(y,k) \in L$ if there exists $i$ such that $x_i \in Q$.
\end{definition}

If we replace the last condition with $(y,k) \in L$ if $x_i \in Q$ for all $i$, then we say that $Q$ \emph{AND-cross-composes} into~$L$.

\begin{proposition}[{\cite{Drucker15}}]\label{pro:nph_or_cross_composition}
    Assume that an \NPh language $Q$ OR-cross-composes to a parameterized language~$L$ (or $Q$ AND-cross-composes to $L$).
    Then, $L$ does not admit a polynomial kernel, unless \NP{} $\subseteq$ \coNPpoly{} and polynomial-hierarchy collapses.
\end{proposition}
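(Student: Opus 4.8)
The plan is to reduce the statement to the known \emph{distillation} lower bounds, via the standard observation that a cross-composition together with a hypothetical polynomial kernel yields a distillation. Recall that an \emph{OR-distillation} of a language $R\subseteq\Sigma^*$ into a language $R'\subseteq\Sigma^*$ is a polynomial-time algorithm that, given strings $x_1,\dots,x_t$ each of length at most $n$, outputs a single string $y$ with $|y|\le\poly(n)$ (with \emph{no} dependence on $t$) such that $y\in R'$ iff $x_i\in R$ for some $i$; an \emph{AND-distillation} is the same with ``for some $i$'' replaced by ``for all $i$''. By the theorem of Fortnow and Santhanam for the OR case, and by the theorem of Drucker for the AND case (which is exactly \cite{Drucker15}), no \NPh{} language admits an OR- or AND-distillation unless $\NP\subseteq\coNPpoly$; and $\NP\subseteq\coNPpoly$ collapses the polynomial hierarchy to its third level by Yap's theorem. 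So it suffices to show that an OR-cross-composition (resp.\ AND-cross-composition) of an \NPh{} $Q$ into a parameterized $L$ that admits a polynomial kernel gives an OR- (resp.\ AND-) distillation of $Q$ into some language.

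I would build the distillation as follows. Given $x_1,\dots,x_t$, each of length at most $n$, first delete duplicates; this changes neither the OR nor the AND over the tuple, and now $t\le 2^{n+1}$, so $\log t=O(n)$. Partition the surviving instances into the equivalence classes of the polynomial equivalence relation $\mathcal{R}$ from the definition of cross-composition; by the second property in \Cref{def:poly_equiv_relation} there are at most $\poly(n)$ such classes (only classes of length-$\le n$ strings ever occur). Inside class $j$, run the cross-composition to get an instance $(y_j,k_j)$ of $L$ with parameter $k_j\le\poly(n+\log t)=\poly(n)$ such that $(y_j,k_j)\in L$ iff at least one (respectively, in the AND case, every) $x_i$ of class $j$ lies in $Q$. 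Then apply the assumed polynomial kernel to each $(y_j,k_j)$, producing an equivalent instance $(y_j',k_j')$ with $|y_j'|+k_j'\le\poly(k_j)=\poly(n)$. The at most $\poly(n)$ instances $(y_j',k_j')$ are computed in polynomial time and their concatenation has size $\poly(n)$. Letting $R'$ be the language of encodings of tuples of $L$-instances in which at least one entry (in the OR case) — respectively every entry (in the AND case) — is a yes-instance of $L$, the map $(x_1,\dots,x_t)\mapsto\big((y_1',k_1'),\dots\big)$ is the required distillation of $Q$ into $R'$, because ``some $x_i\in Q$'' $\iff$ ``some class meets $Q$'' $\iff$ ``some $(y_j',k_j')\in L$'' (and dually, with ``some'' replaced by ``all'', in the AND case). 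Applying the distillation lower bound and then Yap's theorem finishes the proof.

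The genuinely hard part lies entirely in the distillation lower bound, not in the reduction above. The OR case is due to Fortnow and Santhanam: from an OR-distillation of an \NPh{} $R$ one hard-wires, as polynomial advice, a $\poly(n)$-size family of no-instances of $R'$ that — run backwards through the distiller — certifies membership in the $\mathsf{coNP}$-complete complement $\overline{R}$ for all inputs of length at most $n$, placing $\overline{R}$ in $\mathsf{NP}/\poly$; the existence of this small family is a counting/greedy argument over the bounded output space of the distiller. The AND case is substantially more delicate — it remained open for years after the OR case and is the main technical achievement of \cite{Drucker15} — so in this paper I would use the Proposition as a black box and owe only the routine cross-composition-to-distillation reduction sketched above rather than a fresh proof of the distillation bounds.
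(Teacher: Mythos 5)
Your argument is correct, but note that the paper itself does not prove this proposition at all: it is imported as a black box from the cross-composition framework (the framework of Bodlaender, Jansen, and Kratsch, with the OR-distillation lower bound due to Fortnow and Santhanam and the AND case due to Drucker, which is why \cite{Drucker15} is the reference attached to the statement). What you have written is essentially the standard proof of that imported result: deduplicate so that $\log t = \O(n)$, split into the $\poly(n)$ many classes of the polynomial equivalence relation, compose each class, kernelize each composed instance, and read the resulting $\poly(n)$-sized tuple as an instance of the OR (resp.\ AND) of $L$, which is an OR- (resp.\ AND-) distillation of the \NPh language $Q$; then invoke Fortnow--Santhanam / Drucker and Yap's theorem. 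Two small remarks. First, your reduction needs the composition to satisfy ``$(y,k)\in L$ \emph{if and only if} some $x_i\in Q$''; the paper's definition states only the ``if'' direction, which is evidently a typo for the standard biconditional, and your proof silently (and correctly) uses the standard version. Second, you correctly place the genuine difficulty where it belongs: the distillation lower bounds themselves, in particular Drucker's AND case, are far beyond what one would reprove here, and treating them as black boxes is exactly what the paper does by citing the proposition rather than proving it.
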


\begin{definition}[{\cite[Definition~15.14]{CyganFKLMPPS15}}]
	Let $P$ and $Q$ be two parametrized problems.
	An algorithm~$\cal A$ is called a \emph{polynomial parameter transformation} from $P$ to $Q$ if, given an instance $(x,k)$ of $P$, $\cal A$ works in polynomial time and outputs an equivalent instance $(x',k')$ of $Q$, such that $k' \le \poly(k)$.
\end{definition}

\subparagraph{Turing Kernels.}

\begin{definition}[{\cite[Definition~9.28]{CyganFKLMPPS15}}]
	Let $Q$ be a parametrized problem and let $f \colon \N \to \N$ be a computable function.
	A \emph{Turing kernelization} for~$Q$ of size~$f$ is an algorithm that decides whether a given instance $(x,k)$ is contained in~$Q$ in time polynomial in ${|x|+k}$, when given access to an oracle that decides membership in~$Q$ for any instance~$(x',k')$ with $|x'|,k' \le f(k)$ in a single step.
\end{definition}

If the function $f$ is a polynomial, we say that the problem admits a \emph{polynomial Turing kernel}.

Hermelin et al.~\cite{HermelinKSWW15} introduced a class \WK and conjectured that \WK-hard problems (under polynomial parameter transformations) do not admit polynomial Turing kernels.
Moreover, they showed that no \WK-hard problem has a (classical) polynomial kernel, unless \NP{} $\subseteq$ \coNPpoly{} and polynomial-hierarchy collapses.

\subsection{Reducing Numbers in the Input}

We say that two functions $f,g \colon \Z^d \to \Z$ are \emph{equivalent} on a polyhedron $P \subseteq \Z^d$ if $f(\bm{x}) \leq f(\bm{y})$ if and only if $g(\bm{x}) \leq g(\bm{y})$ for all $\bm{x}, \bm{y} \in P$.
\begin{proposition}[{Frank and Tardos~\cite{FrankT87}}]\label{thm:FT}
	Given a rational vector $\bm{w} \in \mathbb{Q}^{d}$ and an integer $M$, there is a polynomial algorithm which finds a $\widetilde{\bm{w}} \in \Z^d$ such that the linear functions $\bm{w} \bm{x}$ and $\widetilde{\bm{w}}\bm{x}$ are equivalent on $[-M,M]^d$, and $\|\widetilde{\bm{w}}\|_\infty \leq 2^{\O(d^3)} M^{\O(d^2)}$.
\end{proposition}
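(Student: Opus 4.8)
The plan is to turn the statement into a purely combinatorial sign-preservation task and then solve that task by repeated use of lattice basis reduction.

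\emph{Reduction.} By the definition of equivalence, $\bm{w}\bm{x}$ and $\widetilde{\bm{w}}\bm{x}$ are equivalent on $P=[-M,M]^d\cap\Z^d$ exactly when $\operatorname{sign}(\bm{w}\bm{z})=\operatorname{sign}(\widetilde{\bm{w}}\bm{z})$ for every $\bm{z}=\bm{x}-\bm{y}$ with $\bm{x},\bm{y}\in P$, that is, for every $\bm{z}\in\Z^d$ with $\|\bm{z}\|_\infty\le 2M$; put $N:=2dM$, so that all these $\bm{z}$ satisfy $\|\bm{z}\|_1\le N$. Multiplying $\bm{w}$ by the least common denominator of its entries changes no such sign, so I may assume $\bm{w}\in\Z^d$ (its entries can be huge, but this does not affect polynomiality, which is measured against the input encoding). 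It therefore suffices to compute, in polynomial time, some $\widetilde{\bm{w}}\in\Z^d$ with $\|\widetilde{\bm{w}}\|_\infty\le 2^{\O(d^3)}N^{\O(d^2)}$ and $\operatorname{sign}(\widetilde{\bm{w}}\bm{z})=\operatorname{sign}(\bm{w}\bm{z})$ for all $\bm{z}\in\Z^d$ with $\|\bm{z}\|_1\le N$.

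\emph{Peeling off one small functional.} The engine is the standard consequence of the Lenstra--Lenstra--Lov\'{a}sz basis reduction algorithm: given $\bm{\alpha}\in\mathbb{Q}^d$ and $\varepsilon\in(0,1)$ one computes in polynomial time an integer $q$ with $1\le q\le 2^{d(d+1)/4}\varepsilon^{-d}$ and $\bm{p}\in\Z^d$ with $\|q\bm{\alpha}-\bm{p}\|_\infty\le\varepsilon$. I apply this to $\bm{\alpha}:=\bm{w}/\|\bm{w}\|_\infty\in[-1,1]^d$ with $\varepsilon:=1/(4N)$ and set $\bm{w}_1:=\bm{p}$, so that $\|\bm{w}_1\|_\infty\le q+1\le 2^{\O(d^2)}N^d$ and $\bm{w}_1\ne\bm 0$ (the largest coordinate of $q\bm{\alpha}$ has modulus $q\ge1$ and rounds to a nonzero integer). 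For any integer $\bm{z}$ with $\|\bm{z}\|_1\le N$ we have $q\bm{\alpha}\bm{z}=\bm{w}_1\bm{z}+r(\bm{z})$ with $\bm{w}_1\bm{z}\in\Z$ and $|r(\bm z)|\le\varepsilon\|\bm z\|_1\le 1/4$; hence whenever $\bm{w}_1\bm{z}\ne 0$ we get $\operatorname{sign}(\bm{w}_1\bm{z})=\operatorname{sign}(q\bm{\alpha}\bm{z})=\operatorname{sign}(\bm{w}\bm{z})$ since $q,\|\bm w\|_\infty>0$. Thus $\bm{w}_1$ already reproduces the sign of $\bm{w}\bm{z}$ on every $\bm{z}$ where it does not vanish.

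\emph{Recursion and gluing.} Let $L:=\{\bm{z}\in\Z^d:\bm{w}_1\bm{z}=0\}$, a primitive sublattice of corank $1$, and let $\bm{u}:=(q/\|\bm w\|_\infty)\bm w-\bm{w}_1$ be the residual, which on $L$ is a positive scalar multiple of $\bm w$, so $\operatorname{sign}(\bm u\bm z)=\operatorname{sign}(\bm w\bm z)$ for $\bm z\in L$. Pick a reduced integer basis $B$ of $L$, reparametrize $\bm z=B\bm y$, set $\bm{w}':=\bm u B\in\mathbb{Q}^{d-1}$, and recurse in dimension $d-1$ with objective $\bm{w}'$ and a box bound $N'$ chosen so that $\|\bm z\|_1\le N$ forces $\|\bm y\|_1\le N'$; the recursion terminates after at most $d$ rounds because the ambient dimension drops by exactly one each time. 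Lifting the recursively produced small integer vector back through $B$ (extended to a basis of $\Z^d$) yields $\widehat{\bm v}\in\Z^d$ with $\operatorname{sign}(\widehat{\bm v}\bm z)=\operatorname{sign}(\bm w\bm z)$ for every $\bm z\in L$ with $\|\bm z\|_1\le N$. Finally take $\Lambda$ strictly larger than $N\|\widehat{\bm v}\|_\infty$ and output $\widetilde{\bm w}:=\Lambda\bm{w}_1+\widehat{\bm v}$: if $\bm{w}_1\bm z\ne0$ the leading term dominates and $\operatorname{sign}(\widetilde{\bm w}\bm z)=\operatorname{sign}(\bm{w}_1\bm z)=\operatorname{sign}(\bm w\bm z)$, while if $\bm{w}_1\bm z=0$ then $\operatorname{sign}(\widetilde{\bm w}\bm z)=\operatorname{sign}(\widehat{\bm v}\bm z)=\operatorname{sign}(\bm w\bm z)$. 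Unrolling the recursion writes $\widetilde{\bm w}$ as a weighted sum of at most $d$ small integer vectors with geometrically growing multipliers, and propagating the size estimates through the $\le d$ levels gives $\|\widetilde{\bm w}\|_\infty\le 2^{\O(d^3)}N^{\O(d^2)}=2^{\O(d^3)}M^{\O(d^2)}$, since $N=2dM$.

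\emph{Main obstacle.} The delicate part is precisely the size bookkeeping in the recursive step: restricting to $L$ and reparametrizing by $B$ inflates the box bound $N\to N'$, the accuracy parameter fed to the Lenstra--Lenstra--Lov\'{a}sz algorithm, the coefficient sizes of the vectors produced, and the multiplier $\Lambda$, and one must show that over the $\le d$ levels these inflations compound only to $2^{\poly(d)}$ in the base and $\poly(d)$ in the exponent --- rather than to a power per level, which would be far too large. This forces one to choose $B$ as an LLL-reduced basis of the corank-$1$ lattice $\ker(\bm{w}_1)\cap\Z^d$ and to exploit the resulting length bounds both for $B$ itself (to control $N'$ and the lift $\widehat{\bm v}$) and for an extension of $B$ to a basis of $\Z^d$ (to keep $\widehat{\bm v}$ an integer vector of controlled norm), together with a careful amortized analysis of how $N$, the per-level entry bounds, and the successive $\Lambda$'s interact. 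Making all of these estimates close at $2^{\O(d^3)}M^{\O(d^2)}$ is the technical heart of the Frank--Tardos theorem~\cite{FrankT87}; everything else is the routine application of lattice reduction and the base-$\Lambda$ gluing sketched above.
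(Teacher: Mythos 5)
You should first note that the paper does not prove this statement at all: it is quoted as a known result of Frank and Tardos~\cite{FrankT87} and used as a black box (via \Cref{lem:magic_kernel}), so there is no ``paper proof'' to match your argument against. Judged on its own, your proposal correctly reduces the problem to sign preservation on integer vectors $\bm{z}$ with $\|\bm{z}\|_1\le N$, and the first LLL step (producing $\bm{w}_1$ that reproduces the sign of $\bm{w}\bm{z}$ wherever $\bm{w}_1\bm{z}\neq0$) is sound. But the proof has a genuine gap, and you name it yourself: the entire quantitative content of the statement, namely that the output satisfies $\|\widetilde{\bm{w}}\|_\infty\le 2^{\O(d^3)}M^{\O(d^2)}$, is asserted and then deferred (``the technical heart of the Frank--Tardos theorem'') rather than derived. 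Since the bound is precisely what the proposition claims, leaving the ``size bookkeeping'' open means the statement is not proved.

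Moreover, the recursion you chose makes that deferred bookkeeping harder than in the original argument, and it is not clear it closes at the claimed bound. Frank and Tardos do not pass to the corank-one kernel lattice and reparametrize: after normalizing so that $\|\bm{w}^{(i)}\|_\infty$ is attained at some coordinate, that coordinate of $q\bm{\alpha}$ is exactly the integer $\pm q$, so the residual vanishes there and the recursion simply reduces the \emph{support} by one per round, always staying in the same coordinate system; the final vector is a sum $\sum_i \Lambda_i \bm{z}_i$ of at most $d$ rounded vectors, each of norm $2^{\O(d^2)}N^{\O(d)}$, with explicitly chosen dominating multipliers, which gives the $2^{\O(d^3)}N^{\O(d^2)}$ bound directly. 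In your variant, restricting to $L=\ker(\bm{w}_1)\cap\Z^d$, reparametrizing by an LLL basis $B$, and lifting the recursive output back through an extension of $B$ to a basis of $\Z^d$ introduces at each level factors controlled only by dual-basis/cofactor norms, which naively are of order $2^{\O(d^3)}N^{\O(d^2)}$ \emph{per level}; compounded over up to $d$ levels this threatens a bound like $2^{\O(d^4)}M^{\O(d^3)}$ unless a sharper amortized analysis is supplied. So either carry out those estimates (and verify they really close at $2^{\O(d^3)}M^{\O(d^2)}$), or switch to the support-reduction recursion of the original proof, where the bookkeeping is elementary.
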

The standard approach, also used in kernelization of weighted problems~\cite{BentertvBFNN19,BevernFT19,BevernFT20,ChaplickFGK019,EtscheidKMR17,GoebbelsGRY17,GurskiRR19} is to use the above proposition which ``kernelizes'' a linear objective function if the dimension is bounded by a parameter.

\subsection{Problem Definitions}

\defProblemQuestion{\sTSP (\sTSPshort)}
	{An undirected graph $G=(V,E)$, set of waypoints $\WP \subseteq V$, edge weights $\wFn \colon E \to \N$, budget $\budget \in \N$.}
	{Is there a closed walk in $G$ of total weight at most $\budget$, that traverses each vertex in $\WP$ (at least once)?}
\defProblemQuestion{\WRP (\WRPshort)}
	{An undirected graph $G=(V,E)$, set of waypoints $\WP \subseteq V$, edge weights $\wFn \colon E \to \N$, edge capacities $\cFn \colon E \to \N$, budget $\budget \in \N$.}
	{Is there a closed walk $C$ in $G$ of total weight at most $\budget$, that traverses each vertex in $\WP$ (at least once) and such that for each edge $e$ the number of times $C$ traverses $e$ is at most $\cFn(e)$?}

Although \WRP allows for arbitrary edge capacities, it can be seen that in any solution, traversing an edge more than twice can lead to redundancy in the solution (see \Cref{lem:savy_optimal_solution}). Hence, in this paper, we assume the edge capacities to be either $1$ or $2$ when analyzing \WRPshort.

\section{The Toolbox}\label{sec:generalObservations}

In this section, we give formal proofs to a few technical statements we use throughout the rest of the paper.
This yields a useful set of assumptions that allow us to present less technical proofs in the subsequent sections.
We begin with a technical lemma we were not able to find in literature.

\begin{lemma}\label{lem:many_edges_cycle}
    Let $G$ be a graph with more than $2|V(G)|-2$ edges.
    Then there is a cycle $C$ in $G$ such that the graph $G'=G \setminus E(C)$ has the same connected components as $G$.
\end{lemma}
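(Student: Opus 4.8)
The plan is to reduce the statement to a simple counting argument on a spanning forest. First I would fix a spanning forest $F$ of $G$ and write $n := |V(G)|$ and $c$ for the number of connected components of $G$, so that $|E(F)| = n - c$. The crucial observation is that a cycle $C$ satisfies the conclusion whenever it avoids the edges of \emph{some} spanning forest: if $E(C) \cap E(F) = \emptyset$, then $F \subseteq G \setminus E(C) \subseteq G$, and since $F$ already connects every connected component of $G$, while passing to the subgraph $G \setminus E(C)$ can only split components, the component partitions of $F$, of $G \setminus E(C)$, and of $G$ coincide. Hence it suffices to find a cycle $C$ with $E(C) \subseteq E(G) \setminus E(F)$.

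To do this, I would consider $H := G \setminus E(F)$, the graph obtained by deleting the forest edges (keeping all $n$ vertices). Using the hypothesis $|E(G)| > 2n - 2$ and $c \ge 1$, we get
\[
|E(H)| = |E(G)| - (n - c) > (2n - 2) - (n - c) = n - 2 + c \ge n - 1,
\]
so $H$ has at least $n$ edges on $n$ vertices. A forest on $n$ vertices has at most $n-1$ edges, so $H$ is not a forest and therefore contains a cycle $C$. By construction $E(C) \subseteq E(H) = E(G) \setminus E(F)$, so $C$ is disjoint from $F$, and the reduction of the previous paragraph yields that $G' = G \setminus E(C)$ has exactly the same connected components as $G$.

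I do not expect a genuine obstacle here; the two points that need a little care are (i) spelling out carefully why $F \subseteq G \setminus E(C)$ forces the component partitions to be equal — this is just the sandwich ``same component in $G$ $\iff$ same component in $F$ $\Rightarrow$ same component in $G \setminus E(C)$ $\Rightarrow$ same component in $G$'' — and (ii) checking that the degenerate situations are harmless, which they are, since the hypothesis $|E(G)| > 2|V(G)| - 2$ already forces $G$ to be non-empty and to contain edges.
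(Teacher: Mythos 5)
Your proof is correct and follows essentially the same route as the paper's: fix a spanning forest, observe that the remaining edge set is too large to be acyclic, and take a cycle there, which is disjoint from the forest and hence harmless to the component structure. Your counting via $|E(F)| = n - c$ is a slightly sharper version of the paper's bound $|E(T)| \le n-1$, but the argument is the same.
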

\begin{proof}
    Let $T$ be a maximum acyclic subgraph of $G$.
    Note that $T$ has the same connected components as $G$.
    If $R=G \setminus E(T)$ is acyclic, then we have $|E(G)| = |E(T)|+|E(R)| \le |V(G)|-1 +|V(G)|-1$ contradicting the assumptions of the lemma.
    Hence, $R$ contains a cycle $C$.
    Since $C$ is disjoint with $T$, it satisfies the conditions of the lemma.
\end{proof}

When proving the safeness of our reduction rules, it is easier to work with a solution that does not use many edges and traverses each edge at most twice.
We show that we can always assume to work with such a solution.

\begin{definition}[Nice Solution]
 Let $(G,\WP,\wFn,\cFn,\budget)$ be an instance of \TSPshort, \sTSP, or \WRPshort.
 We call a solution \emph{nice} if it uses every edge at most twice and contains at most $2|V|$ edges (edge traversals) in total.
\end{definition}

Indeed, we may always assume that we work with a nice solution.

\begin{lemma}\label{lem:savy_optimal_solution}
    Let $(G,\WP,\wFn,\cFn,\budget)$ be an instance of \TSPshort, \sTSP, or \WRPshort.
    There is a nice optimal solution.
\end{lemma}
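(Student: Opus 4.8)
The plan is to start from an arbitrary optimal solution $S$, pass to its corresponding multigraph $G_S$, and then repeatedly apply two cleanup operations until neither applies: (i) if some edge is traversed more than twice, reduce its multiplicity; (ii) if $G_S$ still has too many edges, strip off a cycle. Throughout we must preserve the three invariants that make $G_S$ a valid (and no more expensive) solution: it stays connected on the set of visited vertices, all degrees stay even (so it remains Eulerian and hence corresponds to a closed walk visiting the same vertices), every waypoint is still visited, no edge exceeds its capacity, and the total weight does not increase. Since weights are in $\N$ and each step strictly decreases $|E(G_S)|$, the process terminates.

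For step (i), suppose some edge $e=\{u,v\}$ appears $t \ge 3$ times in $G_S$. Removing two copies of $e$ keeps all degrees even and does not disconnect the multigraph: any walk that used a removed copy of $e$ can be rerouted along one of the $t-2 \ge 1$ remaining copies. Waypoints are still visited (we removed no vertex), the capacity constraint only becomes easier, and the weight drops by $2\wFn(e) \ge 0$. Iterating, we may assume every edge has multiplicity at most $2$; in particular this justifies the remark that capacities above $2$ are irrelevant. For step (ii), once every edge has multiplicity at most $2$, if $|E(G_S)| > 2|V(G_S)|$ then certainly $|E(G_S)| > 2|V(G_S)| - 2$, so \Cref{lem:many_edges_cycle} applied to $G_S$ (viewing a doubled edge as a length-two cycle, or the underlying simple graph having $>2|V|-2$ edges — whichever bookkeeping is cleanest) yields a cycle $C$ whose removal does not change the connected components of $G_S$. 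Deleting $E(C)$ removes an even number of edges at each vertex of $C$, so all degrees remain even; connectivity on the visited set is preserved by choice of $C$; waypoints remain visited; capacities and weight only improve. Hence $G_S$ remains Eulerian on the same vertex set and corresponds to a closed walk that is still an optimal solution, now with at most $2|V(G_S)| \le 2|V|$ edges.

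The one point requiring a little care — and the main obstacle — is reconciling \Cref{lem:many_edges_cycle}, which is stated for simple graphs, with the multigraph $G_S$: a doubled edge is already a ``cycle'' in a loose sense and we should make sure the cycle $C$ we remove genuinely keeps the walk connected and keeps all visited vertices visited (removing a cycle cannot isolate a vertex since it touches each of its vertices an even number of times, but it could in principle drop a vertex of degree exactly $2$ whose only incident edges form $C$ — however such a vertex lies on $C$, so after removal it still lies in the same component via the rest of $G_S$, precisely because $C$'s removal does not change the components). Once that is checked, the argument closes: the final $G_S$ is connected, even, Eulerian, uses every edge at most twice, has at most $2|V|$ edges, visits every waypoint, respects capacities, and has weight at most $\budget$, so the closed walk it induces is a nice optimal solution.
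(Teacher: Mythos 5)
Your proposal is correct and takes essentially the same approach as the paper: both rely on \Cref{lem:many_edges_cycle} to strip off a component-preserving cycle when there are too many edges, and both treat a triple edge by removing a parallel pair (a $2$-cycle in the multigraph). The only difference is presentational — you iterate a cleanup procedure with a termination argument, while the paper picks an optimal solution with the fewest edges and derives a contradiction in one step.
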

\begin{proof}
    Let $S$ be an optimal solution with a number of edges.
    If $S$ contains at most $2|V|$ edges and each edge at most twice, then we are done.
    Otherwise, denote by $G_S$ the graph $G[E(S)]$, i.e., the multigraph induced by edges of the walk $S$.
    Since $S$ is a solution for the \WRPshort instance $(G,\WP,\wFn,\cFn,\budget)$, $G_S$ is an Eulerian graph.
    If $G_S$ contains more than  $2|V|$ edges, by \Cref{lem:many_edges_cycle} there is a cycle $C$ in $G_S$, such that $G'_S =G_S \setminus C$ is also connected.
    Since the degree of each vertex in $C$ is $2$, $G'_S$ is also Eulerian, yielding an optimal solution with less edges, contradicting the choice of $S$.
    Similarly, if $G_S$ contains three edges between the same pair of vertices, then two of them form such a cycle.
    This completes the proof.
\end{proof}

We continue with a remark about our instances.
It is important to note that item (b) is applicable in general, however, one should be careful when doing so, since it increases the weights in the instance.

\begin{remark}\label{rem:connectedInstancesAndPositiveWeights}
	Let $(G,\WP,\wFn,\cFn,\budget)$ be an instance of \TSPshort, \sTSP, or \WRPshort.
  We assume that
			a) $G$ is a connected graph and
			b) $\wFn(e) > 0$ for all $e \in E(G)$.
\end{remark}
\begin{proof}
~
\begin{enumerate}
	\item %
		If $G$ is disconnected and there are two waypoints $w_i,w_j\in \WP$ such that $w_i$ and $w_j$ belong to different components, then the answer is trivially \NO.
		Otherwise, we can continue with the connected component containing all waypoints, which is a connected graph.
		Note that in the case of \TSPshort all vertices are considered waypoints.
	\item %
	  Let $G = (V,E)$.
		We first multiply the weight of every edge by a factor $Q = \left( \sum_{e \in E} \wFn(e) \right) + 2|V| + 1$.
		Then, we increase the weight of all edges $e$ with weight~0 to~1 and adjust the budget.
		Formally, we create a new instance $\mathcal{\widehat{I}} = (G,\WP,\widehat{\wFn},\cFn,\widehat{\budget})$, where
		\[
			\widehat{\wFn}(e) =
			\begin{cases}
				Q \cdot \wFn(e) & \text{if } \wFn(e) > 0 \\
				1								& \text{if } \wFn(e) = 0
			\end{cases}
		\]
		and $\widehat{\budget} = Q \cdot \budget + 2|V|$.

		Assume $\mathcal{I} = (G,\WP,\wFn,\cFn,\budget)$ has a solution.
		Since we can always assume we have a nice solution, there are no more than $2|V|$ edges of weight~0.
		It is not hard to verify that we can use the same solution in~$\mathcal{I}$ and that this solution indeed obeys the constraint of $\widehat{\budget}$.

		In the opposite direction assume~$\mathcal{\widehat{I}}$ admits a solution~$S$.
		Let $S_Q$ denote the set of edges in $S$ with weight strictly more than 1.
		Now, we get that
		\[
          Q \cdot \budget + 2|V|
		  =
		  \widehat{\budget}
			\ge
			\sum_{e \in S} \widehat{\wFn}(e)
			\ge
			\sum_{e \in S_Q} \widehat{\wFn}(e)
			=
			Q \cdot \sum_{e \in S_Q} \wFn(e)
			\,,
		\]
		which implies that
		\[
		 \sum_{e \in S_Q} \wFn(e) \le \frac{ Q \cdot \budget + 2|V|}{Q} < \budget +1
		\]
		as $Q > 2|V|$.
		We conclude that $\wFn(S)=\sum_{e \in S_Q} \wFn(e) \le \budget$.
		Thus, $S$ is a solution to $\mathcal{I}$.
		\qedhere
\end{enumerate}
\end{proof}

We apply the following lemma to reduce the weights of the kernelized instances.
It follows in a rather straightforward way from \Cref{thm:FT}.

\begin{lemma}\label{lem:magic_kernel}
 There is a polynomial time algorithm, which, given an instance $(G,\WP,\wFn,\cFn,\budget)$ of \TSPshort, \sTSPshort, or \WRPshort with at most $d$ edges, produces $\wFn'$ and $\budget'$ such that the instances  $(G,\WP,\wFn,\cFn,\budget)$ and $(G,\WP,\wFn',\cFn,\budget')$ are equivalent and $(G,\WP,\wFn',\cFn,\budget')$ is of total bit-size $\O(d^4)$.
\end{lemma}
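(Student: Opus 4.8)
The statement is essentially a direct application of the Frank--Tardos result (\Cref{thm:FT}) to the weight vector $\wFn$ together with the budget $\budget$, so the proof should be short. First I would set up the vector whose entries we want to compress: treat $\bm{w} = (\wFn(e_1), \ldots, \wFn(e_d), \budget) \in \mathbb{Q}^{d+1}$, i.e.\ put all edge weights and the budget into a single $(d+1)$-dimensional integer vector. The point of including $\budget$ as an extra coordinate is that a solution~$S$, encoded as a $0/1/2$ multiplicity vector $\bm{x}_S \in \{0,1,2\}^d$ (how many times each edge is traversed), satisfies the budget constraint exactly when $\wFn(S) = \langle \bm{w}_{1..d}, \bm{x}_S\rangle \le \budget = \langle \bm{w}_{1..d}, \bm{0}\rangle + \budget$; to phrase this purely as a comparison of two linear functions of a single vector, I would compare the linear functional $\bm w$ evaluated at the point $(\bm x_S, 0)$ against $\bm w$ evaluated at $(\bm 0, 1)$ — both lying in $\{0,1,2\}^{d+1} \subseteq [-M,M]^{d+1}$ for $M = 2$.

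Next I would invoke \Cref{thm:FT} with this vector $\bm w$, dimension $d' = d+1$, and $M = 2$, obtaining $\widetilde{\bm w} \in \Z^{d+1}$ with $\|\widetilde{\bm w}\|_\infty \le 2^{\O((d+1)^3)} \cdot 2^{\O((d+1)^2)} = 2^{\O(d^3)}$, such that $\bm w \bm x \le \bm w \bm y \iff \widetilde{\bm w}\bm x \le \widetilde{\bm w}\bm y$ for all $\bm x, \bm y \in [-2,2]^{d+1}$. Then I set $\wFn'(e_i) = \widetilde w_i$ for $i \le d$ and $\budget' = \widetilde w_{d+1}$. (One should be slightly careful that weights and budget need to be nonnegative integers; since by \Cref{lem:savy_optimal_solution} we may restrict attention to nice solutions, and since adding a fixed constant to every coordinate of $\widetilde{\bm w}$ does not affect which pairs satisfy $\widetilde{\bm w}\bm x \le \widetilde{\bm w}\bm y$ as long as $\bm x, \bm y$ have the same coordinate sum — but $\bm x_S$ and $\bm 0$ do \emph{not} — so instead I would simply shift so that $\min_i \widetilde w_i \ge 0$ and separately argue the comparison is preserved; cleaner is to run Frank--Tardos on the extended vector $(\wFn(e_1),\dots,\wFn(e_d),\budget)$ and then, if needed, uniformly increase all edge weights and correspondingly the budget, noting the multiplicity vectors of interest all have the same number of edge-traversals only up to the cap $2|V|$, which is itself $\le d$.) The resulting $\wFn', \budget'$ are integers of absolute value $2^{\O(d^3)}$, hence each is encodable in $\O(d^3)$ bits, and there are $d+1$ of them, giving total bit-size $\O(d^4)$ as claimed. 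The algorithm is polynomial because Frank--Tardos runs in polynomial time.

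For correctness (equivalence of the two instances), I would argue: a nice solution is a closed walk whose multigraph $G_S$ uses each edge at most twice and has at most $2|V| \le d$ edge-traversals; it is feasible for $(G,\WP,\wFn,\cFn,\budget)$ iff $G_S$ is Eulerian, visits all of $\WP$, respects capacities $\cFn$, and $\wFn(S) \le \budget$. The first three conditions depend only on $G$, $\WP$, $\cFn$, not on the weights, so they are unaffected by the reduction. The last condition, $\langle \wFn, \bm x_S\rangle \le \budget$, is preserved because it is exactly the comparison $\bm w\,(\bm x_S,0) \le \bm w\,(\bm 0,1)$, which Frank--Tardos guarantees is equivalent to $\widetilde{\bm w}\,(\bm x_S,0) \le \widetilde{\bm w}\,(\bm 0,1)$, i.e.\ $\langle \wFn', \bm x_S\rangle \le \budget'$. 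By \Cref{lem:savy_optimal_solution} a YES-instance has a \emph{nice} solution, so restricting to nice solutions loses nothing; hence $(G,\WP,\wFn,\cFn,\budget)$ is a YES-instance iff $(G,\WP,\wFn',\cFn,\budget')$ is.

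**Main obstacle.** Nothing here is deep; the only real subtlety is the bookkeeping around nonnegativity of the produced weights and phrasing the budget constraint as a single Frank--Tardos comparison of two points in a bounded box — the natural formulation compares vectors of different coordinate sums, so one cannot freely shift all coordinates of $\widetilde{\bm w}$ by a constant, and I would need to either keep the extra budget coordinate throughout (so that both compared points genuinely lie in $[-2,2]^{d+1}$) or explicitly add a large constant to all edge weights afterwards and re-derive the budget, checking the bit-size bound still holds. This is routine but is the step where a careless argument would break.
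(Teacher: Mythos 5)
Your proposal is correct and follows essentially the same route as the paper: apply the Frank--Tardos theorem to the $(d+1)$-dimensional vector $(\wFn,\budget)$ with $M=2$, observe that the budget test $\wFn\cdot\bm{x}_S\le\budget$ is a single comparison of two points in $[-2,2]^{d+1}$ (the paper writes it as $(\bm{x}_E,-1)$ versus $\bm{0}$, you as $(\bm{x}_S,0)$ versus $(\bm{0},1)$ — the same thing), and conclude the $\O(d^4)$ bit-size from $\|(\widetilde{\bm{w}},\widetilde{\budget})\|_\infty\le 2^{\O(d^3)}$. The nonnegativity caveat you flag is real but is not treated in the paper's proof either, so it does not distinguish your argument from theirs.
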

\begin{proof}
We consider a rational vector $(\bm{w},\budget)\in \mathcal{Q}^{d'+1}$ where $d' \le d$ is the total number of edges.
Since each edge can be present in the solution zero, one or two times, let $M=2$ so that $\{0,1,2\} \subseteq [-M,M]$.
Then due to \Cref{thm:FT} there exists a polynomial algorithm that returns a $(\widetilde{\bm{w}},\widetilde{\budget}) \in \Z^{d'+1}$ such that the linear functions $(\bm{w},\budget) \bm{x}$ and $(\widetilde{\bm{w}},\widetilde{\budget})\bm{x}$ are equivalent on $[-2,2]^{d'+1}$, and $\|(\widetilde{\bm{w}},\widetilde{\budget})\|_\infty \leq 2^{\O(d'^3)}M^{\O(d'^2)}   =2^{\O(d^3)}$.
Observe that since $(\bm{w},\budget) \bm{x}$ and $(\widetilde{\bm{w}},\widetilde{\budget})\bm{x}$ are equivalent, we have that $\bm{x}$ is a solution w.r.t.\ weights $\bm{w}$ and a budget $\budget$ if and only if $\bm{x}$ is a solution w.r.t.\ weights $\widetilde{\bm{w}}$ and a budget $\widetilde{\budget}$.
To see this, let $\bm{x}_E$ denote the first $d'$ coordinates of~$\bm{x}$.
Note that we have $\bm{w}\cdot\bm{x}_E \le \budget$ if and only if $\bm{w}\cdot\bm{x}_E - \budget \le 0$ which is if and only if $\widetilde{\bm{w}}\cdot\bm{x}_E - \widetilde{\budget} \le 0$.
As $\|(\widetilde{\bm{w}},\widetilde{\budget})\|_\infty =2^{\O(d^3)}$, each new weight can be described by $\O(d^3)$ bits, and, thus, the whole instance can be described using $\O(d^4)$ bits.
\end{proof}

Finally, we present two simple reduction rules; we always assume that \Cref{rrule:stop_condition} is not applicable.

\begin{rrule}\label{rrule:stop_condition}
		Let an instance of \TSPshort, \sTSPshort, or \WRPshort be given.
    If $\budget < 0$, then answer \NO.
    Otherwise, if $|\WP| \leq 1$, then answer \YES.
\end{rrule}
\begin{proof}[Safeness]
    If the budget $\budget$ is negative, we have already exceeded allowed operations, so the instance is trivially \NO.
    In the other case, if $|\WP| \leq 1$, then there is only one waypoint and the solution is always an empty path.
    So the answer is \YES.
\end{proof}

\begin{rrule}\label{rrule:short_circuit}
 Let $I$ be an instance of \sTSP{} and $v \notin \WP$.
 For each pair of vertices $u,w \in N(v)$ we introduce a new edge $\{u,w\}$ into the graph with $\wFn(\{u,w\})=\wFn(\{u,v\})+\wFn(\{v,w\})$ (if this creates parallel edges, then we only keep the one with the lower weight).
 Finally, we remove $v$ together with all its incident edges.
\end{rrule}
\begin{proof}[Safeness]
	Let $I'$ be the resulting instance.
	If $S$ is a solution for $I$, then by omitting all occurrences of vertex $v$ and replacing them with the direct edge between its neighbors, we obtain a solution for $I'$ with at most the same weight.
	If $S'$ is a solution for $I'$, then we can replace each usage of an edge $\{u,w\}$ not present in $I$ by a traversal trough $v$ using the edges $\{u,v\}$ and $\{v,w\}$, obtaining a solution $S$ for $I$ of at most the same weight.
\end{proof}

While \Cref{rrule:short_circuit} is easy to apply, its application may destroy the structure of the input graph.
Therefore, we only apply it to some specific vertices as explained in the subsequent sections.

\section{Polynomial Kernel with Respect to Feedback Edge Set Number}\label{sec:FESkernel}

In this section we give an algorithm to derive a polynomial kernel for \WRP when the parameter is feedback edge set number.
This implies that \TSP and \sTSP also admit a polynomial kernel when parameterized by the feedback edge set number as \WRP is in \NP and the two problems are \NPh.

First, we observe that we can assume that there are no degree-one vertices in the input instance.

\begin{rrule}\label{rrule:noLeafWithCapacityOneEdgeInWRP}
    Let $(G,\WP,\wFn,\cFn,\budget)$ be an instance of \WRP, reduced with respect to \autoref{rrule:stop_condition}.
    If there is a vertex $v\in \WP$ with $\deg(v) = 1$, $\{u\}=N(v)$, and $\cFn(\{u,v\}) = 1$, then answer \NO.
\end{rrule}
\begin{proof}[Safeness]
    Our goal in the \WRP is to find the least cost closed walk~$S$ such that it respects capacities of the edges, and visits every waypoint $w\in \WP$ at least once.
    In a closed walk we must be able to enter and leave any vertex in $S$ at least once.
    But it is not possible for $v$ since the only incident edge is $\{u,v\}$ which can be traversed at most once.
\end{proof}

\begin{rrule}\label{rrule:noLeafnonTerminalInWRP}
    Let $(G,\WP,\wFn,\cFn,\budget)$ be an instance of \sTSP or \WRP and $v\in V(G)\setminus\WP$ be a non-terminal vertex such that $\deg(v) = 1$.
    We remove $v$ from $G$ and continue with the modified instance.
\end{rrule}
\begin{proof}[Safeness]
    Let $S$ be an optimal solution for $(G,\WP,\wFn,\cFn,\budget)$.
    For a contradiction, we assume that $v$ is visited by $S$ at least once.
    Let $u$ be a neighbor of $v$ in $G$ and let $e_i$ be the first occurrence of the edge $\{u,v\}$ in $S$.
    If we remove edges $e_i$ and $e_{i+1}$ from $S$, then we obtain solution $S'$ such that $\wFn(S') \leq \wFn(S)$, since all weights are non-negative.
    This contradicts the optimality of $S$, $v$ is not part of any optimal solution $S$, and it can be safely removed from $G$.
\end{proof}

\begin{rrule}\label{rrule:noLeafWithCapacityTwo+EdgeInWRP}
    Let $(G,\WP,\wFn,\cFn,\budget)$ be an instance of \sTSP or \WRP (reduced with respect to \Cref{rrule:noLeafWithCapacityOneEdgeInWRP} in case of \WRPshort), $v\in \WP$ be a terminal vertex such that $\deg(v) = 1$ and $u\in V(G)$ be the neighbor of $v$.
    We remove $v$ from $G$, decrease the budget $\budget$ by $2\cdot\wFn(\{u,v\})$, and add $u$ to $\WP$.
\end{rrule}
\begin{proof}[Safeness]
    Let $S$ be an optimal solution for $(G,\WP,\wFn,\cFn,\budget)$.
    If $S$ does not traverse $v$, then $S$ cannot be a solution by a problem definition.
    So $S$ visits $v$ at least once.
    For a contradiction, we assume that $S$ traverses $v$ at least twice.
    Let $e_i$ be a second appearance of the edge $\{u,v\}$, where $u$ is the neighbor of $v$, in $S$.
    If we create the solution $S'$ by removing edges $e_i$ and $e_{i+1}$ from $S$, then we obtain another solution such that $\wFn(S') \leq \wFn(S)$ which contradicts the optimality of $S$.
    Thus, $v$ is visited by $S$ exactly once and the predecessor of $v$ in $S$ is always $u$.
    So, we can safely reduce the budget, remove $v$ and make $u$ a terminal.

    Note that in case of \WRPshort we assume that the instance is reduced with respect to \Cref{rrule:noLeafWithCapacityOneEdgeInWRP}.
    Otherwise, we cannot afford two traversals of the edge $\{u,v\}$.
\end{proof}

The following rule is a variant of \Cref{rrule:short_circuit} applicable to \WRPshort{}.

\begin{rrule}\label{rrule:short_circuitWPRFES}
    Let $P=(p_0, p_1, \ldots, p_\ell)$ be a path in $G$ such that $\ell \ge 2$ and $\deg_G p_i=2$ and $p_i \notin \WP$ for every $i \in \{1, \ldots, \ell-1\}$.
    Then delete the inner vertices of $P$, and introduce an edge between vertices $p_0$ and $p_\ell$, with capacity set to $\min_{e \in E(P)} \cFn(e)$ and weight equal to $\sum_{e \in E(P)} \wFn(e)$.
    Do not change $\WP$ and $\budget$.
\end{rrule}
\begin{proof}[Safeness]
    Since none of the vertices in the path $P'=(p_1,\dots,p_{\ell-1})$ belong to the set $\WP$, if a minimum weighted path traversing all the waypoints in the graph traverses some vertices in $P'$, it shall only be the case that the path $P'$ serves as a sub-path in a larger path.
    Let $S$ be an optimal solution of weight $\budget$ for $G$.
    If the vertices in $P'$ are not contained in $S$, then $S$ is a solution for $G'$ as well.
    If only some vertices of $P'$ belong to $S$, then the edges incident with these vertices are traversed twice, and these vertices can be deleted without affecting the solution.
    If all vertices of $P'$ belong to $S$, then we can replace the path $P'$ by the newly created edge, say $e'$, in $G'$ and have a solution of the same weight.
    Observe that due to the weight of $e'$, the weight of the solution is not affected.
    In the other direction, let $S'$ be a solution of weight $k$ for $G'$.
    If $e'$ does not belong to $S'$, then $S$ is also a solution for $G$.
    If $e'$ does belong to $S'$, then we can replace $e'$ by the path $P'$ and have a solution of same weight in $G$.
\end{proof}

The following rule deals with remaining long paths.
\begin{rrule}\label{rrule:longPathFES}
    Let $P=(p_0, p_1, \ldots, p_\ell)$ be a path in $G$ such that $\ell \ge 3$ and $\deg_G p_i=2$ and $p_i \in \WP$ for every $i \in \{1, \ldots, \ell-1\}$.
    We replace the inner vertices of $P$ by a vertex $x, x \notin V(G)$ and two edges $e'_1=\{p_0,x\}$ and $e'_2=\{x,p_\ell\}$, putting $x$ in $\WP$.
    We set the properties of the edges based on the number of edges of capacity $1$ on $P$ as follows:
    \begin{enumerate}[a)]
        \item If there are distinct $e_1, e_2 \in E(P)$ with $\cFn(e_1)=\cFn(e_2)=1$, then let $\cFn(e'_1)=\cFn(e'_2)=1$, $\wFn(e'_1)=\wFn(e_1)$, and $\wFn(e'_2)=\sum_{e \in (E(P)\setminus \{e_1\})} \wFn(e)$.
        \item If there is exactly one $e_1 \in E(P)$ with $\cFn(e_1)=1$, then let $\cFn(e'_1)=1$, $\cFn(e'_2)=2$, $\wFn(e'_1)=\wFn(e_1)$, and $\wFn(e'_2)=\sum_{e \in (E(P)\setminus \{e_1\})} \wFn(e)$.
        \item If $\cFn(e) \ge 2$ for every edge $e \in E(P)$, then let $e_1$ be an edge with the largest weight among edges of $E(P)$.
            We let $\cFn(e'_1)=1$, $\cFn(e'_2)=2$, $\wFn(e'_1)=\wFn(e_1)$, and $\wFn(e'_2)=\sum_{e \in (E(P)\setminus \{e_1\})} \wFn(e)$ as in the previous case.
            Furthermore, in this case, we add to $G$ edge $e'_3=\{p_0,p_\ell\}$ and let $\cFn(e'_3)=1$ and $\wFn(e'_3)=\sum_{e \in E(P)} \wFn(e)$.
    \end{enumerate}
    In all cases, we leave the budget $\budget$ unchanged.
\end{rrule}
\begin{proof}[Safeness]
Let $P'=(p_1,\dots,p_{\ell-1})$ be the subpath of $P$ that does not include the vertices $p_0,p_\ell$.
\begin{enumerate}[a)]
    \item If there are distinct $e_1, e_2 \in E(P)$ with $\cFn(e_1)=\cFn(e_2)=1$.

        Observe that if two edges $e_1,e_2$ in $P$ have capacity only one, then at least one traversal of the vertices in $V(P)$ involves traversing whole of $P$ from $p_0$ to $p_\ell$ or vice-versa.
        Traversing some edges in $P$ more than once is not efficient since all the vertices in $V(P)$ have already been traversed and $P$ cannot be used as a bridge between two subgraphs, more than once.
        Thus no vertex/edge in $P$ is traversed twice.
        In such a case, we arbitrarily assign the weight of  edge $e_1$ to $e_1'$ and the sum of remaining edge weights to the other edge $e_2'$.
        And we assign capacities one to both $e_1'$ and $e_2'$.
        Note it is inevitable that $p_0,p_\ell$ are traversed if $V(P')\subseteq \WP$.
        If $S$ is a solution for $G$ then $S$ contains $P$.
        If $G'$ is the reduced graph, then a solution for $G'$ can be obtained from $S$ by replacing $P'$ by the vertex $x$ and including the edge weights of $e_1'$ and $e_2'$.
        Similarly, a solution $S'$ for the reduced graph $G'$ can be used to generate a solution of the same weight for $G$, by simply replacing $x$ with the path $P'$ with its original edge weights.

    \item If there is exactly one $e_1 \in E(P)$ with $\cFn(e_1)=1$, then let $\cFn(e'_1)=1$, $\cFn(e'_2)=2$, $\wFn(e'_1)=\wFn(e_1)$, and $\wFn(e'_2)=\sum_{e \in (E(P)\setminus \{e_1\})} \wFn(e)$.

        Let $e_1=(a,b)$.
        Observe that here $P$ can be traversed in two ways, the first being traversed just once from one end to another, and the second way is to traverse from one end of $P$, say $p_0$, to the vertex $a$, and then traverse back to $p_0$, and then later traverse from another end of $P$, i.e.
        $p_\ell$, to the vertex $b$, and then traverse back to $p_\ell$.
        Thus we assign the weight of $e_1$ to $e_1'$ while setting its capacity one, and the sum total of remaining edge weights to $e_2'$ with capacity two.

        Let $S$ be a solution for $G$.
        If $S$ traverses $P$ just once, we can obtain a solution with same weight $S'$ for $G'$ by traversing the edges $e_1',e_2'$ once.
        If $S$ traverses all edges in $P$ twice except for the edge $e_1$, then we can obtain $S'$ by ignoring the edge $e_1'$ and including the edge $e_2'$ twice in the solution.

        In the other way, let $S'$ be a solution for the reduced graph $G'$.
        If $S'$ contains $e_1',e_2'$ being traversed once, they can be replaced by $P$ to obtain a solution $S$ of the same weight for the original graph $G$.
        If $S'$ does not contain $e_1'$ but contains $e_2'$ being traversed twice, then we can obtain $S$ in the following way.
        Let $e_1=\{p_i,p_{i+1}\}$.
        First remove the two traversals of $e'_2$ from $S'$.
        Then find a place in $S'$ where $p_0$ is traversed and add a traversal of $p_0, \ldots, p_i, \ldots, p_0$ to that place.
        Finally find a place in $S'$ where $p_\ell$ is traversed and add a traversal of $p_\ell \ldots, p_{i+1}, \ldots, p_\ell$ to that place.
        This way each edge of $P$ except for $e_1$ is traversed exactly twice and the weights of $S$ and $S'$ are the same.

    \item If $\cFn(e) \ge 2$ for every edge $e \in E(P)$.

        Observe that if all edges in $P$ can be traversed twice, there are three ways to traverse the path $P$: (i) start from $p_0$ and traverse until $p_\ell$, (ii) start from $p_\ell$ and traverse until $p_0$, (iii) start from either $p_0$ and traverse until some vertex $p_i$ ($i<\ell$) and then traverse back to $p_0$, and later traverse from $p_\ell$ till $p_{i+1}$ (note that here traversing $p_i$ again is redundant and can be avoided) and then traverse back to $p_\ell$.
        Note that the motivation for traversing in the (iii) way, can be a heavy edge, say $e_h$, between $p_i$ and $p_{i+1}$.
        In such a case $e_h$ is naturally a/the heaviest edge in $P$.
        If there are more than one edges with heaviest weight, we arbitrarily pick one of them, and refer to it as $e_1$.
        Hence, we assign $e_1'$ with the weight of $e_1$ and capacity one, while the sum of weight of remaining edges gets assigned to $e_2'$.
        In order to cover the possibility of whole of $P$ being first traversed by (i) and then (ii), we introduce another edge $e_3'$ between $p_0$ and $p_\ell$ with the sum of all edges weights in $P$ and capacity two.

        On the one hand, if there exists a solution $S$ in $G$, then $V(P)$ belongs to $S$.
        If $S$ traverses whole of path $P$ once or twice, then we can replace the first traversal of $P$ in $S$ by the traversal of $e'_1, x, e'_2$ (or $e'_2, x, e'_1$, as necessary) and the eventual second traversal of $P$ in $S$ by the edge $e_3'$. This way we obtain a solution $S'$ of the same weight for the reduced graph $G'$.
        If $S$ excludes a heaviest edge in $P$, then $S'$ can be obtained by replacing $P$ in $S$ by $x$ and double traversal of $e_2'$ at $p_\ell$ and ignoring the edge $e_1'$ altogether.

        On the other hand, if $S'$ is a solution for $G'$ then $x$ belongs to $S'$ and, hence, $S'$ traverses $e_2'$ at least once.
        Then a solution $S$ can be obtained for $G$ as follows.
        If $S'$ also traverses $e_1'$ once, then we include in $S$ the path $P$ once between $p_0,p_\ell$.
        If $S'$ involves traversing $e_3'$ then we replace it by the path $P$ in $S'$.
        If $S'$ involves traversing only $e_2'$ twice, and not traversing $e_1'$ and $e_3'$, then we obtain $S$ in the following way.
        Let $e_1=\{p_i,p_{i+1}\}$.
        First remove the two traversals of $e'_2$ from $S'$.
        Then find a place in $S'$ where $p_0$ is traversed and add a traversal of $p_0, \ldots, p_i, \ldots, p_0$ to that place.
        Finally find a place in $S'$ where $p_\ell$ is traversed and add a traversal of $p_\ell \ldots, p_{i+1}, \ldots, p_\ell$ to that place.
        This way each edge of $P$ except for $e_1$ is traversed exactly twice and the weights of $S$ and $S'$ are the same.

        In all cases the obtained walk traverses all terminals and is of the same cost as $S'$.
\end{enumerate}
\end{proof}

We are now ready to prove \Cref{thm:WRP:FESkernel} which we repeat here for reader's convenience.
\thmWRPFESkernel*
\begin{proof}
After exhaustive application of \Cref{rrule:stop_condition,rrule:noLeafWithCapacityOneEdgeInWRP,rrule:noLeafnonTerminalInWRP,rrule:noLeafWithCapacityTwo+EdgeInWRP}, a reduced graph does not have vertices of degree at most $1$.
As the input graph has feedback edge set of size~$k$, we have $\sum_{v \in V} \deg(v) = 2|E| \le 2 (|V| + k)$ and thus
$
	\sum_{v \in V} \left( \deg(v) - 2 \right) \le 2k
$.
Therefore, it has at most $2k$ vertices of degree at least $3$. %
Now,
\[
	\sum_{\substack{v \in V\\ \deg(v) \ge 3}} \deg(v) = \sum_{v \in V} \left( \deg(v) - 2 \right) + 2 \cdot \sum_{\substack{v \in V\\ \deg(v) \ge 3}} 1 \le 2k + 2\cdot 2k = 6k
\]
and thus it has at most $6k$ edges incident with vertices of degree at least $3$.
Therefore, to bound the size of the reduced instance it remains to bound the number of degree-two vertices.
As each induced path is of length at most $2$ (by exhaustive application of  \Cref{rrule:short_circuitWPRFES,rrule:longPathFES}), it is implied that there are most $6k$ vertices of degree $2$ and, hence, $8k$ vertices and $9k$ edges in total.
Now, we use \Cref{lem:magic_kernel} to reduce the edge weights.
\end{proof}

\section{Polynomial Kernel with Respect to Vertex Cover Number}\label{sec:vc}

\subsection{\TSP}

In this (warm-up) section, we argue that \TSPshort admits a polynomial kernel with respect to the vertex cover number.
That is, we are going to present the most simple use-case of our reduction rules and therefore we can focus on the introduction of the core concept---the natural behavior.
We begin with the definition of a (natural) behavior, which is a formal description of how a vertex ``can behave'' in a solution.
Let $M$ be a vertex cover of $G$ of size $k = \vc(G)$ and let $R= V \setminus M$.

\begin{definition}\label{def:vc_behavior}
    For a vertex $r \in R$ \emph{a behavior of $r$} is a multiset $F \subseteq \{\{r,m\} \mid \{r,m\} \in E, m \in M\}$ containing exactly two edges (edge occurrences).
    We let $B(r)$ be the set of all behaviors of $r$.
    We naturally extend the weight function such that for a behavior $F \in B(r)$ we set its weight to $\wFn(F)=\sum_{e \in F} \wFn(e)$.
    For a vertex $r \in R$ its \emph{natural behavior $\natBeh(r)$} is a fixed minimizer of\/ \( \min \left\{ \wFn(F) \mid F \in B(r) \right\} \) that takes two copies of a minimum weight edge incident with $r$.
\end{definition}

The following lemma shows that in an optimal solution, most of the vertices of $R$ actually use some behavior.

\begin{lemma}\label{lem:vc_mostly_behavior}
Let $S$ be an optimal solution. Then the number of vertices $r \in R$ that are traversed more than once by $S$ is at most $k$.
\end{lemma}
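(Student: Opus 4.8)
The plan is to take $S$ to be, among all optimal solutions, one using the fewest edge occurrences (such a solution exists and, by \Cref{lem:savy_optimal_solution}, is nice), and to derive a contradiction from the assumption that $S$ traverses more than $k$ vertices of $R$ more than once. I would work entirely with the Eulerian multigraph $G_S$. A vertex visited exactly once by the closed walk $S$ has degree exactly $2$ in $G_S$, and a vertex visited more than once has degree at least $4$; in \TSPshort every vertex of $V$ is visited, so it suffices to control degrees. Let $R' \subseteq R$ be the set of vertices with $\deg_{G_S}(r) \ge 4$, and suppose for contradiction that $|R'| > k$.

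First I would isolate the part of $G_S$ that witnesses the high degrees: let $H$ be the submultigraph of $G_S$ consisting of all edge occurrences incident to $R'$. Since $R' \subseteq R$ is independent ($M$ is a vertex cover), no edge of $G_S$ has both endpoints in $R'$, so $|E(H)| = \sum_{r \in R'} \deg_{G_S}(r) \ge 4|R'|$, while $V(H) \subseteq R' \cup M$ gives $|V(H)| \le |R'| + k$. Using $|R'| > k$ we obtain $|E(H)| \ge 4|R'| > 2(|R'|+k) \ge 2|V(H)| > 2|V(H)|-2$, so \Cref{lem:many_edges_cycle} applies to $H$ and yields a cycle $C \subseteq H$ such that $H \setminus E(C)$ has the same connected components as $H$.

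Next I would show that $G_S \setminus E(C)$ is again an Eulerian multigraph that still serves as a valid solution. All degrees remain even because $C$ is a cycle. For connectivity, take any $u,v \in V(G_S)$ and a $u$--$v$ walk in the connected graph $G_S$; each edge of this walk that lies in $E(C)$ also lies in $E(H)$, hence its two endpoints lie in one component of $H$ and therefore, by the same-components property, in one component of $H \setminus E(C) \subseteq G_S \setminus E(C)$, so it can be replaced by a detour avoiding $E(C)$; concatenating the detours gives a $u$--$v$ walk in $G_S \setminus E(C)$. An Euler tour $S'$ of the (nonempty) connected even multigraph $G_S \setminus E(C)$ is then a closed walk visiting every vertex of $V(G_S) = V$: each vertex of $R' \cap V(C)$ retains degree at least $4-2 = 2$, and connectivity forbids any vertex of $M$ from becoming isolated. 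Finally $\wFn(S') = \wFn(S) - \wFn(C)$ and $|E(S')| = |E(S)| - |E(C)| < |E(S)|$, so $S'$ is an optimal solution (weights are positive by \Cref{rem:connectedInstancesAndPositiveWeights}, or one may invoke the minimality of $|E(S)|$ directly) with strictly fewer edge occurrences than $S$, contradicting the choice of $S$.

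The main obstacle — and the reason \Cref{lem:many_edges_cycle} is stated with the ``same connected components'' guarantee — is exactly the last step: after deleting $C$ we must not drop any vertex of $M$, since in \TSPshort these are waypoints too, and a naive cycle deletion could isolate such a vertex. The same-components property of $H \setminus E(C)$ is precisely what powers the connectivity lift-and-detour argument, which in turn ensures no vertex is lost. The remaining parts — the degree bookkeeping that triggers \Cref{lem:many_edges_cycle} and the fact that an Euler tour of a connected even multigraph realizes the desired walk — are routine given the toolbox already established.
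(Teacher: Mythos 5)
Your proof is correct and follows essentially the same route as the paper's: bound the degrees in the Eulerian multigraph $G_S$, extract a component-preserving cycle via \Cref{lem:many_edges_cycle} from the submultigraph around the multiply-visited vertices, delete it, and verify parity and connectivity to contradict optimality. The only cosmetic differences are that you take $H$ to be the edges incident to $R'$ rather than the submultigraph induced by $M\cup R'$, and that you make the tie-breaking by edge count explicit, which the paper handles implicitly via positive weights.
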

\begin{proof}
    Let $R_1$ be the set of vertices $r \in R$ that are traversed more than once by $S$.
    Assume $|R_1| > k$.
    Let $G_S$ be the corresponding multigraph formed by edges of $S$.
    Let $H$ be a submultigraph of $G_S$ induced by $M \cup R_1$.
    Since each vertex $r \in R_1$ is incident with at least $4$ edges of $G_S$, $H$ has at least $4|R_1|$ edges.
    As $4|R_1| = 2|R_1|+ 2|R_1|  > 2|R_1|+2k =2(|R_1|+k)$ and $H$ has $|R_1|+k$ vertices, by \Cref{lem:many_edges_cycle}, there is a cycle $K$ in $H$, such that $H'=H \setminus E(K)$ has the same connected components as $H$.
    Let $G'_S = G_S \setminus E(K)$.
    The removal of edges according to $K$ changes the degree of each vertex of $K$ by exactly $2$.
    Hence, all degrees in $G'_S$ are even.

    We claim that $G'_S$ is also connected.
    Indeed, if there are two vertices $x$ and $y$ in $V(G_S)$ which are in different connected components of $G'_S$, then each path from $x$ to $y$ in $G_S$ must contain an edge $\{u,v\}$ of $K$.
    However, by the choice of $K$, there is a path from $u$ to $v$ in $H'$ and hence also in $G'_S$, which can be used to obtain a walk between $x$ and $y$ in $G'_S$.

    Therefore $G'_S$ is Eulerian, and an Eulerian trail in $G'_S$ contradicts the optimality of $S$.
\end{proof}

Now, that we know what a behavior is, we can observe that each vertex can have one of two roles in the sought solution---they are either ``just attached'' using the natural behavior or they provide some connectivity between (two) vertices in the vertex cover.
The role of a vertex is formalized as follows.

\begin{definition}\label{def:vc_impact}
    Let $r \in R$ and $F \in B(r)$ be a behavior of $r$.
    We say that $F$ \emph{touches} a vertex $m$ of $M$ if $m$ has nonzero degree in $(\{r\} \cup M,  F)$.
    We call the set $\Imp(F)$ of touched vertices \emph{the impact} of behavior $F$.
\end{definition}

Let $\allImp_r= \{\Imp(F) \mid F \in B(r)\}$ be the set of all possible impacts of behaviors of $r$.
Let $\allImp = \bigcup_{r \in R} \allImp_r$.

Note that we have $1 \le |\Imp(F)| \le 2$ for each behavior $F$.
Therefore, we get $|\allImp| \le \binom{k}{2} +k \le k^2$.
Note also that for each $r \in R$ we have $|\Imp(\natBeh(r))| = 1$.

Now, we show that, in large instances, most of the vertices of~$R$ fall back to their natural behavior in an optimal solution.
Towards this, we take a solution that differs from this in the fewest possible vertices $r \in R$.
Then, we observe that if $r$ is not in the natural behavior, then it is attached to (at least) two vertices in~$M$.
If there are many such ``extra'' edges in a solution, we can apply \Cref{lem:many_edges_cycle}---this would yield a contradiction.
Thus, we get the following.

\begin{lemma}\label{lem:vc_mostly_natural}
    There is an optimal solution $S$ such that for all but at most $3k$ vertices $r \in R$ the solution contains exactly the edges of $\natBeh(r)$ among the edges incident with $r$.
\end{lemma}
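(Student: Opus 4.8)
The plan is to start from an optimal solution $S$ that is nice (\Cref{lem:savy_optimal_solution}) and, among all nice optimal solutions, minimizes the number of vertices $r \in R$ that do not use exactly the edges of $\natBeh(r)$ incident with $r$. Call this set of "deviating" vertices $R_{\text{bad}}$, and suppose for contradiction that $|R_{\text{bad}}| > 3k$. By \Cref{lem:vc_mostly_behavior}, all but at most $k$ vertices of $R$ are traversed exactly once by $S$, hence have degree exactly $2$ in $G_S$; so at least $|R_{\text{bad}}| - k > 2k$ of the deviating vertices have degree exactly $2$ in $G_S$ and thus use some behavior $F \in B(r)$ with $F \ne \natBeh(r)$. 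Call this subset $R'$. For $r \in R'$, either $F$ is two copies of a single edge $\{r,m\}$ that is not a minimum-weight edge at $r$ (so $\wFn(F) > \wFn(\natBeh(r))$, but $|\Imp(F)| = 1$), or $F$ consists of two distinct edges $\{r,m_1\},\{r,m_2\}$ with $m_1 \ne m_2$ (so $|\Imp(F)| = 2$). The first case I would rule out first: if $r \in R'$ uses two copies of a non-minimum edge $\{r,m\}$, I can swap them for two copies of a minimum-weight edge $\{r,m'\}$ incident with $r$; this keeps all degrees even, keeps the multigraph connected (since $r$ had degree $2$ and is reattached to $M$, which stays connected through the rest of $S$ — here I use that $M$ is a vertex cover, so $S$ restricted to edges not incident with $r$ still spans a connected subgraph on the other vertices, and $m'$ lies in it), and does not increase the weight, contradicting either optimality or the minimality of $|R_{\text{bad}}|$. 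Hence every $r \in R'$ attaches to two \emph{distinct} vertices of $M$.

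Now consider the submultigraph $H$ of $G_S$ induced by $M \cup R'$. Each $r \in R'$ contributes exactly $2$ edges to $H$ (both its incident edges in $G_S$ go to $M$), so $|E(H)| \ge 2|R'| > 4k \ge 2(|R'| + k) - 2|R'| + \dots$ — more carefully, $H$ has $|R'| + k$ vertices and at least $2|R'| > 2|R'| + 2k - 2 \ge 2(|R'| + k) - 2$ edges once $|R'| > 2k$... let me instead count directly: $|E(H)| \ge 2|R'|$ and $2|V(H)| - 2 = 2|R'| + 2k - 2 < 2|R'|$ precisely when $2k - 2 < 0$, which fails. So I actually need the edges among $M$ too, or a sharper bound. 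The right approach: every $r \in R'$ is a degree-$2$ vertex of $H$ lying on a path between two $M$-vertices; contract each such $r$ to get a multigraph $H^*$ on vertex set $M$ with $|E(H^*)| \ge |R'| > 2k \ge 2|M|$, so $|E(H^*)| > 2|V(H^*)| - 2$ and by \Cref{lem:many_edges_cycle} there is a cycle $K^*$ in $H^*$ whose removal preserves connected components. Uncontracting, $K^*$ corresponds to a closed walk $K$ in $G_S$ all of whose internal vertices lie in $R'$ (degree $2$) and which I may take to be a cycle (a genuinely repeated vertex/edge would already give a shorter nice solution by \Cref{lem:many_edges_cycle} applied to $G_S$ itself). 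Removing $E(K)$ from $G_S$ decreases every vertex degree on $K$ by $2$, keeping all degrees even; and connectivity is preserved because any $M$--$M$ edge of $H^*$ destroyed by $K^*$ is backed by an alternative $M$--$M$ path in $H^* \setminus E(K^*)$, which lifts to a walk in $G_S \setminus E(K)$. So $G_S \setminus E(K)$ is Eulerian and still visits every vertex of $M$; the vertices of $R'$ on $K$ are dropped, but they are non-waypoints only if... wait — in \TSPshort every vertex is a waypoint. So I must not drop them: instead, reattach each dropped $r \in R'$ via its natural behavior $\natBeh(r)$ (two copies of a minimum edge at $r$). This preserves evenness and connectivity (same argument as before, using that $M$ stays connected), and the weight change is $\wFn(\text{new}) - \wFn(\text{old}) = \sum_{r \in V(K) \cap R'} \bigl(2\wFn_{\min}(r) - \wFn(F_r)\bigr) \le 0$ since each $F_r$ has weight at least that of $\natBeh(r)$. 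Thus I obtain a nice solution of weight at most $\wFn(S)$ in which strictly more vertices use their natural behavior (all the reattached $r$'s now do, and no previously-natural vertex changed) — contradicting the minimality of $|R_{\text{bad}}|$.

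The main obstacle is the bookkeeping in the last paragraph: ensuring that the cycle $K$ extracted from the "extra attachment" structure really can be chosen so that removing it and reattaching its $R'$-vertices naturally (i) keeps the multigraph connected and (ii) does not increase the weight, simultaneously. The connectivity part is the delicate one — it requires the "alternative path" property guaranteed by \Cref{lem:many_edges_cycle} to be used at the contracted level $H^*$ and then lifted back, and it relies crucially on $M$ being a vertex cover (so that removing/reattaching low-degree $R$-vertices never disconnects the part of the solution supported on $M$ and the untouched vertices). I would also need the counting threshold $3k$: the "$k$" slack absorbs the multiply-visited vertices from \Cref{lem:vc_mostly_behavior}, and the "$2k$" slack is what makes $|E(H^*)| > 2|V(H^*)|-2$ go through after contraction, so the constant $3$ in the statement is exactly $1 + 2$.
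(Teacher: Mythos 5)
Your proposal is correct and follows essentially the same route as the paper's proof: first dispatch the at most $k$ multiply-visited vertices via \Cref{lem:vc_mostly_behavior}, then eliminate degree-two deviators with a single neighbour by a direct swap to $\natBeh(r)$, and finally project the remaining deviators onto an auxiliary multigraph on $M$ (your contraction to $H^*$ is exactly the paper's labelled multigraph $H$) so that \Cref{lem:many_edges_cycle} yields a removable cycle whose $R$-vertices are reattached naturally. The counting $3k = k + 2k$ and the evenness/connectivity/weight bookkeeping all match the paper's argument.
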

\begin{proof}
    Let $S$ be an optimal solution and $G_S$ be the multigraph formed by the edges used by~$S$.
    We assume that $S$ is chosen such that the number of vertices $r \in R$ such that the multiset of edges of $G_S$ incident with $r$ differs from $\natBeh(r)$ is minimized.
    Let $R_1$ be the set of vertices $r \in R$ that are traversed more than once by $S$.
    By \Cref{lem:vc_mostly_behavior} we have $|R_1| \le k$.
    Let $R_2$ be the set of vertices  $r \in R$ that are traversed exactly once by $S$, but for which the set of incident edges differs from $\natBeh$.

    If there is a vertex $r \in R_2$ with exactly one neighbor $u$ in $G_S$, then we can remove from $G_S$ all edges incident with $r$ and add the edges of $\natBeh(r)$ to obtain a graph $G_S'$.
    Let $v \in \Imp(\natBeh(r))$.
    Since $G_S$ is Eulerian, $\deg_{G_S} r$ is even.
    Hence the removal of edges changes the degree of $u$ by an even number, and as $u$ was the only neighbor of $r$, $G_S \setminus \{r\}$ is an Eulerian graph. As the $\natBeh(r)$ consist from two copies of the same edge, its addition changes the degree of $v$ and $r$ by exactly $2$ and makes the graph again connected.
    I.e., $G'_S$ is an Eulerian graph, let $S'$ be an Eulerian trail in $G'_S$.
    Observe that $S'$ is a solution to the instance.
    Since $S$ contained at least $2$ edges incident with $r$ and $S'$ only contains $\natBeh(r)$, the cost of $S'$ is at most that of $S$. Moreover, for $S'$ now there are lesser vertices $r' \in R$ such that the multiset of edges of $G'_S$ incident with $r'$ differs from $\natBeh(r')$, contradicting the choice of $S$.

    Hence every $r\in R_2$ has exactly two neighbors.
    Let $H$ be a multigraph on the vertex set $M$ and initially no edges formed by the following process.
    For each $r \in R_2$, let $u,v$ be its two neighbors. Then add to $A$ the edge $\{u,v\}$ labeled with $r$.
    If $H$ has at most $2k$ edges, then there are at most $2k$ vertices $r \in R_2$ and we are done.

    If $|E(H)| > 2k$, then by \Cref{lem:many_edges_cycle}, there is a cycle $K$ in $H$, such that $H'=H \setminus E(K)$ has the same connected components as $H$.
    For each $r$ such that an edge $\{u,v\}$ labeled $r$ belongs to $K$, we remove the edges $\{r,u\}$ and $\{r,v\}$ from $G_S$ to form $G'_S$.
    For each such $r$ we add $\natBeh(r)$ to $G'_S$ to obtain $G''_S$.
    The removal of edges according to $K$ changes the degree of each vertex in $R$ by an even number and the degree of each vertex of $K$ by exactly $2$.
    The addition of $\natBeh(r)$ increases the degree of $r$ and the vertex in $\Imp(\natBeh(r))$ by exactly $2$.
    Hence, all degrees in $G''_S$ are even.

    We claim that $G''_S$ is also connected.
    Indeed, if there are two vertices $x$ and $y$ in $M$ which are in different connected components of $G''_S$, then there must be $u$ and $v$ in $M$ such that $\{u,v\} \in E(K)$ and $u$ and $v$ are in different connected components of $G''_S$.
    However, then in $H'$ there is a path from $u$ to $v$, which can be translated to a path from $u$ to $v$ in $G_S$.
    Moreover, this path is not touched by the removals, contradicting $u$ and $v$ being in different components.
    Also we ensured that no $r \in R$ is isolated in $G''_S$, hence, each $r \in R$ is in the same connected component as some $x \in M$.

    Therefore $G''_S$ is Eulerian. Let $S''$ be an Eulerian trail in $G''_S$.
    Observe that $S''$ is a solution to the instance.
    Since each addition of $\natBeh(r)$ is preceded by a removal of at least two edges incident with $r$, the cost of $S''$ is at most that of $S$.
    If the cost is strictly lower, then this contradicts the optimality of $S$.
    Otherwise, in $S''$ there are lesser vertices $r \in R$ such that the multiset of edges of $G''_S$ incident with $r$ differs from $\natBeh(r)$, contradicting the choice of $S$.
\end{proof}

Now, we know that there are only a few vertices that behave unnaturally in an optimal solution, we would like to keep these in the kernel.
We arrive at the question of which vertices to keep.
To resolve this question, we observe that if a vertex deviates from its natural behavior, we (possibly) have to pay for this some extra price (which comes in the exchange of the provided connectivity).
Thus, we would like to keep sufficiently many vertices for which this deviation is cheap.
To this end, we first formalize the price.

\begin{definition}
	For $r \in R$ and an impact $I \in \allImp$ let \emph{the price $P(r,I)$ of change} from $\natBeh(r)$ to $I$ at $r$ be $\wFn(F_I) - \wFn(\natBeh(r))$ if there is a behavior $F_I \in B(r)$ with $\Imp(F_I)=I$ and we let it be $\infty$ otherwise.
\end{definition}

Note that if there is a behavior $F_I \in B(r)$ with $\Imp(F_I)=I$, then it is unique.

Now, based on this, we can provide the following reduction rule we employ.

\begin{rrule}\label{rul:vc_rule}
    For each $I \in \allImp$ if there are at most $3k$ vertices $r \in R$ with finite $P(r,I)$, then mark all of them.
    Otherwise mark $3k$ vertices $r \in R$ with the least $P(r,I)$.

    For each unmarked $r \in R$, remove $r$ and decrease $\budget$ by $\wFn(\natBeh(r))$.
\end{rrule}
\begin{proof}[Safeness]
    Let $(G,\wFn,\budget)$ be the original instance and $(\widehat{G}, \widehat{\wFn}, \widehat{\budget})$ be the new instance resulting from the application of the rule.
    Note that $\widehat{\wFn}$ is just the restriction of $\wFn$ to $\widehat{G}$ which is a subgraph of $G$.
    Let $R^-$ be the set of vertices of $R$ removed by the rule.
    Note that $\widehat{\budget} = \budget - \sum_{r \in R^-} \wFn(\natBeh(r))$.
    We first show that if the new instance is a \YESi, then so is the original one.

    Let $\widehat{S}$ be a solution walk in the new instance and let $\widehat{G}_S$ be the corresponding multigraph formed by the edges of $\widehat{S}$.
    Note that the total weight of $\widehat{G}_S$ is at most $\widehat{\budget}$.
    We construct a multigraph $G_S$ by adding to $\widehat{G}_S$ for each $r \in R^-$ the vertex $r$ together with the edge set $\natBeh(r)$.
    Since $\widehat{G}_S$ is connected and each $\natBeh(r)$ is incident with a vertex of $M \subseteq V(\widehat{G}_S)$, $G_S$ is also connected.
    As each addition increases the degree of the involved vertex by exactly $2$ and the degree of each vertex is even in $\widehat{G}_S$, it follows that the degree of each vertex is even in $G_S$.
    Hence $G_S$ is Eulerian and contains all vertices of $G$.
    Since the weight of the added edges is exactly $\sum_{r \in \mathcal{R}^-} \wFn(\natBeh(r))$ it follows that $(G,\wFn,\budget)$ is a \YESi.

    Now suppose that the original instance $(G,\wFn,\budget)$ is a \YESi.
    \begin{claim}\label{cla:RR5_exactly_nat_beh}
    There is an optimal solution $S$ for $(G,\wFn,\budget)$ such that each $r \in R^-$ is incident exactly to the edges of $\natBeh(r)$ in $S$.
    \end{claim}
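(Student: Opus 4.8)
The plan is to take an optimal solution in which as few vertices as possible deviate from their natural behaviour, and then to \emph{swap} every deviating vertex of $R^-$ with one of the marked vertices that \Cref{rul:vc_rule} deliberately kept. First I would invoke \Cref{lem:vc_mostly_natural} to get an optimal solution $S$ in which at most $3k$ vertices $r\in R$ are \emph{bad}, i.e.\ incident in $G_S$ to a multiset of edges different from $\natBeh(r)$; this is the only place the threshold $3k$ in \Cref{rul:vc_rule} enters. Among all optimal solutions with at most $3k$ bad vertices I would choose $S$ minimising the number of bad vertices lying in $R^-$; if this number is $0$ we are done, so assume some $r\in R^-$ is bad.

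Since $S$ visits $r$, it makes at least one \emph{transit} at $r$, entering along an edge $\{r,a\}$ and leaving along $\{r,b\}$ with $a,b\in M$ (possibly $a=b$); let $I=\{a,b\}\in\allImp$ be the impact of the behaviour $F^r_I=\{\{r,a\},\{r,b\}\}\in B(r)$, so $P(r,I)<\infty$. As $r\in R^-$ it was not marked for $I$, hence strictly more than $3k$ vertices have finite price for $I$ and each of the $3k$ vertices marked for $I$ has price at most $P(r,I)$. Since at most $3k$ vertices are bad and $r$ is bad but is not among those $3k$ marked vertices, at least one marked vertex $r^*$ is \emph{good}; being good, it is incident in $G_S$ exactly to two copies of a cheapest edge $\{r^*,m^*\}$. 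Now perform the surgery: delete the transit $\{r,a\},\{r,b\}$ at $r$ and the two copies of $\{r^*,m^*\}$, and add the two edges of $F^{r^*}_I=\{\{r^*,a\},\{r^*,b\}\}$; and if $r$ was visited only once, additionally add $\natBeh(r)$ (two copies of some $\{r,m_r\}$) to keep $r$ in the walk. Every vertex degree changes by $\pm2$, so the multigraph stays even and still contains every vertex, and a telescoping computation using $\wFn(F^r_I)=P(r,I)+\wFn(\natBeh(r))$, $\wFn(F^{r^*}_I)=P(r^*,I)+\wFn(\natBeh(r^*))$ and $P(r^*,I)\le P(r,I)$ shows the weight does not increase --- it strictly decreases when $r$ was visited more than once, contradicting optimality of $S$; and when $r$ was visited once the new walk is again optimal, $r$ now uses exactly $\natBeh(r)$, $r^*\notin R^-$, no other vertex of $R^-$ changed its incident edges, so the count of bad vertices in $R^-$ strictly dropped, contradicting the choice of $S$.

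The step I expect to be the main obstacle is checking that the surgery keeps the multigraph \emph{connected}, and hence Eulerian. The key observation is that deleting one transit $a,r,b$ from the closed walk $S$ and short-circuiting yields a closed walk using all remaining edges that still visits every vertex except possibly $r$, so $a$ and $b$ remain connected, and adding $\natBeh(r)$ --- two parallel copies of $\{r,m_r\}$ --- re-attaches $r$ through $m_r\in M$. Deleting $\natBeh(r^*)$ isolates $r^*$, but it cannot disconnect $m^*$: since $r^*$ is good, all its $G_S$-edges go to $m^*$, so if $m^*$ had no further edge then $\{r^*,m^*\}$ would be an entire connected component of $G_S$ --- impossible because the instance is connected and has at least two waypoints (\Cref{rrule:stop_condition}) --- and adding $\{r^*,a\},\{r^*,b\}$ then returns $r^*$ to the component of $a$ and $b$. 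The remaining bookkeeping (the coincidences $m^*\in\{a,b,m_r\}$, and the case $a=b$, in which the swap degenerates into simply re-attaching $r$ by $\natBeh(r)$) is dispatched by the same parity-and-connectivity reasoning.
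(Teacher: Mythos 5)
Your proposal is correct and follows essentially the same route as the paper's proof: invoke \Cref{lem:vc_mostly_natural}, pick an extremal optimal solution, locate a deviating $r\in R^-$, use the $3k$ marking threshold together with the $3k$ bound on deviating vertices to find a marked, naturally-behaving $r^*$ with $P(r^*,I)\le P(r,I)$, and swap. The only differences are cosmetic: the paper performs the exchange by splicing the closed walk directly (so connectivity is automatic) and minimizes the number of non-natural \emph{edges} incident with $R^-$ rather than the number of deviating vertices, but both potentials and both surgeries yield the same contradiction.
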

    \begin{claimproof}
    Let $S$ be a solution guaranteed by \Cref{lem:vc_mostly_natural}, that is, for all but at most $3k$ vertices $r \in R$ the solution contains exactly the edges of $\natBeh(r)$ among the edges incident with $r$.
    Moreover, assume that it further minimizes the number of edges of $G_S$ incident with $R^-$ which are not part of $\bigcup_{r \in R^-} \natBeh(r)$.
    If it contains no such edges, then we are done.
    Suppose that it does contain such edges.

    Let $r_1$ be a vertex incident with an edge of $G_S$ not part of $\natBeh(r_1)$ and let $e_1$ be an arbitrary such edge.
    Consider the occurrence of $r_1$ in $S$ directly preceding or succeeding the traversal of edge $e_1$ and  let $e_2$ be the other edge directly surrounding the occurrence.
    Let $I=\Imp(\{e_1, e_2\})$.

    Since $r_1 \in R^-$, there are at least $3k$ marked vertices $r' \in R$ with $P(r',I) \le  P(r_1,I)$.
    Since at most $3k$ vertices $r'$ are incident with edges different from $\natBeh(r')$ in $G_S$, and $r_1$ is one of them, there is a vertex $r_2 \in R \setminus R^-$ such that $P(r_2,I) \le  P(r_1,I)$ and $r_2$ is only incident with edges of $\natBeh(r_2)$ in $G_S$.
    Let $F_{I} \in B(r_2)$ be the behavior with $\Imp(F_{I})=I$.
    In $S$ we replace the occurrence of edges $e_1$ and $e_2$ and vertex $r_1$ with the edges of $F_{I}$ in appropriate order and an occurrence of vertex $r_2$.
    If this was the only occurrence of $r_1$ in $S$, then we find the occurrence of $v \in \Imp(\natBeh(r_1))$, split it into two and add the edges $\natBeh(r_1)$ and an occurrence of $r_1$ in between.
    Furthermore, find the original occurrence of $r_2$ in $S$, remove it together with the surrounding edges of $\natBeh(r_2)$ and merge the surrounding occurrence of the vertex $u \in \Imp(\natBeh(r_2))$.
    The resulting trail $\widehat{S}$ visits all the vertices, let $\widehat{G}_S$ be the corresponding  multigraph.

    We removed the edges $e_1$, $e_2$, and $\natBeh(r_2)$ from $G_S$ and added the edges of $F_{I}$ and possibly $\natBeh(r_1)$. However, since $\wFn(F_I) - \wFn(\natBeh(r_2))=P(r_2,I) \le  P(r_1,I)=\wFn(\{e_1, e_2\}) - \wFn(\natBeh(r_1))$, we have $\wFn(F_I) + \wFn(\natBeh(r_1)) \le \wFn(\{e_1, e_2\}) + \wFn(\natBeh(r_2))$ and the cost of $\widehat{S}$ is at most the cost of $S$.
    Hence $\widehat{S}$ is also optimal and there are less edges in $\widehat{G}_S$ incident with $R^-$ which are not part of $\bigcup_{r \in R^-} \natBeh(r)$ (namely, we removed edges $e_1$ and $e_2$ and did not introduce any), contradicting our choice of $S$.
    \end{claimproof}

    Let $S$ be a solution for $(G,\wFn,\budget)$ as in the claim and $G_S$ the corresponding multigraph.
    Let $\widehat{G}_S$ be obtained from $G_S$ by removing the edges of $\natBeh(r)$ and vertex $r$ for each $r \in R^-$.
    This reduces the total weight by exactly $\sum_{r \in R^-} \wFn(\natBeh(r))$, hence, as $G_S$ is of weight at most $\budget$, $\widehat{G}_S$ is of weight at most $\widehat{\budget}$.
    Furthermore, as each $r \in R^-$ has only one neighbor in $G_S$, $\widehat{G}_S$ is connected.
    Moreover, each removal decreases the degree of exactly one remaining vertex by exactly $2$, hence all degrees in $\widehat{G}_S$ are even.
    Thus $\widehat{G}_S$ is Eulerian, yielding a solution $\widehat{S}$ for $(\widehat{G}, \widehat{w}, \widehat{\budget})$.
    This completes the proof.
\end{proof}

Since $|\allImp| \le k^2$, the following observation is immediate.

\begin{observation}\label{obs:TSP:vc:kernelBound}
    After \Cref{rul:vc_rule} has been applied, the number of vertices in $R$ is bounded by $3k^3$ and, hence, the total number of edges is $\O(k^4)$.
\end{observation}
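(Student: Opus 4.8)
The plan is a direct counting argument, so the write-up should be short. First I would bound $|R|$ after the rule. By the construction of \Cref{rul:vc_rule}, every vertex of $R$ that survives the rule is marked on behalf of at least one impact $I \in \allImp$, and for each fixed $I$ the rule marks at most $3k$ vertices of $R$ (either all vertices with finite $P(r,I)$, when there are at most $3k$ such, or the $3k$ cheapest ones otherwise). A single vertex may be marked simultaneously on behalf of several impacts, but that only decreases the number of \emph{distinct} marked vertices, so $3k \cdot |\allImp|$ is a valid over-count. Using the bound $|\allImp| \le k^2$ established just before the rule, we obtain that at most $3k \cdot k^2 = 3k^3$ vertices of $R$ are marked, and since all unmarked vertices are deleted, $|R| \le 3k^3$ in the reduced instance; together with $|M| = k$ this gives $|V| = \O(k^3)$.

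Next I would bound the number of edges. Since $M$ is a vertex cover of $G$ of size $k$, the reduced graph contains no edge with both endpoints in $R$. The edges with both endpoints in $M$ number at most $\binom{k}{2} \le k^2$, and the edges with exactly one endpoint in $M$ number at most $k \cdot |R| \le 3k^4$, because each of the at most $3k^3$ remaining vertices of $R$ has at most $k$ neighbours, all of them in $M$. Summing the two contributions yields $\O(k^4)$ edges, as claimed. (Combined with \Cref{lem:magic_kernel} applied to these $\O(k^4)$ edges, this also gives the $\O(k^{16})$ bit-size bound of \Cref{thm:TSPVertexCoverKernel}, though that is outside the scope of the observation itself.)

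I do not foresee any real obstacle here: the statement is an immediate bookkeeping consequence of the shape of \Cref{rul:vc_rule} and the previously derived inequality $|\allImp| \le k^2$. The only point that deserves a word of care is that the marking must be counted per vertex rather than per $(r,I)$ pair, which is precisely why $3k \cdot |\allImp|$ is an upper bound rather than an exact count.
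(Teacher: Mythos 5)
Your proposal is correct and follows essentially the same argument as the paper: at most $3k$ marked vertices per impact, $|\allImp|\le k^2$ impacts, hence $|R|\le 3k^3$, and then $\binom{k}{2}$ edges inside $M$ plus at most $k\cdot 3k^3$ edges between $M$ and $R$ (no edges inside $R$ since $M$ is a vertex cover). No gaps; the remark about counting per vertex rather than per $(r,I)$ pair is a fine but not essential clarification.
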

\begin{proof}
    For each $I \in \allImp$ we mark at most $3k$ vertices.
    As $|\allImp| \le k^2$, we mark at most $3k^3$ vertices in total.
    Since the unmarked vertices removed, there are at most $3k^3$ vertices in $R$.
    Thus, we have at most $\binom{k}{2}$ edges within $M$ and at most $k \cdot 3k^3$ edges between $M$ and $R$, which are all the edges of the graph and the bound follows.
\end{proof}

We conclude this section in the following theorem (which follows using \Cref{lem:magic_kernel}).

\thmTSPVCkernel*
\begin{proof}
	By \Cref{obs:TSP:vc:kernelBound}, after the application of \Cref{rul:vc_rule}, there are $\O(k^4)$ edges.
	To see that the rule can be applied in polynomial time, observe, that $\natBeh(r)$ can be computed in $\O(k)$ time for each $r$. Then $P(r,I)$ can be computed in constant time for each $r$ and $I$, which allows us to apply the rule in polynomial time.
	Finally, we apply \Cref{lem:magic_kernel} to obtain a kernel of size $\O(k^{16})$ in polynomial time.
\end{proof}

\subsection{\WRP}\label{sec:vc_WRP}

In this section, we use similar approach to \WRPshort as we did for \TSPshort.
We use this opportunity to focus on the differences between these problems.
Again, we begin with the definition of (natural) behavior.
This time we also consider slightly more complicated behaviors.

\begin{definition}
  For a vertex $r \in R$ \emph{a behavior of $r$} is a multiset $F \subseteq \{\{r,c\} \mid \{r,c\} \in E, c \in M\}$ containing two or four edges (edge occurrences) such that each edge is used at most twice and, if $F$ contains twice the same edge~$e$, then $\cFn(e) > 1$.
  Furthermore, if $r \notin \WP$, then $\emptyset$ is also a behavior of $r$.
  We let $B(r)$ be the set of all behaviors of $r$ and $B_2(r)$ the set of behaviors consisting of at most $2$ edge occurrences.
  For a vertex $r \in R$ \emph{its natural behavior $\natBeh(r)$} is a fixed minimizer of\/ \( \min \left\{ \wFn(F) \mid F \in B(r) \right\} \).
\end{definition}

We observe that $\natBeh(r) \in B_2(r)$ for every vertex $r \in R$ and $w(\natBeh(r)) = 0$ for any non-terminal $r$. %

Note that, in this case, we can have vertices $r \in R$ whose natural behavior is attached to two vertices in the vertex cover.
This will later yield one difference in the presented reduction rule.
As we consider more complicated behaviors, their impact is also more complicated.

\begin{definition}
    Let $r \in R$ and $F \in B(r)$ be a behavior of $r$.
    We say that $F$ \emph{touches} a vertex $m$ of $M$ if $m$ has nonzero degree in $(\{r\} \cup M,  F)$.
    Let $T(F)$ be the set of touched vertices.
    Let $D(F)$ be a function $T(F) \to \{1,2\}$, which assigns to each vertex of $T(F)$ its degree in $(\{r\} \cup M,  F)$.
    We call the pair $\Imp(F)=(T(F),D(F))$ \emph{the impact} of behavior $F$.
\end{definition}

Let $\allImp_r= \{\Imp(F) \mid F \in B(r)\}$ be the set of all possible impacts of behaviors of $r$ and $\allImp_{2,r}= \{\Imp(F) \mid F \in B_2(r)\}$.
Let $\allImp = \bigcup_{r \in R} \allImp_r$ and $\allImp_2 = \bigcup_{r \in R} \allImp_{2,r}$.
Since $0 \le |\Imp(F)| \le 4$ for each behavior $F \in B(r)$ and every $r$ and $0 \le |\Imp(F)| \le 2$ for each behavior $F \in B(r)$ and every $r$, we have $|\allImp| = \O(k^4)$ and $|\allImp_2| = \O(k^2)$.

Again, our task will be to keep sufficiently many vertices to provide some extra connectivity needed among the vertices in the vertex cover.
To this end, we would like to keep those that do so at a low cost.
The definition of the price has to be slightly adjusted.

\begin{definition}
    For $r \in R$ and a pair $I, I' \in \allImp$ let \emph{the price $P(r,I,I')$ of change} from $I$ to $I'$ at $r$ be $\wFn(F_{I'}) - \wFn(\natBeh(r))$ if there is a behavior $F_{I'} \in B(r)$ with $\Imp(F_{I'})=I'$ and $I= \Imp(\natBeh(r))$ and we let it be $\infty$ if some of the conditions is not met.
\end{definition}

Note that again if there is a behavior $F_{I'} \in B(r)$ with $\Imp(F_{I'})=I'$, then it is unique. Also note that, as $\Imp(\natBeh(r)) \in \allImp_2$ for every $r$, the price is infinite whenever $I \notin \allImp_2$.

Now, we are ready to present the reduction rule.
Note that this time, we face two difficulties---we are not sure that a solution will visit all the vertices in the vertex cover and there are vertices in~$R$ for which the natural behavior is attached to two vertices in~$M$.
To keep a record of the vertices that are marked towards achieving a different goal, we mark them in different colors.
Thus, the marked vertices of \Cref{rul:vc_rule} are now marked in \emph{red}.

Suppose first, that every vertex in $M$ that is incident with some natural behavior is in $W$.
Then, to follow the lines of safeness for \Cref{rul:vc_rule}, it is sufficient that we ensure that the number of removed vertices from $R$ is even.
Then, when we return these vertices back in their natural behavior, we change the degree of any vertex in~$M$ in the solution by an even number and thus we will have a solution.
We also ensure that at least one vertex with the same impact of natural behavior remains in the instance to provide the connectivity.
This is the task of the vertices marked in \emph{green}.

It remains to secure that every vertex in $M$ that is incident with some natural behavior for many vertices in~$R$ is in $W$.
Towards this, we prove that if a vertex $m \in M$ is needed for a natural behavior for many vertices in~$R$, it is safe to mark $m$ as a terminal.
If this is not the case, we mark the vertices $r \in R$ in \emph{yellow}.
Those we have to keep, however, we will show there are only few of these.

\begin{rrule}\label{rule:wrp_vc}
    For each pair $I \in \allImp_2$, $I' \in \allImp$ of impacts if there are at most $2|\allImp_2|+k$ vertices $r \in R$ with finite $P(r,I,I')$, then mark all of them in red.
    Otherwise mark in red $2|\allImp_2|+k$ vertices $r \in R$ with the least $P(r,I,I')$.

    For each $I = (T,D) \in \allImp_2$ with $T \nsubseteq \WP$, if there are at most $2|\allImp_2|$ unmarked vertices $r \in R$ with $\Imp( \natBeh(r)) = I$, then mark all of them in yellow.
    Otherwise add $T$ to $\WP$.

    For each $I \in \allImp_2$, if there are unmarked vertices $r \in R$ with $\Imp(\natBeh(r))=I$, then if the number of such vertices is odd, then mark one arbitrary such vertex in green.
    If the number of such vertices is even, then mark two arbitrary such vertices in green.

    If $\WP$ was not changed, then for each unmarked $r \in R$, remove $r$ and decrease $\budget$ by $\wFn(\natBeh(r))$.
\end{rrule}
\begin{proof}[Safeness]
    Let $(G,\WP,\wFn,\cFn,\budget)$ be the original instance and $(\widehat{G},\widehat{\WP},\widehat{\wFn},\widehat{\cFn},\widehat{\budget})$ be the new instance resulting from the application of the rule.
    Note that $\widehat{\wFn}$ and $\widehat{\cFn}$ are just the restrictions of $\wFn$ and $\cFn$ to $\widehat{G}$ which is a subgraph of $G$, respectively.
    Let $R^-$ be the set of vertices of $R$ removed by the rule.
    Note that for each $I \in \allImp$ there is an even number of vertices $r \in R^-$ with $\Imp(\natBeh(r))=I$ and that $\widehat{\budget} = \budget - \sum_{r \in R^-} \wFn(\natBeh(r))$.
    We first show that if the new instance is a \YESi, then so is the original one.

    Let $\widehat{S}$ be a solution walk in the new instance and let $\widehat{G}_S$ be the corresponding multigraph formed by edges of $\widehat{S}$.
    Note that the total weight of $\widehat{G}_S$ is at most $\widehat{\budget}$.
    If $\widehat{\WP} \neq \WP$, then $R^-= \emptyset$, $G=\widehat{G}$, $\widehat{\budget}=\budget$, and $S$ is also a solution for the original instance.
    Otherwise we have for each $r \in R^-$ that $T(\natBeh(r)) \subseteq \WP$.
    Then we construct a multigraph $G_S$ by adding to $\widehat{G}_S$ for each $r \in R^-\cap \WP$ the vertex $r$ together with the edge set $\natBeh(r)$.
    Since $\widehat{G}_S$ is connected and each nonempty $\natBeh(r)$ is incident with a vertex of $\WP \cap M \subseteq V(\widehat{G}_S)$, $G_S$ is also connected.
    As we are adding an even number of vertices with the same $\Imp(\natBeh(r))$, the degree of each vertex in $M$ is changed by an even number due to the additions.
    As the degree of each vertex is even in $\widehat{G}_S$, it follows that the degree of each vertex is even in $G_S$.
    Hence $G_S$ is Eulerian and contains all vertices of $\WP$.
    Since the weight of the added edges is exactly $\sum_{r \in \mathcal{R}^-} \wFn(\natBeh(r))$ it follows that $(G,\WP,\wFn,\cFn,\budget)$ is a \YESi.

    Now suppose that the original instance $(G,\WP,\wFn,\cFn,\budget)$ is a \YESi.

    Suppose first that $\widehat{\WP} \neq \WP$.
    Let $S$ be a solution that traverses all vertices of $\WP$ and the most vertices of $\widehat{\WP} \setminus \WP$ among all solutions and $G_S$ be the corresponding multigraph formed by edges of $S$.
    If $S$ traverses all the vertices of $\widehat{\WP}$, then it is also a solution for $(\widehat{G},\widehat{\WP},\widehat{\wFn},\widehat{\cFn},\widehat{\budget})$.
    Suppose $S$ does not traverse a vertex $u \in \widehat{\WP} \setminus \WP$.
    Since $u$ was added to $\widehat{\WP}$, there is an impact $I = (T,D) \in \allImp_2$ with $u \in T$ and at least $2|\allImp_2|+1$ vertices $r\in R$ such that $I= \Imp( \natBeh(r))$.
    Let $R_I$ be the set of all of them.
    Note that $R_I \subseteq \WP$ and, thus, $S$ traverses each vertex in~$R_I$.
    For each $r \in R_I$ let $A_r$ be two arbitrary edges (edges occurrences) of $S$ incident with~$r$.
    That is, $A_r \in B_2(r)$.
    Since $|R_I| \ge 2|\allImp_2|+1$, there are three vertices $r_1, r_2, r_3 \in R_I$ such that $\Imp(A_{r_1})=\Imp(A_{r_2})=\Imp(A_{r_3})$.
    We now want to change the behavior of $r_1$ and $r_2$ to a ``blend'' between the natural behavior and the behavior $A_{r_1}$ or $A_{r_2}$. To this end we first distinguish two cases.

    If $T(A_{r_1})=T(A_{r_2})=\{a\}$, then we have $u\neq a$ as $S$ does not traverse $u$, but does traverse $a$.
    This implies that the edge $\{a,r_1\}$ is contained in $\natBeh(r_1)$ at most once, as it also contains the edge $\{u,r_1\}$.
    Since $\{u,r_1\} \in \natBeh(r_1)$, we have $\wFn(\{u,r_1\}) \le \wFn(\{a,r_1\})$, otherwise we would have $\wFn(\natBeh(r_1))>\wFn(A_{r_1})$ contradicting the choice of $\natBeh(r_1)$.
    Similarly, $\wFn(\{u,r_2\}) \le \wFn(\{a,r_2\})$.

    If $T(A_{r_1})=T(A_{r_1})=\{a,a'\}$, $a \neq a'$, then we have $u \notin \{a,a'\}$ as $S$ does not traverse $u$, but does traverse both $a$ and $a'$.
    Without loss of generality assume that $a \notin T$ (possibly $a' \in T$), i.e., $\{a,r_1\} \notin \natBeh(r_1)$.
    Since $\{u,r_1\} \in \natBeh(r_1)$ and $\{a,r_1\} \notin \natBeh(r_1)$, we have $\wFn(\{u,r_1\}) \le \wFn(\{a,r_1\})$.
    Similarly, $\wFn(\{u,r_2\}) \le \wFn(\{a,r_2\})$.

    In both cases, we remove from $G_S$ (one occurrence of) the edges $\{a,r_1\}$ and $\{a,r_2\}$ and add vertex $u$ together with edges $\{u,r_1\}$ and $\{u,r_2\}$ to obtain $\widehat{G}_S$.
    By the above argument, this does not increase the cost.
    Also the degree of $a$ decreases by $2$ and the degree of $u$ is~$2$, while all the other degrees remain unchanged, thus $\widehat{G}_S$ has all degrees even.
    For every pair of vertices $x$ and $y$ in $V(G_S)$ there was a path between them in $G_S$.
    If it contained neither $\{a,r_1\}$ nor $\{a,r_2\}$, then it is still present in $\widehat{G}_S$.
    Otherwise we can replace $\{a,r_1\}$ with the subpath $a, r_3, a', r_1$ and $\{a,r_2\}$ with the subpath $a, r_3, a', r_2$ to obtain a walk from $x$ to $y$ in $\widehat{G}_S$.
    Hence $\widehat{G}_S$ is also connected.
    Therefore it is Eulerian and the corresponding trail constitutes a solution $\widehat{S}$ that traverses all vertices of $\WP$ and more vertices of $\widehat{\WP} \setminus \WP$ than $S$, contradicting the choice of $S$.
    Hence there is a solution $S$ traversing all vertices of $\widehat{\WP}$ proving the safeness of the rule in the case $\widehat{\WP} \neq \WP$.

    Now suppose that $\widehat{\WP} = \WP$.
    Let $S$ be a solution and $R_1$ the set of vertices $r \in R$ that are traversed more than once by $S$.
    By \Cref{lem:vc_mostly_behavior} we have $|R_1| \le k$. We prove the following stronger condition.

    \begin{claim}
    There is an optimal solution $S$ for $(G,\WP,\wFn,\cFn,\budget)$ such that each $r \in R^-$ and each $r \in R$ marked in green is incident exactly to the edges of $\natBeh(r)$.
    \end{claim}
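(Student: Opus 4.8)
The plan is to follow the proof of Claim~\ref{cla:RR5_exactly_nat_beh} (the \TSPshort{} case), adapting the exchange argument to the presence of edge capacities, of behaviors consisting of up to four edge occurrences, and of natural behaviors that may be attached to two modulator vertices rather than to one vertex with multiplicity two. Two ingredients drive the argument. First, I would establish a \WRPshort{} analogue of \Cref{lem:vc_mostly_natural}: starting from any optimal solution and combining \Cref{lem:vc_mostly_behavior} (so that at most $k$ vertices of $R$ are traversed more than once) with a matched-pairs refinement of the cycle-rerouting trick of \Cref{lem:many_edges_cycle}, one obtains an optimal $S$ in which all but at most $2|\allImp_2|+k$ vertices $r\in R$ are incident in $G_S$ to exactly the edges of $\natBeh(r)$; call the exceptional vertices \emph{deviating}. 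The care points with respect to \TSPshort{} are that reinstating $\natBeh(r)\in B_2(r)$ automatically respects capacities, whereas, since $\natBeh(r)$ may touch two modulator vertices, the reinstatements must be carried out in matched pairs sharing the same $\Imp(\natBeh(\cdot))$ so that all degrees remain even -- this is where the $2|\allImp_2|$ term enters. Second, among all optimal solutions meeting this bound I would pick one minimizing the number of edges of $G_S$ that are incident with a vertex of $R^-$ or with a green-marked vertex but do not belong to the corresponding natural behavior; call these \emph{bad} edges. If there is no bad edge, the claim holds.

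Assume there is a bad edge, incident with a vertex $r_1$ that lies in $R^-$ or is green-marked, and let $F\in B(r_1)$ be the multiset of edges of $G_S$ incident with $r_1$; put $I':=\Imp(F)\in\allImp$ and $I:=\Imp(\natBeh(r_1))\in\allImp_2$. The price $P(r_1,I,I')$ is finite, since $F$ is itself a behavior of $r_1$ of impact $I'$; and because $r_1$ is not red, it was not selected among the $2|\allImp_2|+k$ cheapest vertices for the pair $(I,I')$, so there are at least $2|\allImp_2|+k$ red vertices $r'$ with $\Imp(\natBeh(r'))=I$ and $P(r',I,I')\le P(r_1,I,I')$. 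As at most $2|\allImp_2|+k$ vertices of $R$ deviate, and $r_1$ is one of them but is not red, at least one of those red vertices -- call it $r_2$ -- uses exactly $\natBeh(r_2)$ in $S$; being red, $r_2$ is neither in $R^-$ nor green-marked.

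The exchange is the swap of $r_1$ and $r_2$: with $F_{I'}\in B(r_2)$ the unique behavior of $r_2$ of impact $I'$, I would replace in $G_S$ the edges $F$ at $r_1$ by $\natBeh(r_1)$ and the edges $\natBeh(r_2)$ at $r_2$ by $F_{I'}$. Since $\Imp(\natBeh(r_1))=\Imp(\natBeh(r_2))=I$ and $\Imp(F)=\Imp(F_{I'})=I'$, the resulting multigraph is isomorphic to $G_S$ via the map interchanging $r_1$ and $r_2$, hence again Eulerian, connected, and covering $\WP$; no capacity is violated because $\natBeh(r_1)$ and $F_{I'}$ are behaviors and no other occurrences of $r_1$ or $r_2$ remain; and the cost does not increase, because $P(r_2,I,I')\le P(r_1,I,I')$ rewrites as $\wFn(F_{I'})+\wFn(\natBeh(r_1))\le\wFn(F)+\wFn(\natBeh(r_2))$. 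We thus obtain an optimal solution with strictly fewer bad edges -- the bad edges of $F$ at $r_1$ have disappeared and none were created at a vertex of $R^-$ or a green-marked vertex -- contradicting the choice of $S$. The degenerate case $\natBeh(r_1)=\emptyset$, hence $r_1\notin\WP$, is handled identically, with $r_1$ simply removed from the multigraph.

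The step I expect to be the main obstacle is the preparatory \WRPshort{} analogue of \Cref{lem:vc_mostly_natural}: the rerouting must simultaneously preserve connectivity of the multigraph, reinstate the affected natural behaviors in matched pairs so that all degrees stay even, and never exceed an edge capacity; moreover one must first justify restricting to optimal solutions in which every vertex of $R$ is traversed at most twice, so that the edge multiset at $r_1$ is genuinely an element of $B(r_1)$ and the price machinery applies verbatim (the abundance of red vertices sharing the natural impact of $r_1$, which is exactly what follows from $r_1$ being non-red, should be what lets us offload any excess traversals). Granting this lemma, the remainder is a faithful, if notation-heavy, transcription of the \TSPshort{} argument.
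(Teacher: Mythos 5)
Your core exchange step is clean and, where it applies, simpler than the paper's: swapping the \emph{entire} incident edge multiset $F$ of $r_1$ with the unique behavior $F_{I'}$ of a red, currently-natural $r_2$ yields a multigraph isomorphic to $G_S$ under the transposition of $r_1$ and $r_2$, so Eulerianness, connectivity and capacities come for free. The problem is that both preconditions you need for this swap are exactly the hard part, and you defer them to an unproven ``preparatory lemma.'' First, you need $F\in B(r_1)$, i.e.\ every vertex of $R^-\cup R^G$ traversed at most twice in $S$. \Cref{lem:vc_mostly_behavior} only bounds the \emph{number} of multiply-traversed vertices by $k$; it does not let you assume each such vertex has degree at most $4$ in $G_S$, and a vertex of $R$ can genuinely need degree $6$ (e.g.\ as the cheap connector of three parts of $M$). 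If $|F|\ge 6$, then $\Imp(F)$ is not the impact of any behavior, $P(r_1,I,I')$ is not $\wFn(F)-\wFn(\natBeh(r_1))$, and the wholesale swap has no counterpart among the red vertices. The paper avoids this by operating on one occurrence of $r_1$ at a time (cases 1, 2i, 2ii(a), 2ii(b) of its proof), which is why its argument is a case analysis rather than a single swap.

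Second, your counting argument requires that at most $2|\allImp_2|+k$ vertices of $R$ deviate from their natural behavior, so that among the $2|\allImp_2|+k$ red candidates for the pair $(I,I')$ at least one is non-deviating. The paper proves no such global bound for \WRPshort{}, and the matched-pairs argument you sketch naturally classifies singly-traversed deviating vertices by the \emph{pair} $(\Imp(\natBeh(r)),\Imp(A_r))$, giving a bound of order $|\allImp_2|^2+k$ — which exceeds the $2|\allImp_2|+k$ red vertices the rule retains per pair, so the count would not close. (A cycle-based argument as in \Cref{lem:vc_mostly_natural} also does not transfer directly, because reinstating a natural behavior attached to two distinct modulator vertices flips two degree parities.) The paper sidesteps the global bound entirely with an in-proof pigeonhole: among the $\ge 2|\allImp_2|$ singly-traversed red candidates, if all deviate then three share the same actual impact $I''$, and two of them can be reverted to natural behavior using the third to preserve connectivity, producing the needed non-deviating $r_2$ without any a priori bound on deviators. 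Without either that local normalization or a genuinely proved bound matching the rule's threshold, your proof is incomplete at precisely the step you flag as the main obstacle.
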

    \begin{claimproof}
    Let $R^G$ be the set of vertices marked in green.
    Let $S$ be a solution that minimizes the number of edges of $G_S$ incident with $R^- \cup R^G$ which are not part of $\bigcup_{r \in R^- \cup R^G} \natBeh(r)$.
    If it contains no such edges, then we are done.
    So suppose that it contains such edges.

    Let $r_1 \in R^- \cup R^G$ be a vertex incident with an edge of $G_S$ not part of $\natBeh(r_1)$ and let $I = (T,D)= \Imp(\natBeh(r_1))$.
    Note that $T \subseteq \WP$, otherwise $r_1$ would be marked in yellow.
    Let $I'= (T',D') \in \allImp(r_1)$ be an impact of $r_1$ that we identify later.
    We want to show that there is a marked vertex $r_2$ currently in a natural behavior which can achieve impact $I'$ for at most the same cost.

    Since $r_1 \in R^- \cup R^G$, there are at least $2|\allImp_2|+k$ vertices $r' \in R$ with $P(r',I, I') \le  P(r_1,I, I')$ marked in red.
    Let $R_{I'}$ be the set of those among them which are not in $R_1$.
    Note that $|R_{I'}| \ge 2|\allImp_2|$.
    Each $r' \in R_{I'}$ is incident with at most $2$ edges of $G_S$, let $A_{r'}$ be the multiset of them.

    Assume first that for each $r' \in R_{I'}$ we have $A_{r'} \neq \natBeh(r')$, i.e., $\Imp(A_{r'}) \neq I$.
    Since $|R_{I'}| > 2(|\allImp_2|-1)$, this implies that there are three vertices $r_2$, $r_3$, and $r_4$ such that $\Imp(A_{r_2})=\Imp(A_{r_3})=\Imp(A_{r_4})=I''=(T'',D'') \in \allImp_2 \setminus \{I\}$.
    Let $G''_S$ be the multigraph obtained from $G_S$ by replacing the edges of $A_{r_2}$ and $A_{r_3}$ with the edges of $\natBeh(r_2)$ and $\natBeh(r_3)$.
    This cannot increase the cost by the definition of a natural behavior.
    This decreases the degree of vertices $m''$ in $T''$ by exactly $2 \cdot D''(m'')$ and increases the degree of each vertex $m$ in $T$ by exactly $2\cdot D(m)$.
    Hence all the degrees remain even.
    Moreover, as any path through $r_2$ or $r_3$ can be rerouted through $r_4$ and if the vertices $r_2$ and $r_3$ are in $\WP$, then they are connected to $T \subseteq \WP$ in $G''_S$, graph $G''_S$ is also connected.
    Therefore, it yields another optimal solution in which there is a vertex $r_2 \in R_{I'}$ with $A_{r_2} = \natBeh(r_2)$.

    To determine the appropriate $I'$, we distinguish several cases.
    \begin{enumerate}[1)]
    \item In case $|E(G_S) \cap \natBeh(r_1)| \le 1$, let $E(G_S) \cap \natBeh(r_1) = \{e_1\}$ or let $e_1$ be an arbitrary edge of $S$ incident on $r_1$ if $E(G_S) \cap \natBeh(r_1)=\emptyset$.
    Consider the occurrence of $r_1$ in $S$ directly preceding or succeeding the traversal of edge $e_1$ and let $e_2$ be the other edge directly surrounding the occurrence.
    Let $I'=(T',D')=\Imp(\{e_1, e_2\})$.

    By the above argument there is $r_2 \in R_{I'}$ such that $A_{r_2} = \natBeh(r_2)$.
    Let $F_{I'} \in B(r_2)$ be the behavior with $\Imp(F_{I'})=I'$.
    In $S$ we replace the occurrence of edges $e_1$ and $e_2$ and vertex $r_1$ with the edges of $F_{I'}$ in an appropriate order and an occurrence of vertex $r_2$ (note that $I'=\Imp(\{e_1, e_2\})$).
    Similarly, we replace the original occurrence of vertex $r_2$ and the edges of $\natBeh(r_2)$ by the occurrence of $r_1$ and edges of $\natBeh(r_1)$ (note that $I = \Imp(\natBeh(r_1)) = \Imp(\natBeh(r_2))$).
    The resulting walk $S'$ visits all the vertices in $\WP$, let $G'_S$ be the corresponding  multigraph.
    As all the introduced edges are used at most as many times as in $F_{I'}$ or $\natBeh(r_1)$, no capacities are exceeded.

    We removed the edges $e_1$, $e_2$, and $\natBeh(r_2)$ from $G_S$ and added the edges of $F_{I'}$ and $\natBeh(r_1)$. However, since $\wFn(F_{I'}) - \wFn(\natBeh(r_2))=P(r_2,I, I') \le  P(r_1,I, I')=\wFn(\{e_1, e_2\}) - \wFn(\natBeh(r_1))$, we have $\wFn(F_{I'}) + \wFn(\natBeh(r_1)) \le \wFn(\{e_1, e_2\}) + \wFn(\natBeh(r_2))$ and the cost of $S'$ is at most the cost of $S$.
    Hence $S'$ is also optimal and there are less edges in $G'_S$ incident with $R^- \cup R^G$ which are not part of $\bigcup_{r \in R^- \cup R^G} \natBeh(r)$ (namely, we removed at least edge $e_2$ and did not introduce any), contradicting our choice of $S$.

    \item If the case $|E(G_S) \cap \natBeh(r_1)| \le 1$ does not apply, then we have $\natBeh(r_1) \subseteq E(G_S)$, $r_1 \in \WP$ and $r_1$ is visited at least twice by $S$ (since it is also incident with an edge of $G_S$ not part of $\natBeh(r_1)$).
    We again distinguish two cases.
    \begin{enumerate}[i)]
    \item If $|T|=1$, i.e., $T=\{m\}$ and $D\colon m\mapsto 2$, then let $e_1$ be an arbitrary edge of $G_S$ incident on $r_1$ not part of $\natBeh(r_1)$.
    Consider the occurrence of $r_1$ in $S$ directly preceding or succeeding the traversal of edge $e_1$ and let $e_2$ be the other edge directly surrounding the occurrence.
    Let again $I'=(T',D')=\Imp(\{e_1, e_2\})$.

    By the above argument there is $r_2 \in R_{I'}$ such that $A_{r_2} = \natBeh(r_2)$.
    Let $F_{I'} \in B(r_2)$ be the behavior with $\Imp(F_{I'})=I'$.
    In $S$ we replace the occurrence of edges $e_1$ and $e_2$ and vertex $r_1$ with the edges of $F_{I'}$ in an appropriate order and an occurrence of vertex $r_2$ (note that $I'=\Imp(\{e_1, e_2\})$).
    We remove the original occurrence of vertex $r_2$ and the edges of $\natBeh(r_2)$ (note that $T(\natBeh(r_2)) = T(\natBeh(r_1)) = \{m\}$).
    The resulting walk $S'$ visits all the vertices in $\WP$, let $G'_S$ be the corresponding  multigraph.
    As all the introduced edges are used at most as many times as in $F_{I'}$, no capacities are exceeded.

    We removed the edges $e_1$, $e_2$, and $\natBeh(r_2)$ from $G_S$ and added the edges of $F_{I'}$. However, since $\wFn(F_{I'}) - \wFn(\natBeh(r_2))=P(r_2,I, I') \le  P(r_1,I, I')=\wFn(\{e_1, e_2\}) - \wFn(\natBeh(r_1)) \le \wFn(\{e_1, e_2\})$, we have $\wFn(F_{I'}) \le \wFn(\{e_1, e_2\}) + \wFn(\natBeh(r_2))$ and the cost of $S'$ is at most the cost of $S$.
    Hence $S'$ is also optimal and there are less edges in $G'_S$ incident with $R^- \cup R^G$ which are not part of $\bigcup_{r \in R^- \cup R^G} \natBeh(r)$ (namely, we removed at least the edge $e_1$ and did not introduce any), contradicting our choice of $S$.

    \item If $|T|=2$, we distinguish two final cases, namely, whether the two edges of $\natBeh(r_1)$ appear consecutively in $S$ or not.
    \begin{enumerate}[a)]
    \item If they are consecutive in $S$, then let $e_1$ be an arbitrary edge of $G_S$ incident on $r_1$ not part of $\natBeh(r_1)$.
    Consider the occurrence of $r_1$ in $S$ directly preceding or succeeding the traversal of edge $e_1$ and let $e_2$ be the other edge directly surrounding the occurrence.
    Note that, by assumption, $e_2 \notin \natBeh(r_1)$.
    This time, let $I'=(T',D')=\Imp(\natBeh(r_1) \cup \{e_1, e_2\})$.

    By the above argument there is $r_2 \in R_{I'}$ such that $A_{r_2} = \natBeh(r_2)$.
    Let $F_{I'} \in B(r_2)$ be the behavior with $\Imp(F_{I'})=I'$.
    Let $e_1=\{x,r_1\}$ and $e_2=\{y,r_1\}$.
    Since $\Imp(F_{I'})=I'=\Imp(\natBeh(r_1) \cup \{e_1, e_2\})$ and $\Imp(\natBeh(r_2)) = I = \Imp(\natBeh(r_1))$ we have $F_{I'}=\natBeh(r_2)\cup \{\{x,r_2\}, \{y,r_2\}\}$.
    In $S$ we replace the occurrence of edges $e_1$ and $e_2$ and vertex $r_1$ with the edges $\{x,r_2\}$ and $\{y,r_2\}$ of $F_{I'}$ in an appropriate order and an occurrence of vertex $r_2$.
    The resulting walk $S'$ visits all the vertices in $\WP$, let $G'_S$ be the corresponding  multigraph.
    As all the introduced edges are used at most as many times as in $F_{I'}$, no capacities are exceeded.

    We removed the edges $e_1$ and $e_2$ from $G_S$ and added the edges $\{x,r_2\}$ and $\{y,r_2\}$ of $F_{I'}$. However, $\wFn(\{\{x,r_2\}, \{y,r_2\}\})= \wFn(F_{I'}) - \wFn(\natBeh(r_2))=P(r_2,I, I') \le  P(r_1,I, I')=\wFn(\{e_1, e_2\})$ and the cost of $S'$ is at most the cost of $S$.
    Hence $S'$ is also optimal and there are less edges in $G'_S$ incident with $R^- \cup R^G$ which are not part of $\bigcup_{r \in R^- \cup R^G} \natBeh(r)$ (namely, we removed edges $e_1$ and $e_2$ and did not introduce any), contradicting our choice of $S$.

    \item Finally, if the edges of $\natBeh(r_1)$ are not consecutive in $S$, then let $\natBeh(r_1)= \{e_1,e_2\}$.
    Consider the occurrence of $r_1$ in $S$ directly preceding or succeeding the traversal of edge $e_1$ and let $e_3$ be the other edge directly surrounding the occurrence.
    Similarly, let $e_4$ be the edge consecutive to $e_2$ in $S$.
    Note that, by assumption, $e_3,e_4 \notin \natBeh(r_1)$.
    Let $I'=(T',D')=\Imp(\{e_1, e_2,e_3,e_4\})=\Imp(\natBeh(r_1) \cup \{e_3, e_4\})$.

    By the above argument there is $r_2 \in R_{I'}$ such that $A_{r_2} = \natBeh(r_2)$.
    Let $F_{I'} \in B(r_2)$ be the behavior with $\Imp(F_{I'})=I'$.
    Let $e_i=\{x_i,r_1\}$ for every $i \in \{1, \ldots, 4\}$.
    Since $\Imp(F_{I'})=I'=\Imp(\{e_1, e_2, e_3, e_4\})$ and $\Imp(\natBeh(r_2)) = I = \Imp(\natBeh(r_1))$ we have $F_{I'}=\natBeh(r_2)\cup \{\{x_3,r_2\}, \{x_4,r_2\}\}$ and $\natBeh(r_2)= \{\{x_1,r_2\}, \{x_2,r_2\}\}$.
    In $S$ we replace the occurrence of edges $e_1$ and $e_3$ and vertex $r_1$ with the edges $\{x_1,r_2\}$ and $\{x_3,r_2\}$ of $F_{I'}$ in an appropriate order and an occurrence of vertex~$r_2$.
    Similarly, we replace the occurrence of edges $e_2$ and $e_4$ and vertex $r_1$ with the edges $\{x_2,r_2\}$ and $\{x_4,r_2\}$ of $F_{I'}$ in an appropriate order and an occurrence of vertex $r_2$.
    Finally, we replace the original occurrence of edges $\{x_1,r_2\}$ and $\{x_2,r_2\}$ and the original occurrence of $r_2$ with edges $e_1$ and $e_2$ in an appropriate order and an occurrence of vertex $r_1$.
    The resulting walk $S'$ visits all the vertices in $\WP$, let $G'_S$ be the corresponding  multigraph.
    As all the introduced edges are used at most as many times as in $F_{I'}$, no capacities are exceeded.

    We removed the edges $e_3$ and $e_4$ from $G_S$ and added the edges $\{x_3,r_2\}$ and $\{x_4,r_2\}$ of $F_{I'}$. However, $\wFn(\{\{x_3,r_2\}, \{x_4,r_2\}\})= \wFn(F_{I'}) - \wFn(\natBeh(r_2))=P(r_2,I, I') \le  P(r_1,I, I')=\wFn(\{e_3, e_4\})$ and the cost of $S'$ is at most the cost of $S$.
    Hence $S'$ is also optimal and there are less edges in $G'_S$ incident with $R^- \cup R^G$ which are not part of $\bigcup_{r \in R^- \cup R^G} \natBeh(r)$ (namely, we removed edges $e_3$ and $e_4$ and did not introduce any), contradicting our choice of $S$.
    \end{enumerate}%
    \end{enumerate}%
    \end{enumerate}%
    \end{claimproof}

    Let $S$ be a solution for $(G,\WP,\wFn,\cFn,\budget)$ as in the claim and $G_S$ the corresponding multigraph.
    Let $\widehat{G}_S$ be obtained from $G_S$ by removing the edges of $\natBeh(r)$ and vertex $r$ for each $r \in R^-$.
    This reduces the total weight by exactly $\sum_{r \in R^-} \wFn(\natBeh(r))$.
    Hence, as $G_S$ is of weight at most~$\budget$, $\widehat{G}_S$ is of weight at most $\widehat{\budget}$.
    Note also that for each $I \in \allImp$ such that there is a vertex $r \in R^-$ with $\Imp(\natBeh(r))=I$, there is an even number of such vertices and at least one vertex $r_I$ with  $\Imp(\natBeh(r_I))=I$ marked in green.
    Therefore, the removals decrease the degree of all remaining vertices by an even number, hence all degrees in $\widehat{G}_S$ are even.
    Moreover, we show that $\widehat{G}_S$ is connected.
    Let $x$ and $y$ be two arbitrary vertices of $\widehat{G}_S$.
    Since $G_S$ is connected, there is a path from $x$ to $y$ in $G_S$.
    We replace each vertex $r \in R^-$ on this path by the green vertex $r_{\Imp(\natBeh(r))}$ to obtain a walk from $x$ to $y$ in $\widehat{G}_S$.
    Since $x$ and $y$ were arbitrary, $\widehat{G}_S$ is indeed connected.
    Thus $\widehat{G}_S$ is Eulerian, yielding a solution $\widehat{S}$ for $(\widehat{G},\widehat{\WP}, \widehat{\wFn}, \widehat{\cFn}, \widehat{\budget})$.
    This completes the proof.
\end{proof}

We apply \Cref{rule:wrp_vc} and we compute the size of the reduced instance.

\begin{observation}\label{obs:WRP:vc:kernelBound}
    After \Cref{rule:wrp_vc} has been exhaustively applied, the number of vertices in $R$ is bounded by $\O(k^8)$ and, hence, the total number of edges is $\O(k^9)$.
\end{observation}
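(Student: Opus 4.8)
The plan is to bound the size of $R$ in the reduced instance by summing, over the three colors used by \Cref{rule:wrp_vc}, the number of vertices that can be marked, and then to observe that every other vertex of $R$ is deleted. First I would check that the exhaustive application terminates and ends in the ``deletion branch'': each time the rule modifies $\WP$ it adds a set $T\subseteq M$, and since $|M|=k$ this can happen at most $k$ times; after that, every application executes the last branch, which removes all unmarked vertices of $R$ and does not enlarge $\WP$, so the instance is a fixed point. Consequently, once the rule has been applied exhaustively, $R$ contains only red, yellow, and green vertices. It therefore suffices to bound the number of vertices of each color, using the estimates $|\allImp_2|=\O(k^2)$ and $|\allImp|=\O(k^4)$ established just before the rule.

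For the red vertices, the rule iterates over all pairs $(I,I')\in\allImp_2\times\allImp$, of which there are $|\allImp_2|\cdot|\allImp|=\O(k^2)\cdot\O(k^4)=\O(k^6)$, and for each pair it marks at most $2|\allImp_2|+k=\O(k^2)$ vertices; this gives $\O(k^8)$ red vertices, the dominant term. For the yellow vertices, the rule iterates over $I\in\allImp_2$, i.e., $\O(k^2)$ choices, marking at most $2|\allImp_2|=\O(k^2)$ vertices for each, so $\O(k^4)$ in total. For the green vertices, at most two are marked per $I\in\allImp_2$, giving $\O(k^2)$ in total. Summing, $|R|=\O(k^8)+\O(k^4)+\O(k^2)=\O(k^8)$.

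The edge bound is then immediate: since $M$ is a vertex cover, $R$ is an independent set, so every edge lies inside $M$ or between $M$ and $R$; there are at most $\binom{k}{2}=\O(k^2)$ of the former and at most $k\cdot|R|=\O(k^9)$ of the latter, hence $\O(k^9)$ edges overall.

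I do not expect a real obstacle: the argument is essentially bookkeeping over the loops of \Cref{rule:wrp_vc}. The single point requiring a word of care is exactly the stabilization step above, namely confirming that ``exhaustively applied'' legitimately means ``reached the deletion branch'', which holds because $\WP$ can only grow by subsets of the $k$-element set $M$ and hence stabilizes after at most $k$ changes, after which all unmarked vertices are deleted.
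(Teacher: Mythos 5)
Your proof is correct and follows essentially the same counting argument as the paper: at most $|\allImp_2|\cdot|\allImp|\cdot(2|\allImp_2|+k)=\O(k^8)$ red, $|\allImp_2|\cdot 2|\allImp_2|=\O(k^4)$ yellow, and $2|\allImp_2|=\O(k^2)$ green vertices survive, and the edge bound follows since $R$ is independent. Your extra remark on why exhaustive application terminates in the deletion branch (because $\WP$ can grow at most $k$ times) is a valid point the paper leaves implicit, but it does not change the approach.
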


\begin{proof}
    For each pair $I \in \allImp_2$, $I' \in \allImp$ we mark at most $2|\allImp_2|+k$ vertices in red.
    For each $I \in \allImp_2$ we mark at most $2|\allImp_2|$ vertices in yellow.
    Finally, for each $I \in \allImp_2$ we mark at most $2$ vertices in green.
    As $|\allImp_2| = \O(k^2)$ and $|\allImp| = \O(k^4)$, we mark at most $\O(k^8)$ vertices in total.
    Since the unmarked vertices are removed, there are at most $\O(k^8)$ vertices in $R$.
    Thus, we have at most $\binom{k}{2}$ edges within $M$ and at most $\O(k \cdot k^8)$ edges between $M$ and $R$, which are all the edges of the graph and the bound follows.
\end{proof}

Again, the theorem follows by \Cref{obs:WRP:vc:kernelBound} and \Cref{lem:magic_kernel}.

\thmWRPVCkernel*
\begin{proof}
By \Cref{obs:WRP:vc:kernelBound}, after the application of \Cref{rule:wrp_vc}, there are $\O(k^9)$ edges.
To see that the rule can be applied in polynomial time, observe, that $\natBeh(r)$ can be computed in $\O(k)$ time for each $r$. Then $P(r,I,I')$ can be computed in constant time for each $r$, $I$, and $I'$ which allows us to apply the rule in polynomial time.
Finally, we apply \Cref{lem:magic_kernel} to obtain a kernel of size $\O(k^{36})$ in polynomial time.
\end{proof}

\section{Polynomial Kernels with Respect to Modulator Size}\label{sec:modulatorsPolyKernel}

In this section, we move to more general type of modulators.
Recall that if $M$ is a vertex cover, the connected components of $G \setminus M$ are single vertices.
We are going to relax this a bit---we shall investigate general components of constant size and paths of constant size.
For the first, more general one, we obtain a polynomial kernel for \TSPshort.
For the second, we obtain a polynomial kernel even for \sTSPshort.

\subsection{TSP and the Distance to Constant Size Components}\label{sec:components}

For the rest of the section, we assume that we are given an undirected graph $G = (V,E)$, $k$ is its distance to $r$-components, $M$ the corresponding modulator, $E_M= \binom{M}{2}$, and $R = G \setminus M$.
Let $M=\{m_1, \ldots, m_k\}$ be a fixed order of the vertices of the modulator.

As usually, we begin with the definition of a behavior.

\begin{definition}\label{def:behavior}
    For a connected component $C$ of $R$ we consider the graph $G_C = {G[C \cup M]} \setminus E_M$.
    \emph{A behavior of $C$} is a multiset $F \subseteq E[G_C]$ of edges of $G_C$ (each can be used at most twice) such that
	    \begin{itemize}
	        \item in the multigraph $(C \cup M,  F)$,
	            \begin{enumerate}[(i)]
	                \item each vertex $v \in C$ has nonzero even degree;
	                \item each connected component contains a vertex of $M$;
	            \end{enumerate}
	        \item $F$ contains at most $2r$ edges incident with vertices of $M$.
	    \end{itemize}
    We let $B(C)$ be the set of all behaviors of $C$.
    We again naturally extend the weight function such that for a behavior $F \in B(C)$ we set its weight to $\wFn(F)=\sum_{e \in F} \wFn(e)$.
    For a component $C$ \emph{its natural behavior $\natBeh(C)$} is a fixed minimizer of\/ \( \min \left\{ \wFn(F) \mid F \in B(C) \right\} \).
\end{definition}

Now, it is not hard to see, that a solution might ``visit'' a component of $G \setminus M$ more than once.
With this, we arrive at the key notion we introduce in this section---the segments.
It is not hard to see that a solution may even revisit some terminals in a component; we shall later prove that this is the case for a few exceptional components only (see \Cref{lem:mostly_natural}).

\begin{definition}\label{def:solution_parts}
    Let $C$ be a connected component of $R$ and $S$ be a walk starting and ending in $v_0 \in M$ forming a nice optimal solution to the instance.
    We split $S$ into \emph{segments} $S_1, \ldots, S_q$ such that each segment starts and ends in a vertex of $M$, whereas the inner vertices of each segment are from $R$.
    Let $\mathcal{F}(S,C)$ be obtained from the empty set by the following process: For each $i$, add $S_i$ into $\mathcal{F}(S,C)$ if and only if there is a vertex $c \in C$ visited by $S_i$ and not visited by any of the previous segments.
    Let $F(S,C)$ be the union of edges of $S_i$, $S_i \in \mathcal{F}(S,C)$, including multiplicities.
\end{definition}

We observe that the multiset $F(S,C)$ is a behavior of $C$, i.e, $F(S,C) \in B(C)$.%

\begin{observation}\label{obs:solution_parts}
    The multiset $F(S,C)$ is a behavior of $C$, i.e, $F(S,C) \in B(C)$.
\end{observation}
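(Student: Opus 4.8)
The plan is to check directly the four conditions defining a behavior in \Cref{def:behavior} for the multiset $F := F(S,C)$. The structural fact I will use repeatedly is that each segment of $S$ starts and ends in $M$ and has all its inner vertices in $R$; since consecutive vertices of a walk are adjacent, the inner vertices of one segment are pairwise connected within that segment and hence all lie in a single connected component of $R$. Consequently every segment $S_i \in \mathcal{F}(S,C)$ --- which by definition of $\mathcal{F}(S,C)$ visits some vertex of $C$ --- has all of its inner vertices in $C$, so all of its edges lie in $G_C = G[C\cup M]\setminus E_M$. Since the segments partition the edge occurrences of $S$ and $S$ is a nice solution, $F$ is a multiset of edges of $G_C$ using each edge at most twice. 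This is the easy part.

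Next I would handle the degree condition (i). A vertex $v\in C$ is never an endpoint of a segment, so each occurrence of $v$ inside a segment $S_i\in\mathcal{F}(S,C)$ contributes exactly the two edges flanking that occurrence to $\deg_F(v)$; summing over occurrences and segments gives an even number. It is nonzero because in the \TSPshort setting $S$ visits $v$: taking the first segment $S_j$ that visits $v$, this segment visits the vertex $v\in C$ not visited by any earlier segment, so $S_j\in\mathcal{F}(S,C)$ and $v$ occurs in it. Condition (ii) then follows: a connected component $Z$ of $(C\cup M,F)$ that is a single vertex must be in $M$ (vertices of $C$ have positive degree by (i)); otherwise pick $v\in Z\cap C$, note $v$ lies on some $S_i\in\mathcal{F}(S,C)$, and observe that the sub-walk of $S_i$ from its starting vertex $m\in M$ to an occurrence of $v$ is a walk inside $F$, so $m\in Z$.

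The only step requiring a bit of genuine counting --- and the one I expect to be the main obstacle, though still routine --- is the last condition, that $F$ has at most $2r$ edges incident with $M$. Here I would argue that each segment $S_i\in\mathcal{F}(S,C)$ contributes \emph{exactly} two edge occurrences of $F$ incident with $M$, namely its first and last edge, because all of its inner vertices lie in $C$ and carry no $M$-incident edge; hence the number of $M$-incident edge occurrences of $F$ is precisely $2|\mathcal{F}(S,C)|$. It then remains to show $|\mathcal{F}(S,C)|\le r$: by construction, each segment added to $\mathcal{F}(S,C)$ witnesses a vertex of $C$ not visited by any earlier segment, these witnesses are pairwise distinct, and $C$ has at most $r$ vertices, so at most $r$ segments are added and the $M$-incident edge count is at most $2r$. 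Putting the four checks together gives $F(S,C)\in B(C)$.
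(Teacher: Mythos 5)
Your proof is correct and follows essentially the same route as the paper's: each edge used at most twice because $S$ is nice, all edges in $E[G_C]$ because other edges would belong to excluded segments, nonzero even degree via the first segment visiting each $c\in C$, connectivity to $M$ via segment endpoints, and the $2r$ bound from at most two $M$-incident edges per segment times at most $r$ segments (one new witness vertex of $C$ each). Your write-up just fills in a bit more detail (e.g., why all inner vertices of a relevant segment lie in $C$) than the paper does.
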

\begin{proof}
    Since the original solution uses each edge at most twice and $F(S,C)$ is a subset of it, it also uses every edge at most twice.
    All edges of $F(S,C)$ are in $E[G_C]$, because any other edges would be in segments of $S$ not included in $\mathcal{F}(S,C)$.
    As each vertex is visited by $S$ and we included the first segment visiting each $c \in C$ in $\mathcal{F}(S,C)$, each vertex of $C$ has nonzero degree in $(C \cup M,  F(S,C))$.
    Furthermore, as each segment starts and ends in a vertex of $M$, each connected component of $(C \cup M,  F(S,C))$ contains a vertex of $M$ and each vertex of $C$ has an even degree with respect to each segment of $S$, hence an even degree in $(C \cup M,  F(S,C))$.
    Finally, each segment of $S$ contains at most $2$ edges incident with $M$.
    Since $C$ has at most $r$ vertices and each segment in $\mathcal{F}(S,C)$ contains a vertex of $C$ not contained in any previous segment, $F(S,C)$ contains at most $2r$ edges incident with $M$.
\end{proof}

Again, for a connected component of $G \setminus M$, we want to introduce the notion of touched vertices and the (connectivity) impact of a behavior of the component.
For a better understanding of \Cref{def:impact}, we refer the reader to an example in \Cref{fig:impact_def}.

\begin{definition}\label{def:impact}
    Let $C$ be a connected component of $R$ and $F \in B(C)$ be a behavior of $C$.
    We say that $F$ \emph{touches} a vertex $m$ of $M$ if $m$ has nonzero degree in $(C \cup M,  F)$.
    Let $T(F)$ be the set of touched vertices.

    Consider the multiset $D(F)$ of edges obtained from the empty set as follows.
    For each connected component $H$ of $(C \cup M,  F)$ containing at least two vertices of $M$, let $m_i$ be the vertex with the least index in $M \cap V(H)$.
    For each $m_j$ in $(M \cap V(H)) \setminus \{m_i\}$ add to $D(F)$ a single edge $\{m_i,m_j\}$ if $m_j$ is incident with an odd number of edges of $F$ and a double edge $\{m_i,m_j\}$ if $m_j$ is incident with an even number of edges of $F$.

    We call the pair $\Imp(F)=(T(F),D(F))$ \emph{the impact} of behavior $F$.

    Let $\allImp_C= \{\Imp(F) \mid F \in B(C)\}$ be the set of all possible impacts of behaviors of $C$.

    Let $\allImp = \bigcup_{C \text{ connected component of } R} \allImp_C$.
\end{definition}

\begin{figure}[tb!]
	\centering
	\newcommand{\convexpath}[2]{
		[
		create hullnodes/.code={
			\global\edef\namelist{#1}
			\foreach [count=\counter] \nodename in \namelist {
				\global\edef\numberofnodes{\counter}
				\node at (\nodename) [draw=none,name=hullnode\counter] {};
			}
			\node at (hullnode\numberofnodes) [name=hullnode0,draw=none] {};
			\pgfmathtruncatemacro\lastnumber{\numberofnodes+1}
			\node at (hullnode1) [name=hullnode\lastnumber,draw=none] {};
		},
		create hullnodes
		]
		($(hullnode1)!#2!-90:(hullnode0)$)
		\foreach [
		evaluate=\currentnode as \previousnode using \currentnode-1,
		evaluate=\currentnode as \nextnode using \currentnode+1
		] \currentnode in {1,...,\numberofnodes} {
			let
			\p1 = ($(hullnode\currentnode)!#2!-90:(hullnode\previousnode)$),
			\p2 = ($(hullnode\currentnode)!#2!90:(hullnode\nextnode)$),
			\p3 = ($(\p1) - (hullnode\currentnode)$),
			\n1 = {atan2(\y3,\x3)},
			\p4 = ($(\p2) - (hullnode\currentnode)$),
			\n2 = {atan2(\y4,\x4)},
			\n{delta} = {-Mod(\n1-\n2,360)}
			in
			{-- (\p1) arc[start angle=\n1, delta angle=\n{delta}, radius=#2] -- (\p2)}
		}
		-- cycle
	}

	\begin{tikzpicture}
		\node[draw,fill=black,circle,inner sep=1pt] (C1v0) at (0,2.5) {};
		\node[draw,fill=black,circle,inner sep=1pt] (C1v1) at (-0.5,3) {};
		\node[draw,fill=black,circle,inner sep=1pt] (C1v2) at (0.5,3) {};
		\node[draw,fill=black,circle,inner sep=1pt] (C1v3) at (1,2.5) {};
		\node[draw,fill=black,circle,inner sep=1pt] (C1v4) at (0,3.5) {};
		\node[draw,fill=black,circle,inner sep=1pt] (C1v5) at (-1,3.5) {};
		\node[draw,fill=black,circle,inner sep=1pt] (C1v6) at (1,3.5) {};
		\draw (C1v0) -- (C1v2) -- (C1v1) -- (C1v4) -- (C1v5) -- (C1v1);
		\draw (C1v4) -- (C1v2) -- (C1v3);
		\draw (C1v4) -- (C1v6);
		\node (C1) at (0,4.5) {$C$};
		\begin{pgfonlayer}{background}
			\fill[blue,opacity=0.05] \convexpath{C1v0,C1v5,C1v6,C1v3}{17pt};
		\end{pgfonlayer}

		\node[draw,fill=black,circle,inner sep=1pt,label={[label distance=5pt]270:\small$m_1$}] (m1) at (-1.5,0.5) {};
		\node[draw,fill=black,circle,inner sep=1pt,label={[label distance=5pt]270:\small$m_2$}] (m2) at (-0.75,0.5) {};
		\node[draw,fill=black,circle,inner sep=1pt,label={[label distance=5pt]270:\small$m_3$}] (m3) at (0,0.5) {};
		\node[draw,fill=black,circle,inner sep=1pt,label={[label distance=5pt]270:\small$m_4$}] (m4) at (0.75,0.5) {};
		\node[draw,fill=black,circle,inner sep=1pt,label={[label distance=5pt]270:\small$m_5$}] (m5) at (1.5,0.5) {};
		\node[inner sep=2pt] (m6) at (2.5,0.5) {$\ldots$};
		\draw (m1) -- (m2) -- (m3) -- (m4) -- (m5);
		\draw (m1) edge[bend left] (m3);
		\draw (m2) edge[bend right] (m4) edge[bend left] (m5);
		\draw (m3) edge[bend right] (m5);
		\draw (m6) edge[bend left] (m4) edge[bend right] (m5);
		\node (M) at (3.5,0.5) {$M$};
		\begin{pgfonlayer}{background}
			\fill[orange,opacity=0.1] \convexpath{m1,m6}{17pt};
		\end{pgfonlayer}

		\draw[dotted] (C1v0) edge (m2) edge (m3) edge (m4);
		\draw[dotted] (C1v2) edge (m4) edge (m5);
		\draw[dotted] (C1v3) edge (m4);
		\draw[dotted] (C1v6) edge (m5) edge (m6);
		\draw[dotted] (C1v1) edge (m3);
		\draw[dotted] (C1v5) edge (m2);

		\draw[red,thick] (m2.north) -- (C1v5.south) -- (C1v1.west) -- (m3.north) -- (C1v0.east) -- (C1v2.west) -- (C1v4.south) -- (C1v6.south) -- (m5.north);
		\draw[green,thick] (m4.west) -- (C1v3.west);
		\draw[green,thick] (m4.east) -- (C1v3.east);

		\node[draw,gray,fill=gray,circle,inner sep=1pt,opacity=0.5] (t0) at (-3.5,3.25) {};
		\node[draw,gray,fill=gray,circle,inner sep=1pt,opacity=0.5] (t1) at (-3,2.75) {};
		\node[draw,gray,fill=gray,circle,inner sep=1pt,opacity=0.5] (t2) at (-3,2.25) {};
		\node[draw,gray,fill=gray,circle,inner sep=1pt,opacity=0.5] (t3) at (-2.5,3.25) {};
		\node[draw,gray,fill=gray,circle,inner sep=1pt,opacity=0.5] (t4) at (-2.25,2.75) {};
		\node[draw,gray,fill=gray,circle,inner sep=1pt,opacity=0.5] (t5) at (-2.25,2.25) {};
		\draw[gray,opacity=0.5] (t0) -- (t1) -- (t3);
		\draw[gray,opacity=0.5] (t1) -- (t4);
		\draw[gray,opacity=0.5] (t0) -- (t2) -- (t5);
		\node[gray] (T) at (-2.625,4.25) {$C'$};
		\draw[dotted,gray,opacity=0.5] (t2) edge (m2) edge (m1);
		\draw[dotted,gray,opacity=0.5] (t4) edge (m1);
		\draw[dotted,gray,opacity=0.5] (t3) edge (m4);
		\begin{pgfonlayer}{background}
			\fill[gray,opacity=0.05] \convexpath{t0,t1,t3,t4,t5,t2}{10pt};
		\end{pgfonlayer}

		\node[draw,gray,fill=gray,circle,inner sep=1pt,opacity=0.5] (cq0) at (3.5,3.25) {};
		\node[draw,gray,fill=gray,circle,inner sep=1pt,opacity=0.5] (cq1) at (3.5,2.25) {};
		\node[draw,gray,fill=gray,circle,inner sep=1pt,opacity=0.5] (cq2) at (3,2.5) {};
		\node[draw,gray,fill=gray,circle,inner sep=1pt,opacity=0.5] (cq3) at (3,3) {};
		\node[draw,gray,fill=gray,circle,inner sep=1pt,opacity=0.5] (cq4) at (4,3) {};
		\node[draw,gray,fill=gray,circle,inner sep=1pt,opacity=0.5] (cq5) at (4,2.5) {};
		\draw[gray,opacity=0.5] (cq0) edge (cq1) edge (cq2) edge (cq3) edge (cq4) edge (cq5);
		\draw[gray,opacity=0.5] (cq1) edge (cq2) edge (cq3) edge (cq4) edge (cq5);
		\draw[gray,opacity=0.5] (cq2) edge (cq3) edge (cq4) edge (cq5);
		\draw[gray,opacity=0.5] (cq3) edge (cq4) edge (cq5);
		\draw[gray,opacity=0.5] (cq4) edge (cq5);
		\node[gray,opacity=0.5] (CQ) at (3.5,4.25) {$C''$};
		\draw[dotted,gray,opacity=0.5] (cq2) edge (m6) edge (m3) edge (m5);
		\draw[dotted,gray,opacity=0.5] (cq1) edge (m5) edge (m4);
		\draw[dotted,gray,opacity=0.5] (cq3) edge (m5);
		\begin{pgfonlayer}{background}
			\fill[gray,opacity=0.05] \convexpath{cq0,cq1,cq2,cq3,cq4,cq5}{10pt};
		\end{pgfonlayer}
	\end{tikzpicture}
	\caption{Illustration of \Cref{def:impact}. A behavior $F$ consists of red and green edges, and vertices $m_2,\ldots,m_5$ made the set $T(F)$ of touched vertices. We built the multiset $D(F)$ for behavior $F$ as follows. We start with an empty set. For the red connected component, the vertex $m_i\in M$ with least index is $m_2$. Thus, for $m_3$ we add to $D(F)$ double edge $\{m_2,m_3\}$, because the vertex $m_3$ is incident with even number of edges in behavior $F$. For the vertex $m_5$, we add to $D(F)$ the edge $\{m_2,m_5\}$ only once, as the vertex $m_5$ is incident with odd number of edges in $F$. For the green connected component, we do not extend $D(F)$ since, in this connected component, there is only one vertex from $M$.}
	\label{fig:impact_def}
\end{figure}

From the Handshaking Lemma we get the following observation.
\begin{observation}\label{obs:impact}
    Let $m_i$ be as in \Cref{def:impact}.
    Then $m_i$ is incident with an even number of edges of $D(F)$ if and only if it is incident with an even number of edges of $F$.
\end{observation}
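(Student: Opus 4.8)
The plan is to reduce the statement to a single parity count inside one connected component of $(C\cup M,F)$. First I would unwind the construction of $D(F)$ in \Cref{def:impact}: every edge of $D(F)$ incident with $m_i$ has the form $\{m_i,m_j\}$ with $m_j$ ranging over $(M\cap V(H))\setminus\{m_i\}$, where $H$ is the connected component of $(C\cup M,F)$ of which $m_i$ is the least-indexed $M$-vertex, and this edge is inserted once if $m_j$ has odd degree in $F$ and twice if $m_j$ has even degree in $F$. Hence, modulo $2$, the number of $D(F)$-edges at $m_i$ equals the number of vertices of $(M\cap V(H))\setminus\{m_i\}$ having odd degree in $F$. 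I would also record that every $F$-edge incident with $m_i$ lies in $H$, so the degree of $m_i$ in $F$ coincides with its degree in the submultigraph $H$ (with edge multiset $F$ restricted to $H$).

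The key step is then the Handshaking Lemma applied to $H$: the number of odd-degree vertices of $H$ is even. By condition (i) of \Cref{def:behavior}, every vertex of $C$ occurring in $H$ has even degree, so all odd-degree vertices of $H$ lie in $M\cap V(H)$; writing $O$ for this set, $|O|$ is even. A two-line case distinction now finishes the argument: if $m_i$ has even degree in $F$ then $m_i\notin O$, so $|O\setminus\{m_i\}|=|O|$ is even; if $m_i$ has odd degree in $F$ then $m_i\in O$, so $|O\setminus\{m_i\}|=|O|-1$ is odd. Combined with the first paragraph, this is exactly the claimed equivalence between the parity of the $F$-degree and the $D(F)$-degree of $m_i$.

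I do not expect a real obstacle here; the only thing needing care is the bookkeeping around which $D(F)$-edges touch $m_i$ — namely that $m_i$ is never the ``other endpoint'' $m_j$ for a different component (impossible, since $m_i$ is the least-indexed $M$-vertex of its own component) and that $m_i$'s $F$-degree is not split across components (it is not, since any $F$-edge at $m_i$ joins $m_i$ to a vertex of its own component). Once these are pinned down, the whole observation is just the parity of $|O|$ together with whether $m_i\in O$, both of which follow from the Handshaking Lemma and the evenness of the $C$-degrees guaranteed by \Cref{def:behavior}.
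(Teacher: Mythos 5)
Your argument is correct and is precisely the intended one: the paper gives no explicit proof, merely noting that the observation follows from the Handshaking Lemma, and your write-up fills in exactly that argument (parity of the odd-degree set $O$ of the component $H$, with all of $O$ forced into $M\cap V(H)$ by condition (i) of the behavior definition, plus the case split on whether $m_i\in O$). The bookkeeping points you flag — that $m_i$ receives $D(F)$-edges only from its own component and that its $F$-degree is entirely within $H$ — are the right things to check and are handled correctly.
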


Next, we bound the number of possible impacts of behaviors.

\begin{observation}\label{obs:num_impacts}
    $|\allImp| = k^{\O(r)}$.
\end{observation}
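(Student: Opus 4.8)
The plan is a direct counting argument. The key observation is that an impact $\Imp(F)=(T(F),D(F))$ lives in a universe that depends only on the modulator $M$ (and the constant $r$), not on the component $C$: $T(F)$ is a subset of $M$ and $D(F)$ is a multiset of edges on the vertex set $M$. Hence it suffices to bound separately the number of possible values of $T(F)$ and of $D(F)$ and multiply; the (possibly unbounded) number of components $C$ of $R$ is irrelevant once we have a uniform bound on the number of pairs $(T,D)$ that can arise as $\Imp(F)$ for some $F\in B(C)$.

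First I would bound $|T(F)|$. Since $G_C=G[C\cup M]\setminus E_M$ contains no edge with both endpoints in $M$, every edge of a behavior $F$ incident with a vertex of $M$ has exactly one endpoint in $M$; as $F$ contains at most $2r$ such edge occurrences by \Cref{def:behavior}, we get $|T(F)|\le 2r$. Therefore $T(F)$ ranges over subsets of $M$ of size at most $2r$, and there are at most $\sum_{i=0}^{2r}\binom{k}{i}\le (2r+1)\,k^{2r}=k^{\O(r)}$ of those.

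Next I would bound the number of possible multisets $D(F)$. By construction $D(F)$ consists of edges $\{m_i,m_j\}$ on the vertex set $M$, each occurring at most twice, and for each connected component $H$ of $(C\cup M,F)$ with at least two vertices of $M$ and each non-minimal $m_j\in M\cap V(H)$ exactly one such edge is contributed; all these $m_j$ are touched vertices, so $D(F)$ has at most $|T(F)|\le 2r$ distinct edges, each with multiplicity in $\{1,2\}$. Since there are at most $\binom{k}{2}\le k^2$ candidate edges on $M$, the number of such multisets is at most $\left(\sum_{i=0}^{2r}\binom{\binom{k}{2}}{i}\right)\cdot 2^{2r}\le (2r+1)\,k^{4r}\cdot 2^{2r}=k^{\O(r)}$. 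Combining the two bounds gives $|\allImp|\le k^{\O(r)}\cdot k^{\O(r)}=k^{\O(r)}$.

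I do not expect a real obstacle here; the statement is essentially a routine estimate. The one point deserving care is the bound on the number of distinct edges in $D(F)$: one must explicitly invoke that edges of $D(F)$ only join vertices of $M$ lying in the same component $H$ of $(C\cup M,F)$ and that each non-minimal such vertex is charged at most once, so that — together with $|T(F)|\le 2r$ — one obtains the $\O(r)$ bound on $|D(F)|$ uniformly over all components $C$, rather than a bound that degrades with the number of components.
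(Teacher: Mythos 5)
Your proof is correct and follows essentially the same route as the paper: both bound $|T(F)|\le 2r$ via the cap on edge occurrences incident with $M$, and then bound the number of possible $D(F)$ by a routine count (the paper enumerates the partition of $T(F)$ into components plus a parity bit per vertex, while you enumerate $D(F)$ directly as a multiset of at most $2r$ edges on $M$ with multiplicities in $\{1,2\}$ — an equivalent $k^{\O(r)}$ estimate). No gap; the argument matches the paper's in substance.
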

\begin{proof}
	Since there are at most $2r$ vertices incident on $M$ in any behavior $F$, there are at most $2r$ vertices in $T(F)$.
	Hence, there are $k^{\O(r)}$ possible sets $T(F)$.
	There are at most $(2r)^{2r} = k^{\O(r)}$ possible ways to split a particular set $T(F)$ to connected components of $D(F)$.
	Finally, we need to determine for each vertex in $T(F)$, whether it has odd or even degree in $D(F)$.
	There are at most $2^{2r} = k^{\O(r)}$ options for that.
	Alltogether, there are $k^{\O(r)} \cdot k^{\O(r)} \cdot k^{\O(r)}= k^{\O(r)}$ possible impacts in total.
\end{proof}

Now, we want to give some properties that an optimal solution should have.
Namely, we want to see that most of the connected components of $G \setminus M$ fallback to their natural behavior.
Towards this, we first prove that there are only a few segments that are not part of a behavior of any component; recall that if the solution ``revisited'' some terminals in the connected component, this segment might not be part of any behavior.

\begin{lemma}\label{lem:mostly_natural}
    There is an optimal solution $S$ such that there are at most $2|M|$ segments of $S$ that are not part of any $\mathcal{F}(S,C)$, and, furthermore, for all but at most $2|\allImp|^2+2|M|$ components we have $F(S,C)=\natBeh(C)$ and $F(S,C)$ contains all edges of $S$ incident with $C$.
\end{lemma}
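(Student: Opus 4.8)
The plan is to fix a \emph{nice} optimal solution $S$ that minimizes the number of \emph{exceptional} components, where a component $C$ is exceptional if $F(S,C)\neq\natBeh(C)$ or if some segment not lying in $\mathcal{F}(S,C)$ touches $C$; these are exactly the components violating the conclusion (the second alternative is equivalent to $F(S,C)$ not containing all edges of $S$ incident with $C$, because every edge of a segment in $\mathcal{F}(S,C)$ is incident with $C$). I would then bound the two causes of exceptionality separately, getting at most $2|M|$ from the second and at most $2|\allImp|^2$ from the first.

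First I would bound the number of \emph{bad} segments (those in no $\mathcal{F}(S,C)$) by $2|M|$, which in particular bounds by $2|M|$ the number of components touched by a bad segment. Each bad segment runs between two vertices of $M$ (possibly equal); record these as edges of an auxiliary multigraph $H$ on $M$. If $|E(H)|>2|M|-2$, then by \Cref{lem:many_edges_cycle} there is a cycle $K$ in $H$ whose removal preserves the connected components of $H$; deleting from $G_S$ the edge multisets of the segments corresponding to $E(K)$ keeps all degrees even (every vertex of $M$ loses an even number of edge endpoints, since $K$ is a cycle), keeps the multigraph connected (reroute any path through a deleted segment along the alternative $H$-path guaranteed by the choice of $K$), and keeps every vertex visited (each vertex of a component $C$ is still visited by the first segment that visited it, which lies in $\mathcal{F}(S,C)$ and is therefore not deleted). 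Since edge weights are positive (\Cref{rem:connectedInstancesAndPositiveWeights}) and at least one edge is deleted, this contradicts optimality of $S$.

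The core is to bound the number of components $C$ with $F(S,C)\neq\natBeh(C)$ and no bad segment touching $C$; for such a $C$ the edges of $G_S$ incident with $C$ are exactly $F(S,C)$, which is a behavior by \Cref{obs:solution_parts}. I would first argue $\Imp(F(S,C))\neq\Imp(\natBeh(C))$ for every such $C$: otherwise, replacing in $G_S$ the edges incident with $C$ by $\natBeh(C)$ keeps all degrees even (both edge multisets touch each vertex of $M$ with the same parity, by \Cref{obs:impact}), keeps the multigraph connected (the two behaviors induce the same connectivity pattern $D$ among the touched vertices of $M$, and $\natBeh(C)$ still visits all of $C$), and does not increase the cost, making $C$ non-exceptional without affecting any other component and contradicting the choice of $S$. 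Next, group the remaining such components by the pair $(\Imp(\natBeh(C)),\Imp(F(S,C)))\in\allImp\times\allImp$. If a group contained three components $C_1,C_2,C_3$ with common natural impact $I$ and common actual impact $I'$, I would replace in $G_S$ the edges incident with $C_1$ and with $C_2$ by $\natBeh(C_1)$ and $\natBeh(C_2)$, keeping $C_3$ untouched: the parity change at every vertex of $M$ is $\equiv 2(p_I+p_{I'})\equiv 0\pmod 2$ by \Cref{obs:impact}, so all degrees stay even; connectivity is preserved because $C_3$ still realizes the impact $I'$, so every path using the structure of $C_1$ or $C_2$ reroutes through $C_3$ while $\natBeh(C_1),\natBeh(C_2)$ reattach the vertices of $C_1,C_2$ to $M$; and the cost does not increase since natural behaviors are weight-minimal. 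This makes $C_1$ and $C_2$ non-exceptional, contradicting minimality; hence each group has at most two components, giving at most $2|\allImp|^2$ such components, and together with the previous paragraph at most $2|\allImp|^2+2|M|$ exceptional components in total. For every other component $C$ we then have $F(S,C)=\natBeh(C)$ and, no bad segment touching $C$, $F(S,C)$ contains all edges of $S$ incident with $C$.

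The main obstacle I expect is the interaction with niceness: the swaps above may produce solutions that are not nice, and re-nicifying them (removing redundant cycles as in \Cref{lem:savy_optimal_solution}) could change which components are exceptional, so the minimality measure must be chosen to be insensitive to this — for instance by minimizing over nice optimal solutions and showing that each swap can be followed by a re-nicification that does not increase the count. The remaining work is routine but needs care: verifying that "the edges of $G_S$ incident with $C$ equal $F(S,C)$" holds precisely when no bad segment touches $C$, and checking that the rerouting arguments really do keep the modified multigraphs connected.
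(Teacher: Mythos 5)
Your proposal is correct and follows essentially the same route as the paper's proof: the auxiliary multigraph on $M$ together with \Cref{lem:many_edges_cycle} to bound the segments outside every $\mathcal{F}(S,C)$ by $2|M|$, and a pigeonhole argument on pairs of impacts with a three-component exchange (rerouting connectivity through the third component and using \Cref{obs:impact} for parity) to bound the remaining deviating components by $2|\allImp|^2$, all driven by minimizing the number of exceptional components over optimal solutions. The subtleties you flag at the end (that a swap leaves every other component incident only with its natural behavior, hence non-exceptional for the new trail, and that niceness is preserved since optimality with positive weights already forces few edges) are exactly the routine verifications the paper carries out.
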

\begin{proof}
    Let $S$ be an optimal solution and $G_S$ be the multigraph formed by the edges used by~$S$.
    Let $\mathcal{S}$ be the set of segments of $S$ that are not part of any $\mathcal{F}(S,C)$.
    \begin{claim}
        $|\mathcal{S}| \le 2|M|$.
    \end{claim}
    \begin{claimproof}
        Suppose this is not the case.
        Let  $H$ be a multigraph on vertex set $M$ and initially no edges formed by the following process.
        For each $S_i \in \mathcal{S}$, if the segment $S_i$ starts in a vertex $u \in M$ and ends in $v \in M$, then we add to $H$ the edge $\{u,v\}$.

        As $|E(H)|=|\mathcal{S}| > 2|M|$, by \Cref{lem:many_edges_cycle}, there is a cycle $K$ in $H$, such that $H'=H \setminus E(K)$ has the same connected components as $H$.
        Let $K'$ be the closed walk formed by the segments of $\mathcal{S}$ corresponding to the edges of $K$.
        We claim that $G'_S=G_S \setminus E[K']$ is Eulerian.

        Each vertex of $R$ has even degree with respect to each segment.
        Each vertex of $M \cap V(K')$ appears in exactly two consecutive segments of $K'$ and, hence, has degree $2$ with respect to~$K'$.
        Therefore, as $G_S$ has even degrees, so does $G'_S$.

        Suppose that $G'_S$ is disconnected, i.e., there are vertices $u$ and $v$ such that there is no path between $u$ and $v$ in $G'_S$.
        If $u$ is in a component $C$ of $R$, then, since the segments of $\mathcal{F}(S,C)$ are unaffected by the change, there is a vertex $u' \in M$ in the same connected component of $G'_S$ as $u$.
        Similarly for $v$.
        Hence there are vertices $u',v' \in M$ such that there is no path between $u'$ and $v'$ in $G'_S$.
        In $G_S$ there was a walk between $u'$ and $v'$ formed by the segments of $S$.
        If none of the segments is in $K'$, then the walk is still present in $G'_S$.
        Otherwise, by the choice of $K$ and \Cref{lem:many_edges_cycle}, we can replace each edge of $K$ in $H$ by a walk in $H \setminus E(K)$ between its endpoints.
        Thus we can replace any missing segment by a walk formed by remaining segments, obtaining a walk between $u'$ and $v'$ in $G'_S$---a contradiction.
        Thus $G'_S$ is indeed Eulerian, yielding a solution $S'$ contradicting the optimality of $S$.
    \end{claimproof}
    Each $S_i \in \mathcal{S}$ is formed either by edges of some $E[G_C]$ or by a single edge on $M$.
    Let $\mathcal{C}_1 = \{ C \mid \exists S_i \in \mathcal{S} \colon S_i \subseteq E[G_C]\}$.
    We have $|\mathcal{C}_1| \le 2|M|$.
    Let $\mathcal{C}_2$ be the set of components $C$ of $R$ not in $\mathcal{C}_1$ with $F(S,C) \neq \natBeh(C)$.
    We may assume that $S$ is chosen so that $|\mathcal{C}_1|+|\mathcal{C}_2|$ is minimum.
    \begin{claim}
        $|\mathcal{C}_2| \le 2|\allImp|^2$.
    \end{claim}
    \begin{claimproof}
        Suppose this is not the case.
        Then let us classify the components $C$ of $\mathcal{C}_2$ according to pair of impacts $(\Imp(\natBeh(C)), \Imp(F(S,C)))$.
        Due to the number of the components, there are at least 3 that share the same pair, let us call them $C_1,C_2,C_3$.
        We obtain $G'_S$ by removing from $G_S$ the edges of $F(S,C_1)$ and $F(S,C_2)$ and replacing them with edges of $\natBeh(C_1)$ and $\natBeh(C_2)$, respectively.
        Note that by the definition of natural behavior, the total weight of $G'_S$ is at most that of $G_S$.
        We claim that $G'_S$ is again Eulerian.

        Let $(T,D)=\Imp(F(S,C_1))=\Imp(F(S,C_2))=\Imp(F(S,C_3))$.
        By \Cref{def:impact} and \Cref{obs:impact} the degree of each vertex in $M$ is odd with respect to $F(S,C_1)$ if and only if it is odd in $D$ which is if and only if it is odd with respect to $F(S,C_2)$.
        Hence the removal of edges of $F(S,C_1)$ and $F(S,C_2)$ changes the degree of each vertex in $M$ by an even number.
        Similarly, letting $(T',D')=\Imp(\natBeh(C_1))=\Imp(\natBeh(C_2))$, the degree of each vertex in $M$ is odd with respect to $\natBeh(C_1)$ if and only if it is odd in $D$ which is if and only if it is odd with respect to $\natBeh(C_2)$.
        Hence the addition of edges of $\natBeh(C_1)$ and $\natBeh(C_2)$ also changes the degree of each vertex of $M$ by an even number.
        As the degree of each vertex of $M$ is even in $G_S$ it follows that it is also even in $G'_S$.
        The degree of each vertex in $V(R) \setminus (V(C_1) \cup V(C_2))$ is the same in $G_S$ and $G'_S$, that is, even.
        Finally, each vertex of $C_1$ is only incident with edges of $\natBeh(C_1)$, since $C_1 \notin \mathcal{C}_1$.
        Hence it has even degree by \Cref{def:behavior}.
        Similarly for vertices of the component $C_2$.

        Now suppose that $G'_S$ is not connected, i.e., there are vertices $u$ and $v$ such that there is no path between $u$ and $v$ in $G'_S$.
        If $u$ is in a component $C$ of $R$, then, since $G'_S$ contains a behavior of $C$ (either $F(S,C)$ or $\natBeh(C)$), there is a vertex $u' \in M$ in the same connected component of $G'_S$ as $u$.
        Similarly for $v$.
        Hence there are vertices $u',v' \in M$ such that there is no path between $u'$ and $v'$ in $G'_S$.
        In $G_S$ there was a walk between $u'$ and $v'$ formed by the segments of $S$.
        If none of the segments is in $\mathcal{F}(S,C_1) \cup \mathcal{F}(S,C_2)$, then the walk is still present in $G'_S$.
        Let $S_i$ be a segment in $\mathcal{F}(S,C_1)$ which starts in $x \in M$ and ends in $y \in M$.
        Then $x$ and $y$ lie in the same component of $F(S,C_1)$, hence in the same component of $D$ and also in the same component of $F(S,C_3)$.
        Similarly for a segment in $\mathcal{F}(S,C_2)$.
        Therefore for each such segment there is a walk from $x$ to $y$ in $G'_S$.
        Thus we can replace any missing segment by a walk in $G'_S$, obtaining a walk between $u'$ and $v'$ in $G'_S$---a contradiction.
        Thus $G'_S$ is indeed Eulerian, yielding a solution $S'$.

        To get the final contradiction it remains to show that we have $|\mathcal{C}'_1| + |\mathcal{C}'_2| < |\mathcal{C}_1| + |\mathcal{C}_2|$.
        Let $\mathcal{S}'$ be the set of segments of $S'$ that are not part of any $\mathcal{F}(S',C)$.
        Let $\mathcal{C}'_1 = \{ C \mid \exists S_i \in \mathcal{S}' \colon S_i \subseteq E[G_C]\}$.
        Let $\mathcal{C}'_2$ be the set of components $C$ of $R$ not in $\mathcal{C}'_1$ with $F(S',C) \neq \natBeh(C)$.
        We claim that $\mathcal{C}'_1 \cup \mathcal{C}'_2 \subseteq (\mathcal{C}_1 \cup \mathcal{C}_2) \setminus \{C_1,C_2\}$.
        To this end, let $C$ be a component of $R$ not in $(\mathcal{C}_1 \cup \mathcal{C}_2) \setminus \{C_1,C_2\}$.
        Then $C$ is only incident with edges of $\natBeh(C)$ in $G'_S$.
        Thus, if $F(S',C) \neq \natBeh(C)$, then we have $F(S',C) \subsetneq \natBeh(C)$ which contradicts the choice of $\natBeh(C)$.
        Hence, $F(S',C) = \natBeh(C)$ and $C$ is incident with no other edges in $G'_S$, i.e., $C$ is neither in $\mathcal{C}'_1$ not in $\mathcal{C}'_2$.
        Therefore $\mathcal{C}'_1 \cup \mathcal{C}'_2 \subseteq (\mathcal{C}_1 \cup \mathcal{C}_2) \setminus \{C_1,C_2\}$ and $|\mathcal{C}'_1| + |\mathcal{C}'_2| < |\mathcal{C}_1| + |\mathcal{C}_2|$, contradicting the choice of $S$.
    \end{claimproof}
    Hence we have $|\mathcal{C}_1|+|\mathcal{C}_2|\le 2|\allImp|^2+2|M|$, completing the proof.
\end{proof}

Next, we want to introduce the notion of price of a transition between two impacts of a connected component.

\begin{definition}\label{def:impact_change_price}
    For a connected component~$C$ of $R$ and an impact $I \in \allImp$ let $\allBeh(C,I)$ be the set of behaviors $F \in B(C)$ satisfying $\Imp(F)=I$.

    For a connected component~$C$ of $R$ and a pair $I, I' \in \allImp$ let \emph{the price $P(C,I,I')$ of change} from $I$ to $I'$ at $C$ be $\min\{\wFn(F) \mid F \in \allBeh(C,I')\} - \wFn(\natBeh(C))$ if $\allBeh(C,I')$ is non-empty and $I= \Imp(\natBeh(C))$ and we let it be $\infty$ if some of the conditions is not met.
\end{definition}

Now, we are ready to introduce the reduction rule used in this section.
A novel notion here are the components marked in blue.
These are supposed to cover the segments which are not part of any behavior.

\begin{rrule}\label{rul:component_rule}
    For each pair $I, I' \in \allImp$ if there are at most $2|\allImp|^{2}+2|M|$ components $C$ with finite $P(C,I,I')$, then mark all of them in red.
    Otherwise mark $2|\allImp|^{2}+2|M|$ components~$C$ with the least $P(C,I,I')$ in red.

    For each pair of vertices $u,v \in M$, if there is a component $C$ in $R$ such that $G_C$ contains a $u$-$v$-path, then mark a component which contains the shortest such path in blue.

    For each $I \in \allImp$, if there are unmarked components $C$ with $\Imp(\natBeh(C))=I$, then do the following.
    If the number of such components is odd, then mark one arbitrary such component in green.
    If the number of such components is even, then mark two arbitrary such components in green.

    For each unmarked component $C$, remove $C$ from $G$ and reduce $\budget$ by $\wFn(\natBeh(C))$.
\end{rrule}
\begin{proof}[Safeness]
    Let $(G,\wFn,\budget)$ be the original instance and $(\widehat{G}, \widehat{w}, \widehat{\budget})$ be the new instance resulting from the application of the rule.
    Note that $\widehat{w}$ is just the restriction of $w$ to $\widehat{G}$ which is a subgraph of $G$.
    Let $\mathcal{C}^-$ be the set of components of $R$ removed by the rule.
    Note that $\widehat{\budget} = \budget - \sum_{C \in \mathcal{C}^-} \wFn(\natBeh(C))$.
    We first show that if the new instance is a \YESi, then so is the original one.

    Let $\widehat{S}$ be a solution walk in the new instance and let $\widehat{G}_S$ be the corresponding multigraph formed by edges of $\widehat{S}$.
    Note that the total weight of $\widehat{S}$ is at most $\widehat{\budget}$.
    We construct a multigraph $G_S$ by adding to $\widehat{G}_S$ for each component $C$ in $\mathcal{C}^-$ the vertex set $V(C)$ together with the edge set $\natBeh(C)$.
    Since each connected component of $(M \cup C, \natBeh(C))$ contains a vertex of $M$ and $\widehat{G}_S$ is connected, $G_S$ is also connected.
    Furthermore, for each $I \in \allImp$ there is an even number of components $C$ in $\mathcal{C}^-$ such that $\Imp(\natBeh(C))=I$.
    If $I=(T,D)$, then we have for each such $C$ that the degree of each vertex in $M$ is odd in $(M \cup C, \natBeh(C))$ if and only if it is odd in~$D$.
    Hence the degree of each vertex in $M$ changes by an even number by the addition of edges of $\natBeh(C)$ for all such components.
    Moreover, the degree of each vertex in $C$ is even in $(M \cup C, \natBeh(C))$.
    As the degree of each vertex is even in $\widehat{G}_S$, it follows that the degree of each vertex is even in $G_S$.
    Hence $G_S$ is Eulerian and contains all vertices of $G$.
    Since the weight of the added edges is exactly $\sum_{C \in \mathcal{C}^-} \wFn(\natBeh(C))$ it follows that $(G,\wFn,\budget)$ is a \YESi.

    Now suppose that the original instance $(G,\wFn,\budget)$ is a \YESi.
    We first show that there is an optimal solution $S$ such that for each $C$ in $\mathcal{C}^-$ we have $F(S,C)=\natBeh(C)$ and there are no other incident edges of $G_S$.
    Let again $\mathcal{S}$ be the set of segments of $S$ that are not part of any $\mathcal{F}(S,C)$.
    Let $\mathcal{C}_1 = \{ C \mid \exists S_i \in \mathcal{S} \colon S_i \subseteq E[G_C]\}$.
    We may assume that $S$ is chosen so that the number of edges of $S$ (including multiplicities) in connected components that are not marked blue is minimized.
    \begin{claim}\label{claim:blue_comps}
        All components in $\mathcal{C}_1$ are marked blue.
    \end{claim}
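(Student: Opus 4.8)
The plan is a standard exchange argument powered by the minimality of $S$. Suppose for contradiction that some component $C\in\mathcal C_1$ is not marked blue. By definition of $\mathcal C_1$ there is a segment $S_i\in\mathcal S$ with $S_i\subseteq E[G_C]$; write $u,v\in M$ for its two endpoints (a segment starts and ends in $M$, its inner vertices lie in $C$). The fact that $S_i\notin\mathcal F(S,C)$ says that $S_i$ is redundant for coverage: every vertex of $C$ visited by $S_i$ is already visited by an earlier segment of $S$.

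First I would dispose of the degenerate case $u=v$. Then $S_i$ is a closed sub-walk of $S$ with at least two edges (the graph is simple, so leaving $u$ and returning to $u$ through $C$ uses $\ge 2$ edges), and deleting $S_i$ from $S$ yields a shorter closed walk that still visits every vertex; by positivity of the weights (\Cref{rem:connectedInstancesAndPositiveWeights}) it is strictly cheaper, contradicting optimality of $S$. Hence $u\neq v$. Now $S_i$ is a $u$--$v$ walk inside $G_C$, so it contains a $u$--$v$ path $Q\subseteq E[G_C]$ with $\wFn(Q)\le\wFn(S_i)$ (again using positivity of weights). In particular $G_C$ contains a $u$--$v$ path, so the blue-marking step of \Cref{rul:component_rule} must have marked some component $C'$ blue together with a shortest (minimum-weight) $u$--$v$ path $P'\subseteq E[G_{C'}]$ taken over all components of $R$; since $C$ is not blue, $C'\neq C$, and by the choice of $P'$ we get $\wFn(P')\le\wFn(Q)\le\wFn(S_i)$.

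The key step is then to splice $P'$ in for $S_i$: replacing the occurrence of the segment $S_i$ in $S$ by the walk $P'$ (both go from $u$ to $v$) produces a closed walk $S'$. It is a feasible solution because it visits every vertex of $G$ --- no vertex of $C$ is lost, since $S_i$ was redundant; all other segments are untouched; and $P'$ only adds extra vertices of $C'$ --- and $\wFn(S')=\wFn(S)-\wFn(S_i)+\wFn(P')\le\wFn(S)$, so $S'$ is again optimal. Finally, every edge of $S_i$ is incident with the non-blue component $C$ and every edge of $P'$ is incident with the blue component $C'$, while the rest of $S$ is unchanged; hence $S'$ has strictly fewer edge traversals inside non-blue components than $S$, contradicting the choice of $S$. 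Therefore every component of $\mathcal C_1$ is marked blue. I expect the only real subtlety to be checking that the splice keeps the walk feasible and within budget --- which is exactly what the redundancy of $S_i$ and the shortest-path choice of $P'$ provide --- together with the harmless but necessary book-keeping of the $u=v$ corner case; note also that we are in the plain \TSPshort setting here, so re-routing through the already-present component $C'$ cannot violate any edge capacity.
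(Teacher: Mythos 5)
Your proof is correct and takes essentially the same route as the paper: an exchange argument that swaps the redundant segment for the shortest $u$--$v$ path through the blue-marked component, checks the weight does not increase, and contradicts the minimality of edge traversals in non-blue components. Your walk-splicing formulation makes parity and connectivity automatic (the paper instead verifies even degrees and connectivity of the modified Eulerian multigraph explicitly), and your explicit handling of the $u=v$ case is a detail the paper glosses over.
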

    \begin{claimproof}
        Suppose that there is a component $C$ in $\mathcal{C}_1$ which is not marked in blue, and $S_i$ is the corresponding segment connecting vertices $x$ and $y$ of $M$.
        There is a path from $x$ to $y$ in $G_C$, hence, the component $C_{xy}$ which contains shortest such path is marked in blue.
        Consider the graph $G'_S$ obtained by removing $S_i$ from $G_S$ and adding the shortest path $P$ from $x$ to $y$ in $G_{C_{xy}}$ (increasing the multiplicity if some of the edges are already present).
        Since $P$ is a shortest such path, the total weight of $G'_S$ is at most that of $G_S$.
        We claim that $G'_S$ is also Eulerian.

        Each vertex of $R$ has even degree with respect to each segment and also with respect to~$P$.
        Vertices $x$ and $y$ were incident with one edge of the segment $S_i$ each in $G_S$ and are incident with one edge of $P$ each in $G'_S$.
        Therefore, as $G_S$ has even degrees, so does $G'_S$.

        Suppose that $G'_S$ is disconnected, i.e., there are vertices $u$ and $v$ such that there is no path between $u$ and $v$ in $G'_S$.
        If $u$ is in a component $C'$ of $R$, then, since the segments of $\mathcal{F}(S,C')$ are unaffected by the change, there is a vertex $u' \in M$ in the same connected component of $G'_S$ as $u$.
        Similarly for $v$.
        Hence there are vertices $u',v' \in M$ such that there is no path between $u'$ and $v'$ in $G'_S$.
        In $G_S$ there was a walk between $u'$ and $v'$ formed by the segments of $S$.
        If it contained the segment $S_i$, then we can replace it by $P$, obtaining a walk between $u'$ and $v'$ in $G'_S$---a contradiction.
        Thus $G'_S$ is indeed Eulerian, yielding an optimal solution~$S'$ with less edges in connected components not marked blue, contradicting the choice of $S$.
    \end{claimproof}

    Let $\mathcal{C}_2$ be the set of components $C$ of $R$ not in $\mathcal{C}_1$ with $F(S,C) \neq \natBeh(C)$.
    By \Cref{lem:mostly_natural} we may assume that $|\mathcal{C}_1|+|\mathcal{C}_2|\le 2|\allImp|^2+2|M|$ (note that this is not affected by the argument above).
    We may further assume that among all such solutions with all components in $\mathcal{C}_1$ marked blue, $S$ is chosen such that the number of components in $\mathcal{C}_2$ that are not marked red is minimized.
    \begin{claim}\label{claim:red_comps}
        All components of $\mathcal{C}_2$ are marked red.
    \end{claim}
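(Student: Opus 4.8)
The plan is to argue by contradiction, closely following the template of \Cref{claim:blue_comps} and of the second claim inside the proof of \Cref{lem:mostly_natural}. Assume some component $C_1 \in \mathcal{C}_2$ is not marked red, and put $I := \Imp(\natBeh(C_1))$ and $I' := \Imp(F(S,C_1))$. Since $F(S,C_1)$ itself witnesses that $\allBeh(C_1,I')$ is non-empty while $I = \Imp(\natBeh(C_1))$ holds by definition, the price $P(C_1,I,I')$ is finite; as $C_1$ is not red for the pair $(I,I')$, the rule must have taken its ``otherwise'' branch for this pair, marking exactly $2|\allImp|^{2}+2|M|$ components red, each of price at most $P(C_1,I,I')$. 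I would first record a bookkeeping fact used twice below: the components of $R$ whose multiset of incident edges in $G_S$ differs from $\natBeh(C)$ are \emph{precisely} those of $\mathcal{C}_1\cup\mathcal{C}_2$. Indeed, if all edges incident with $C$ in $G_S$ already form $\natBeh(C)$, then $F(S,C)$ is a behavior of $C$ (\Cref{obs:solution_parts}) and a sub-multiset of $\natBeh(C)$, hence equal to $\natBeh(C)$ by minimality of the natural behavior and positivity of weights; this leaves no spare copy of any edge for a segment inside $E[G_C]$ outside $\mathcal{F}(S,C)$, so $C\notin\mathcal{C}_1\cup\mathcal{C}_2$. Combining $|\mathcal{C}_1|+|\mathcal{C}_2|\le 2|\allImp|^{2}+2|M|$ (assumed via \Cref{lem:mostly_natural}) with the fact that $C_1$ lies in $\mathcal{C}_1\cup\mathcal{C}_2$ but not in the red set for $(I,I')$, a cardinality/pigeonhole argument produces a component $C_2$ that \emph{is} red for $(I,I')$ and is incident in $G_S$ with exactly the edges of $\natBeh(C_2)$.

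Then I would perform an exchange of behaviors between $C_1$ and $C_2$. Let $F^{\star}\in\allBeh(C_2,I')$ be of minimum weight, so $\wFn(F^{\star})-\wFn(\natBeh(C_2))=P(C_2,I,I')$ by \Cref{def:impact_change_price}. Build $G'_S$ from $G_S$ by deleting the copies of the edges of $F(S,C_1)$ and inserting $\natBeh(C_1)$, and by deleting $\natBeh(C_2)$ and inserting $F^{\star}$. Writing $P'_1:=\wFn(F(S,C_1))-\wFn(\natBeh(C_1))\ge P(C_1,I,I')\ge P(C_2,I,I')$, the total weight changes by $-P'_1+P(C_2,I,I')\le 0$, so once $G'_S$ is shown to be Eulerian, any of its Euler trails $S'$ is an optimal solution.

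Next I would verify that $G'_S$ is Eulerian. For parity, vertices of $C_1$ and of $C_2$ keep even degree since $\natBeh(C_1)\in B(C_1)$ and $F^{\star}\in B(C_2)$, and for $m\in M$ the net parity change is $0$ because, by \Cref{def:impact} and \Cref{obs:impact}, the parity of the degree of $m$ in a behavior is a function of its impact only, and here $\Imp(\natBeh(C_1))=I=\Imp(\natBeh(C_2))$ and $\Imp(F(S,C_1))=I'=\Imp(F^{\star})$. For connectivity, every segment of $S$ routed through $C_1$ joins two vertices of $M$ lying in a common component of the $D$-part of $I'$, which, as $\Imp(F^{\star})=I'$, are joined through $C_2$ in $G'_S$; symmetrically, segments routed through $C_2$ join vertices of $M$ inside a component of the $D$-part of $I$, which are joined through $C_1$ by $\natBeh(C_1)$; since in addition every behavior's components touch $M$, each vertex of $R$ in $G'_S$ still reaches $M$. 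Rerouting each affected segment through the ``other'' of $C_1,C_2$ then yields, for any two vertices of $G'_S$, a connecting walk, exactly in the style of \Cref{claim:blue_comps} and \Cref{lem:mostly_natural}.

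Finally I would draw the contradiction. By the bookkeeping fact, since the edge multiset incident with a component $C\neq C_1,C_2$ is unchanged, equals $\natBeh(C_1)$ for $C_1$, and equals $F^{\star}$ for $C_2$, we get $\mathcal{C}'_1\cup\mathcal{C}'_2=\bigl((\mathcal{C}_1\cup\mathcal{C}_2)\setminus\{C_1\}\bigr)\cup\{C_2\}$ (the last term only if $F^{\star}\neq\natBeh(C_2)$), so in particular $|\mathcal{C}'_1|+|\mathcal{C}'_2|\le 2|\allImp|^{2}+2|M|$; choosing $S'$ to traverse $R$ outside $C_1\cup C_2$ exactly as $S$ does keeps all revisit-segments except possibly one inside $C_2$, and that one I would remove by re-applying the shortest-path rerouting of \Cref{claim:blue_comps}, sending it into a blue component, which restores the invariant that every component of $\mathcal{C}_1$ is marked blue while introducing no new non-red component of $\mathcal{C}_2$. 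Since $C_1\notin\mathcal{C}'_2$ while $C_1\in\mathcal{C}_2$ was not red, and $C_2$ — even if it now lies in $\mathcal{C}'_2$ — is red, the resulting solution has strictly fewer components of its $\mathcal{C}_2$ that are not marked red, contradicting the choice of $S$. I expect the main obstacle to be exactly this last step: simultaneously preserving optimality, the $|\mathcal{C}_1|+|\mathcal{C}_2|$ bound, and the ``$\mathcal{C}_1$ marked blue'' invariant for $S'$, because the reclassified component $C_2$ now carries the non-natural impact $I'$ and may be forced by its structure into a redundant excursion; handling this cleanly (together with the routine but lengthy connectivity rerouting in the Eulerianity check) is the delicate part.
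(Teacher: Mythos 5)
Your proposal is correct and follows essentially the same route as the paper's proof: the same pigeonhole argument (using $|\mathcal{C}_1|+|\mathcal{C}_2|\le 2|\allImp|^2+2|M|$ and the $2|\allImp|^2+2|M|$ red marks for the pair $(I,I')$) to find a red partner component outside $\mathcal{C}_1\cup\mathcal{C}_2$ attached exactly via its natural behavior, the same behavior swap with the weight bound via the price inequality, the same impact-based parity and segment-rerouting connectivity checks, and the same final containment of $\mathcal{C}'_1\cup\mathcal{C}'_2$ to contradict the minimality of $S$. The only difference is presentational: you isolate the ``incident edges equal $\natBeh(C)$ iff $C\notin\mathcal{C}_1\cup\mathcal{C}_2$'' fact up front and explicitly flag the bookkeeping needed to keep $S'$ inside the comparison class (optimality, the cardinality bound, and the blue invariant), a point the paper's proof handles only implicitly via the stated containment.
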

    \begin{claimproof}
        Suppose that $C_{\rm out}$ is in $\mathcal{C}_2$ but not marked in red.
        Let $I=\Imp(\natBeh(C_{\rm out}))$ and $I'=\Imp(F(S,C_{\rm out}))$.
        Note that since $\wFn(F(S,C_{\rm out}))\ge \min\{\wFn(F) \mid F \in \allBeh(C_{\rm out},I')\}$, we have $\wFn(F(S,C_{\rm out}))-\wFn(\natBeh(C_{\rm out})) \ge P(C_{\rm out},I,I')$.
        Since $P(C_{\rm out},I,I')$ is finite and $C_{\rm out}$ is not marked in red, there are at least $2|\allImp|^{2}+2|M|$ components $C$ marked in red such that $P(C,I,I') \le P(C_{\rm out},I,I')$.
        Since $|\mathcal{C}_1|+|\mathcal{C}_2|\le 2|\allImp|^2+2|M|$ and $C_{\rm out}$ is in $\mathcal{C}_2$, at least one of these components is not in $\mathcal{C}_1 \cup \mathcal{C}_2$.
        Let us denote one such component~$C_{\rm in}$.
        Note that we have $F(S,C_{\rm in})=\natBeh(C_{\rm in})$.

        Let $F_{I'} \in \allBeh(C_{\rm in},I')$ be such that $\wFn(F_{I'})=\min\{\wFn(F) \mid F \in \allBeh(C_{\rm in},I')\}$, i.e., $P(C_{\rm in},I,I')=\wFn(F_{I'})-\wFn(\natBeh(C_{\rm in}))$.
        We obtain $G'_S$ by removing from $G_S$ the edges of $F(S,C_{\rm out})$ and $\natBeh(C_{\rm in})$ and replace them with edges of $\natBeh(C_{\rm out})$ and $F_{I'}$, respectively.
        Since $\wFn(F(S,C_{\rm out}))-\wFn(\natBeh(C_{\rm out})) \ge P(C_{\rm out},I,I') \ge P(C_{\rm in},I,I')=\wFn(F_{I'})-\wFn(\natBeh(C_{\rm in}))$, the total weight of $G'_S$ is at most that of $G_S$.
        We claim that $G'_S$ is also Eulerian.

        Let $I=(T,D)$ and $I'=(T',D')$.
        The degree of each vertex in $M$ is odd with respect to $F(S,C_{\rm out})$ if and only if it is odd with respect to $D'$ which is if and only if it is odd with respect to $F_{I'}$.
        Similarly, the degree of each vertex in $M$ is odd with respect to $\natBeh(C_{\rm in})$ if and only if it is odd with respect to $D$ which is if and only if it is odd with respect to $\natBeh(C_{\rm out})$.
        As the degrees are even in $G_S$, they are also even in $G'_S$.
        Degree of each vertex in $V(R) \setminus (V(C_{\rm in}) \cup V(C_{\rm out}))$ is unchanged between $G_S$ and $G'_S$.
        The degree of vertices in $V(C_{\rm in})$ and $V(C_{\rm out})$ is even by \Cref{def:behavior}, since they are only incident with edges of $\natBeh(C_{\rm in})$ and $F_{I'}$, respectively.
        Hence, all the degrees in $G'_S$ are even.

        Suppose that $G'_S$ is disconnected, i.e., there are vertices $u$ and $v$ such that there is no path between $u$ and $v$ in $G'_S$.
        If $u$ is in a component $C$ of $R$, then, since $G'_S$ contains a behavior of $C$ (either $F(S,C)$, $\natBeh(C_{\rm out})$, or $F_{I'}$), there is a vertex $u' \in M$ in the same connected component of $G'_S$ as $u$.
        Similarly for $v$.
        Hence there are vertices $u',v' \in M$ such that there is no path between $u'$ and $v'$ in $G'_S$.
        In $G_S$ there was a walk between $u'$ and $v'$ formed by the segments of $S$.
        If none of the segments is in $\mathcal{F}(S,C_{\rm out}) \cup \mathcal{F}(S,C_{\rm in})$, then the walk is still present in $G'_S$.
        Let $S_i$ be a segment in $\mathcal{F}(S,C_{\rm out})$ which starts in $x \in M$ and ends in $y \in M$.
        Then $x$ and $y$ lie in the same component of $F(S,C_{\rm out})$, hence in the same component of $D'$ and also in the same component of $F_{I'}$.
        Similarly let $S_i$ be a segment in $\mathcal{F}(S,C_{\rm in})$ which starts in $x \in M$ and ends in $y \in M$.
        Then $x$ and $y$ lie in the same component of $\natBeh(C_{\rm in})$, hence in the same component of $D$ and also in the same component of $\natBeh(C_{\rm out})$.
        Therefore for each such segment there is a walk from $x$ to $y$ in $G'_S$.
        Thus we can replace any missing segment by a walk in $G'_S$, obtaining a walk between $u'$ and $v'$ in $G'_S$---a contradiction.
        Thus $G'_S$ is indeed Eulerian, yielding a solution $S'$.

        Let $\mathcal{S}'$ be the set of segments of $S'$ that are not part of any $\mathcal{F}(S',C)$.
        Let $\mathcal{C}'_1 = \{ C \mid \exists S_i \in \mathcal{S}' \colon S_i \subseteq E[G_C]\}$.
        Let $\mathcal{C}'_2$ be the set of components $C$ of $R$ not in $\mathcal{C}'_1$ with $F(S',C) \neq \natBeh(C)$.
        We claim that $\mathcal{C}'_1 \cup \mathcal{C}'_2 \subseteq (\mathcal{C}_1 \cup \mathcal{C}_2 \cup \{C_{\rm in}\}) \setminus \{C_{\rm out}\}$.
        To this end, let $C$ be a component of $R$ not in $(\mathcal{C}_1 \cup \mathcal{C}_2 \cup \{C_{\rm in}\}) \setminus \{C_{\rm out}\}$.
        Then $C$ is only incident with edges of $\natBeh(C)$ in $G'_S$.
        Thus, if $F(S',C) \neq \natBeh(C)$, then we have $F(S',C) \subsetneq \natBeh(C)$ which contradicts the choice of $\natBeh(C)$.
        Hence, $F(S',C) = \natBeh(C)$ and $C$ is incident with no other edges in $G'_S$, i.e., $C$ is neither in $\mathcal{C}'_1$ not in $\mathcal{C}'_2$.
        Therefore $\mathcal{C}'_1 \cup \mathcal{C}'_2 \subseteq (\mathcal{C}_1 \cup \mathcal{C}_2 \cup \{C_{\rm in}\}) \setminus \{C_{\rm out}\}$, in particular $C_{\rm out}$ is not in $\mathcal{C}'_2$ and, hence, $S'$ has less components in $\mathcal{C}_2$ that are not marked red, contradicting the choice of $S$.
    \end{claimproof}

    As all components in $\mathcal{C}_1$ are marked in blue and all components in $\mathcal{C}_2$ are marked in red, it follows that for all components $C$ marked in green or unmarked we have $F(S,C)=\natBeh(C)$ and $S$ contains no other edges incident with $C$.

    Let $\widehat{G}_S$ be obtained from $G_S$ by removing the edges of $\natBeh(C)$ and vertices of $V(C)$ for all $C$ in $\mathcal{C}^-$.
    This reduces the total weight by exactly $\sum_{C \in \mathcal{C}^-} \wFn(\natBeh(C))$, hence, as $G_S$ is of weight at most $\budget$, $\widehat{G}_S$ is of weight at most $\widehat{\budget}$.
    It remains to show that $\widehat{G}_S$ is Eulerian.

    For each $I \in \allImp$ there is an even number of components $C$ in $\mathcal{C}^-$ such that $\Imp(\natBeh(C))=I$.
    If $I=(T,D)$, then we have for each such $C$ that the degree of each vertex in $M$ is odd in $(M \cup C, \natBeh(C))$ if and only if it is odd in~$D$.
    Hence the degree of each vertex in $M$ changes by an even number by the removal of edges of $\natBeh(C)$ for all such components.
    Moreover, the degree of each vertex in $V(\widehat{G}_S) \setminus M$ stays unchanged.
    As the degree of each vertex is even in $G_S$, it follows that the degree of each vertex is even in $\widehat{G}_S$.

    Suppose that $\widehat{G}_S$ is disconnected, i.e., there are vertices $u$ and $v$ such that there is no path between $u$ and $v$ in $G'_S$.
    If $u$ is in a component $C$ of $\widehat{G} \setminus M$, then, since $\widehat{G}_S$ contains behavior $F(S,C)$ of $C$, there is a vertex $u' \in M$ in the same connected component of $\widehat{G}_S$ as~$u$.
    Similarly for $v$.
    Hence there are vertices $u',v' \in M$ such that there is no path between $u'$ and $v'$ in $\widehat{G}_S$.
    In $G_S$ there was a walk between $u'$ and $v'$ formed by the segments of $S$.
    If none of the segments is in $\mathcal{F}(S,C)$, for some $C \in \mathcal{C}^-$, then the walk is still present in~$\widehat{G}_S$.
    Let $S_i$ be a segment in $\mathcal{F}(S,C)$, for some $C \in \mathcal{C}^-$, which starts in $x \in M$ and ends in $y \in M$.
    Since $C$ is unmarked, there is at least one component $C_{\rm green}$ marked in green such that $\Imp(\natBeh(C_{\rm green}))=\Imp(\natBeh(C))$.
    Since $x$ and $y$ lie in the same component of $\natBeh(C)$, they also lie in the same component of $\natBeh(C_{\rm green})$.
    Since $V(C_{\rm green}) \subseteq V(\widehat{G}_S)$, there is a walk from $x$ to $y$ in $\widehat{G}_S$ for each such segment.
    Thus we can replace any missing segment by a walk in $\widehat{G}_S$, obtaining a walk between $u'$ and $v'$ in $\widehat{G}_S$---a contradiction.

    Thus $\widehat{G}_S$ is indeed Eulerian, yielding a solution $\widehat{S}$ for $(\widehat{G}, \widehat{w}, \widehat{\budget})$.
    This completes the proof.
\end{proof}

To prove \Cref{thm:WRP:modulatorToConstComponentsKernel} we estimate the number of vertices and edges in the reduced instance; \Cref{thm:WRP:modulatorToConstComponentsKernel} then follows using \Cref{lem:magic_kernel}.

\begin{lemma}\label{lem:TSP:modulatorToConstComponents:kernelBound}
    After \Cref{rul:component_rule} has been applied, the number of components is bounded by $2(|\allImp|^{2}+2|M|)|\allImp|^{2}+\binom{|M|}{2}+2|\allImp| = k^{\O(r)}$.
    Hence, the number of vertices and edges in the reduced graph~$G$ is $k^{\O(r)}$.
\end{lemma}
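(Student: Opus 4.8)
The plan is a direct counting argument: bound, color by color, how many connected components of $R$ can survive \Cref{rul:component_rule}, and then convert that bound on the number of components into a bound on the number of vertices and edges, using the fact that each component of $R$ has at most $r$ vertices.

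First I would count the components marked in red. For every ordered pair $I, I' \in \allImp$ the rule marks at most $2|\allImp|^{2}+2|M|$ components (those with the smallest finite price $P(C,I,I')$), and there are at most $|\allImp|^{2}$ such pairs; using $2|\allImp|^{2}+2|M| \le 2(|\allImp|^{2}+2|M|)$ this gives at most $2(|\allImp|^{2}+2|M|)|\allImp|^{2}$ red components. Next, for every unordered pair $\{u,v\}$ of vertices of $M$ the rule marks at most one component in blue, so there are at most $\binom{|M|}{2}$ blue components. Finally, for every impact $I \in \allImp$ the rule marks at most two components in green, so there are at most $2|\allImp|$ green components. Every component receiving none of the three colors is deleted, hence the number of components left in the reduced instance is at most $2(|\allImp|^{2}+2|M|)|\allImp|^{2}+\binom{|M|}{2}+2|\allImp|$. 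Substituting $|M|=k$ and $|\allImp| = k^{\O(r)}$ from \Cref{obs:num_impacts}, this is $k^{\O(r)}$.

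To conclude I would translate this into bounds on the size of the reduced graph. Each surviving component of $R$ has at most $r$ vertices, so $|V(R)| \le r \cdot k^{\O(r)} = k^{\O(r)}$, and together with the $k$ vertices of $M$ we get $|V(G)| = k^{\O(r)}$. For the edges, there are at most $\binom{k}{2}$ edges inside $M$, at most $\binom{r}{2}$ edges inside each component, and at most $rk$ edges between a component and $M$; since $r$ is a fixed constant, multiplying the per-component bound $\binom{r}{2}+rk = \O(k)$ by the number $k^{\O(r)}$ of surviving components and adding $\binom{k}{2}$ yields $|E(G)| = k^{\O(r)}$. (The accompanying bound on the bit-size claimed in \Cref{thm:WRP:modulatorToConstComponentsKernel} then follows from \Cref{lem:magic_kernel}, as noted just before the lemma, and is not part of this statement.)

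I do not expect any real obstacle: this is pure bookkeeping, and all the substance — safeness of the rule, and the structural \Cref{lem:mostly_natural} guaranteeing that all but $2|\allImp|^{2}+2|M|$ components fall back to their natural behavior — is already established. The only care needed is consistency: the counting here must use exactly the thresholds the safeness proof of \Cref{rul:component_rule} relies on (at least $2|\allImp|^{2}+2|M|$ red candidates per ordered pair, and one or two green representatives per impact so that an even number of each impact class is deleted), which is precisely what the rule is set up to do.
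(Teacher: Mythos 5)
Your proposal is correct and follows essentially the same route as the paper: count the red components per pair of impacts, the blue ones per pair of modulator vertices, and the green ones per impact, then apply \Cref{obs:num_impacts}; the paper leaves the translation from component count to vertex/edge count implicit, which you spell out correctly using the bound $r$ on component size.
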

\begin{proof}
    As \Cref{rul:component_rule} marks some of the components and removes the unmarked ones, we just need to count the total number of marked components.
    The rule marks at most $2|\allImp|^{2}+2|M|$ components in red for every $I, I' \in \allImp$.
    It marks at most one component in blue for every pair of vertices $u,v \in M$.
    Finally, it marks at most two components in green for every $I \in \allImp$.

    Hence, there are at most $(2|\allImp|^2+2|M|)|\allImp|^2$ red, $\binom{|M|}{2}$ blue, and $2|\allImp|$ green components.
    By \Cref{obs:num_impacts}, this is $k^{\O(r)}$ components in total.
\end{proof}

We are now ready to prove \Cref{thm:WRP:modulatorToConstComponentsKernel} which we repeat here for reader's convenience.
\thmWRPmodulatorToConstComponentsKernel*
\begin{proof}
The size of the kernel follows from \Cref{lem:TSP:modulatorToConstComponents:kernelBound}. It remains to show that \Cref{rul:component_rule} can be applied in polynomial time. As $G$ contains at most $n(k+r)$ edges, the connected components of $R$ can be found in $\O(n(k+r))$ time.

Let $C$ be a component of $R$.
We need to find $\natBeh(C)$ and for each $I' \in \allImp$ we need $\min\{\wFn(F) \mid F \in \allBeh(C,I')\}$ in order to determine $P(C,I,I')$ for each $I,I' \in \allImp$.
We obtain both by iterating over all behaviors $F$ of $C$.
Namely, there are at most $(kr+1)^{2r}$ options for the at most $2r$ edges between $C$ and $M$ contained in $F$ and $3^{\binom{r}{2}}$ options for the multiplicity of edges within $C$ contained in $F$. For each combination of these options we check in $\O(r^2)$ time whether the resulting $F$ is a behavior of $C$ and if so, determine its impact. We store for each $I' \in \allImp$ the minimum cost behavior with this impact. As $|\allImp| = k^{\O(r)}$ by \Cref{obs:num_impacts}, $P(C,I,I')$ for each $I,I' \in \allImp$ can be computed in $k^{\O(r)}2^{\O(r^2)}$ time.

As there are at most $n$ components, in
\[
	\O\Big(k^{\O(r)}2^{\O(r^2)} \cdot n+ |\allImp|^2\cdot n \cdot(2|\allImp|^2+2|M|) \Big)
	=
	k^{\O(r)}2^{\O(r^2)} \cdot n
\]
time we can find the components to be marked in red.
As there are at most $kr+r^2$ edges in each $G_C$, for each $u,v \in M$ we can compute the shortest path between $u$ and $v$ in $G_C$ in $\O((kr+r^2)^2)$ time using Dijkstra's algorithm.
Hence, the components to be marked in blue can be found in $\O(nk^2 \cdot (kr+r^2)^2) \subseteq k^{\O(r)}2^{\O(r^2)} \cdot n$ time.
Finally, the components to be marked green can be also found in $k^{\O(r)}2^{\O(r^2)} \cdot n$ time.

After finishing the marking, the unmarked components can be removed in $\O(n)$ time (as $\wFn(\natBeh(C))$ was already computed previously). It remains to use \Cref{lem:magic_kernel} and the theorem follows.
\end{proof}

\subsection{\sTSP and the Distance to Constant Size Paths}
\label{sec:stsp_paths}

We start by applying \Cref{rrule:short_circuit} to all vertices of $R$.
Note that after each application of the reduction rule $R$ remains a disjoint union of paths, because each vertex has degree at most 2 within $R$.
Hence, for the rest of the section we assume that $V(R) \subseteq \WP$.

We reuse the notions of the (natural) behavior, impact, and price (\Cref{def:behavior,def:impact,def:impact_change_price}) from the previous section.

We define \emph{piece} of $F \in B(C)$ as any connected component of $(C \cup M, F) \setminus M$ to which we add all incident edges in $G_S$.
We define \emph{legs} of a piece as a subset of its edges which are incident with vertices of $M$.
As $G \setminus M$ consists of disjoint union of paths we note that each piece of $F$ consists of a path on $C$ and its legs.

\begin{lemma}\label{lem:piece_two_legs}
    Given a \sTSP instance $(G,W,\omega,\budget)$ let $S$ be a nice optimal solution to the instance.
    Let $k$ be the size of the modulator to disjoint union of paths.
    There are at most $k$ components with pieces of $F(S,C)$ with more than $2$ legs in behaviors $\{F(S,C) \mid \Imp(F(S,C))=I\}$ for any fixed $I \in \mathcal{I}$.
\end{lemma}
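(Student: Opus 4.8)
The plan is to argue by contradiction with the optimality of $S$: assuming that more than $k$ components $C$ with $\Imp(F(S,C))=I$ each carry a piece with more than two legs, I would rebuild $S$ into a closed walk visiting the same waypoints but of strictly smaller weight, which is impossible since all edge weights are positive by \Cref{rem:connectedInstancesAndPositiveWeights}.

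First I would record the structure forced by the fact that, after \Cref{rrule:short_circuit}, $R$ is a disjoint union of paths. Every segment of $S$ is contained in a single path-component of $R$ and meets that component in a contiguous sub-walk, entered and left through exactly two legs; hence the vertex set of a piece $\gamma$ of $F(S,C)$ is a subpath $P_\gamma$ of $C$, and a piece with more than two legs has at least three $G_S$-edges joining $P_\gamma$ to $M$, so it is met by at least two segments of $S$. Since the first segment of $S$ visiting $P_\gamma$ first-visits one of its vertices and therefore lies in $\mathcal{F}(S,C)$, each such \emph{bad} component falls into one of two cases: either some segment meeting $P_\gamma$ lies outside $\mathcal{F}(S,C)$, so $C$ is revisited by a segment carrying no first-visits; or all segments meeting $P_\gamma$ lie in $\mathcal{F}(S,C)$, and then there are at least two of them and one checks that their covering intervals overlap (two disjoint intervals would disconnect the piece's subpath in $F(S,C)|_C$).

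Next I would build an auxiliary multigraph $H$ on vertex set $M$ by adding, for each bad component $C_j$, one edge joining the two $M$-endpoints of a chosen \emph{witness} sub-walk of $S$ attached to its bad piece $\pi_j$: in the first case the wastefully-revisiting segment, in the second case the later of the two covering segments. Since $H$ has more than $k$ edges on $k$ vertices it contains a cycle, and taking (if necessary) slightly more than $2k$ bad components and applying \Cref{lem:many_edges_cycle} I may assume the cycle $K$ satisfies that $H\setminus E(K)$ has the same connected components as $H$. Deleting from $G_S$ the witnesses corresponding to $K$, and re-attaching — where coverage demands it — the path inside $\pi_j$ to the surviving covering segment of $\pi_j$, produces a multigraph $G'_S$ in which every vertex of $M$ lying on $K$ loses degree exactly two, every vertex of $R$ loses an even amount, and connectivity is preserved (the cycle property of $K$ reroutes any $M$-to-$M$ walk that used a deleted witness, and every piece still contains a vertex of $M$ in its component). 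Hence $G'_S$ is Eulerian and visits all waypoints; as we deleted at least the legs of the witnesses on $K$ and weights are positive, $\wFn(G'_S)<\wFn(S)$, the desired contradiction.

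The hard part is the second case: when the two covering segments of $\pi_j$ are both genuinely needed for coverage, we must delete one and extend the other so that it covers all of $P_{\pi_j}$ while the newly doubled traversals inside $\pi_j$ do not outweigh the deleted legs — this requires choosing the attachment points of the surviving segment carefully and exploiting both the overlap of the two covering intervals and the strict positivity of the edge weights. A second subtlety is to encode enough of the connectivity of $G_S$ in the auxiliary multigraph that removing $K$ cannot disconnect $G'_S$; I expect this to be exactly what pins down the clean bound $k$ in the statement (versus the more generous threshold used above), perhaps via a tighter, impact-$I$-specific injection of the bad components into $M$.
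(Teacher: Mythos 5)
Your high-level plan (auxiliary multigraph on $M$, one edge per bad component, find a cycle, excise it, contradict optimality) matches the paper's, but the local surgery you propose does not work as described, and you have flagged but not closed the two gaps that matter. The paper's proof rests on a parity observation you are missing: first it shows that no internal edge of a bad piece can be traversed twice (otherwise the two traversals can be removed outright, using a second component with the same impact $I$ to preserve connectivity); then, since the two extremal vertices $p_1,p_\ell$ of the piece's internal path must have an \emph{odd} number of incident legs and every interior vertex an \emph{even} number, a piece with more than two legs necessarily has a \emph{single vertex} $p_j$ incident with at least two legs. The surgery is then pure deletion: remove those two legs. If they go to the same $M$-vertex this is already a contradiction; if not, they form a cherry $u$--$p_j$--$v$ with $u,v\in M$, contributing an edge $\{u,v\}$ to the auxiliary graph, and a cycle there is a genuine cycle of legs in $G_S$ whose removal changes all degrees by $0$ or $2$ and costs nothing in re-coverage, since the piece's path stays attached to $M$ through the legs at $p_1$ and $p_\ell$. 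Your surgery instead deletes an entire witness segment and "re-attaches the path inside $\pi_j$ to the surviving covering segment," i.e.\ it \emph{adds} doubled internal traversals to restore coverage. There is no guarantee these added traversals weigh less than the deleted segment, and you acknowledge this is unresolved; without it the contradiction with optimality fails.

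The second gap is the bound. Because you route connectivity through \Cref{lem:many_edges_cycle}, you need more than $2k$ edges in $H$, which yields a bound of roughly $2k$, not the claimed $k$. The paper needs only a plain cycle (more than $k-1$ edges on $k$ vertices), because connectivity after the excision is guaranteed not by a component-preserving cycle but by the reserved component $C_0^I$ carrying the same impact $I$, which replicates within $M$ all the connectivity the modified components provided; the count then comes out to $k-1$ edge-contributing components over \emph{all} impacts plus the one excluded reference component $C_0^I$, giving $k$ per fixed $I$. As written, your argument would prove a weaker bound by a constant factor and, more importantly, leaves the weight accounting of the re-coverage step unproved.
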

\begin{proof}
    Let $\mathcal{C}_I = \{C \mid \Imp(F(S,C))=I\}$ for some $I \in \mathcal{I}$.
    Note that a behavior $F(S,C)$ may be composed of one or more pieces even though $D(F(S,C))$ is connected (see \Cref{fig:few_behaviors}).
    We will show if there are many pieces with more than $2$ legs, than there are several cases where some piece can be altered to achieve lower cost, which is in contradiction with the optimality of the solution.

    If $|\mathcal{C}_I| \ge 2$ and there is $C \in \mathcal{C}_I$ with an edge $\{u,v\}$ that is traversed exactly twice by $F(S,C)$ and there are two paths through $F(S,C) \setminus \{u,v\}$: one from $u$ to $M$ and one from $v$ to $M$, then we can remove two traversals of $\{u,v\}$ from $S$.
    The change decreases degree of $u$ and $v$ in the solution by two so it still remains even and the connectedness is preserved because both $u$ and $v$ are still connected to $M$, and by existence of a second component $C'$ in $\mathcal{C}_I$ with the same impact of $F(S,C')$ as $F(S,C)$.
    Removing two traversals of $\{u,v\}$ makes the optimal solution smaller, a contradiction.
    See \Cref{fig:few_behaviors} for an example of removal of two $\{u,v\}$ edge traversals.

    Assume $|\mathcal{C}_I| \ge 1$.
    Let us fix one of these components and name it $C_0^I \in \mathcal{C}_I$.
    For a component $C \in \mathcal{C}_I \setminus \{C_0^I\}$ we fix an ordering of its vertices $v_1,\dots,v_a$ in natural order of the path from one end to the other, so that we can compare vertices with respect to this ordering.
    Let us fix a piece $U$ of $F(S,C)$ in $C$.
    Let $Q$ denote the subset of vertices of $V(U) \cap V(C)$ which have a neighbor in $M$.
    Let $s$ and $t$ denote the first and the last element of $Q$ with respect to the fixed ordering.
    Let $P = (s=p_1,p_2,\dots,p_\ell=t)$ denote the path over $C$, see \Cref{fig:few_behaviors}.
    Expressed in this notation, in the previous paragraph we showed that each edge of $P$ is traversed exactly once.

    \begin{figure}[h]
        \centering
        \includegraphics{./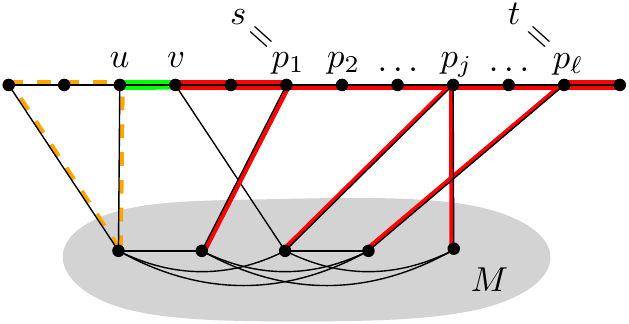}
        \caption{
            A behavior $F(S,C)$ of a solution $S$ over a component $C$.
            The green edge $\{u,v\}$ would be removed from $S$ by the first step of the proof, leaving orange (left, dashed) and red (right, solid) pieces.
            The red piece displays its marked path $P$ by vertices $p_1,\dots,p_n$ between its extremal vertices connected to $M$.
        }%
        \label{fig:few_behaviors}
    \end{figure}

    \newcommand{\moreleggedpieces}{\ensuremath{\operatorname{mlp}}} %

    We shall see that piece $U$ contains a vertex incident with at least two of its legs.
    In the case that $p_1=p_\ell$ we see that this vertex by itself is incident with more than $2$ legs as no other vertex of the piece is incident with legs.
    For $p_1 \ne p_\ell$, as every vertex has even degree in the solution the number of legs incident with the vertices of $P$ needs to be odd for endpoints $p_1$ and $p_\ell$ and it needs to be even (possibly $0$) for all other vertices of the path.
    The only vertices which may have exactly one connection are $p_1$ and $p_\ell$.
    It follows, that there is a vertex $p_j \in P$ which is incident with at least two legs.
    Let us denote the set of these two legs by $\mathcal{L}(U)$.
    In the first case, if the legs in $\mathcal{L}(U)$ are both incident with the same vertex in $M$, then we can remove $\mathcal{L}(U)$ from $S$.
    In that case, we decreased degrees of respective vertices by two, $P$ is still connected to $M$ from $p_1$ and $p_\ell$, and connectedness of $M$ did not change due to the existence of $C_0^I$, a contradiction with minimality of $S$.
    In the other case, let us denote by $L(U)$ the set of two vertices of $M$ which are incident with $\mathcal{L}(U)$, i.e., two of the vertices to which the legs from $p_j$ lead to.

    Let $\moreleggedpieces(F)$ denote the set of pieces in behavior $F$ with more than two legs.
    Now, we build an auxiliary graph $G'$ with vertex set $M$ and initially no edges.
    For each $I \in \mathcal{I}$ and each component $C \in \mathcal{C}_I \setminus \{C_0^I\}$ if $\moreleggedpieces(F(S,C))$ is not empty, then we add an edge between the two vertices of $L(U)$ to $G'$ for every $U \in \moreleggedpieces(F(S,C))$.
    As so, an edge $\{u,v\} \in E(G')$ corresponds to a path from $u$ to $v$ over a middle vertex $x$ in $C$, i.e., $(u,x,v)$ where $\{u,v\} = L(U)$ and $x \in V(C)$.
    Each edge in $G'$ is created by a different piece with more than two legs so all of these edges correspond to paths in $G_S$ which have disjoint middle vertices.
    Therefore, any cycle in $G'$ corresponds to a cycle in $G_S$ consisting of legs.
    If $G'$ has more than $k-1$ edges, then it contains a cycle which corresponds to a cycle in $G_S$.
    Removing such cycle from $G_S$ does not break connectivity due to the existence of $C_0^I$ for every piece $U \in \moreleggedpieces(F(S,C))$ as $C \in \mathcal{C}_I$.
    This is a contradiction with the minimality of $S$.

    We first showed, that for each $I \in \mathcal{I}$ if $|\mathcal{C}_I| \ge 2$, the edges of every $C \in \mathcal{C}_I$ between connections to $M$ are traversed at most once by $S$.
    Then we showed that $\sum_{I \in \mathcal{I}}\sum_{C \in \mathcal{C}_I}|\moreleggedpieces(F(S,C))| \le k-1$.
    Hence, for any fixed $I \in \mathcal{I}$ there may be at most $k$ components $C \in \mathcal{C}_I$ with pieces of $F(S,C)$ with more than two legs within $\mathcal{C}_I$.
\end{proof}

It is the case that every natural behavior has only two-legged pieces.

\begin{observation}\label{obs:nat_2_legged}
 For each $C$ component of~$R$, each piece of $\natBeh(C)$ has two legs.
\end{observation}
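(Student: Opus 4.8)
The plan is to combine a parity count, which forces \emph{at least} two legs, with a local exchange argument using the minimality of $\natBeh(C)$, which forbids \emph{more than} two legs. Fix a component $C$ of $R$. Since $G\setminus M$ is a disjoint union of paths, $C$ is a path, and a piece of $\natBeh(C)$ is a subpath $P=(p_1,\dots,p_\ell)$ of $C$ induced by a connected component of $\natBeh(C)$ within $C$, together with the legs of $\natBeh(C)$ leaving $P$ towards $M$. Write $d(v)$ for the degree of $v$ in $(C\cup M,\natBeh(C))$ and $\lambda(v)$ for the number of legs of the piece incident with $v$.

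First I would prove the lower bound. By \Cref{def:behavior}(i) every $v\in C$ has nonzero even degree. If $\ell=1$, then $p_1$ is incident with no path-edge, so $\lambda(p_1)=d(p_1)$ is nonzero and even, hence at least $2$. If $\ell\ge 2$, then each internal vertex $p_i$ carries exactly two path-edges, so $\lambda(p_i)=d(p_i)-2$ is even, while each endpoint $p_1,p_\ell$ carries exactly one path-edge, so $\lambda(p_1),\lambda(p_\ell)$ are odd and thus at least $1$. Consequently the total number of legs $\sum_i\lambda(p_i)$ is a sum of two odd numbers and several even numbers, so it is even and at least $2$; in particular \Cref{def:behavior}(ii) applied to the component of $(C\cup M,\natBeh(C))$ containing $P$ is automatically satisfied.

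Next I would prove the upper bound by contradiction. If some piece with vertex path $P$ had more than two legs, then by the parity remark above it has at least four. Since a piece whose endpoints carry exactly one leg each and whose internal vertices carry none has exactly two legs, there is a vertex $p_j$ of $P$ with $\lambda(p_j)\ge 2$, and a short case check (endpoint, internal, or $\ell=1$) gives $d(p_j)\ge 4$. Delete two legs incident with $p_j$ from $\natBeh(C)$ to obtain $F'$. Then $d(p_j)$ drops by exactly $2$, remaining even and nonzero, and all other degrees of vertices of $C$ are unchanged, so \Cref{def:behavior}(i) still holds; $F'$ has no more $M$-incident edges than $\natBeh(C)$, so the bound $2r$ still holds; and for \Cref{def:behavior}(ii) note that removing edges cannot merge components, the modified piece still has at least $4-2=2$ legs and so stays joined to $M$, every other piece retains all of its legs (already shown to be at least $2$), and any modulator vertex that becomes isolated is itself a legal component. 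Hence $F'\in B(C)$. Since we may assume $\natBeh(C)$ is chosen to additionally minimize the number of edges (equivalently, that all edge weights are positive, \Cref{rem:connectedInstancesAndPositiveWeights}), the fact that $F'$ has strictly fewer edges at no larger weight contradicts the choice of $\natBeh(C)$. Therefore every piece has exactly two legs.

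The step I expect to require the most care is the verification of \Cref{def:behavior}(ii) for $F'$: one must be sure that deleting two legs at $p_j$ neither disconnects the modified piece from $M$ (guaranteed because it kept at least $2$ of its at least $4$ legs) nor strands any other piece (guaranteed because every piece has at least two legs by the lower bound, and $F'$ differs from $\natBeh(C)$ only in legs at $p_j$), while being content that an isolated modulator vertex is an admissible component.
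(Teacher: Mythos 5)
Your overall strategy matches the paper's: a parity count for the lower bound and a leg-deletion exchange against the minimality of $\natBeh(C)$ for the upper bound. However, there is a genuine gap: your per-vertex parity analysis silently assumes that every path edge of the piece occurs exactly once in $\natBeh(C)$. A behavior is a \emph{multiset} in which each edge may appear twice, so an internal vertex of the subpath may carry three path-edge occurrences (one single and one doubled edge) and hence an \emph{odd} number of legs. This breaks the step ``there is a vertex $p_j$ with $\lambda(p_j)\ge 2$'': for example, a piece $p_1p_2p_3p_4$ with $\{p_1,p_2\}$ and $\{p_3,p_4\}$ single, $\{p_2,p_3\}$ doubled, and one leg at each of the four vertices has four legs, all degrees nonzero and even, and no vertex with two legs, so your deletion move has nothing to act on. The paper closes exactly this hole with a preliminary exchange: if some edge $\{p_j,p_{j+1}\}$ appears twice in $\natBeh(C)$, removing both occurrences still yields a behavior (the two resulting parts each retain a leg, since $p_1$ and $p_\ell$ do) and is strictly cheaper by \Cref{rem:connectedInstancesAndPositiveWeights}, a contradiction; only after all path edges are known to be single does the parity pattern you describe hold and the vertex with two legs exist. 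You need to add this step (or replace the per-vertex parity by an argument that tracks edge multiplicities).

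Two smaller remarks. Your lower bound conclusion (an even number of legs, hence at least two) survives the multiplicity issue if you derive it from the global handshake in $(C\cup M,\natBeh(C))$ restricted to the piece, rather than from the per-vertex counts; it is worth phrasing it that way. And your appeal to ``minimizing the number of edges'' is unnecessary once weights are positive by \Cref{rem:connectedInstancesAndPositiveWeights}: deleting two legs then strictly decreases the weight, which is the contradiction the paper uses.
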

\begin{proof}
    We prove this observation by a contradiction.
    Assume that for some $C \in R$ there is a piece $U$ in $\natBeh(C)$ which has more than two legs.
    We now follow similar argument as in the proof of \Cref{lem:piece_two_legs}.

    For the component $C$ we fix an ordering of its vertices $v_1,\dots,v_a$ in natural order of the path from one end to the other, so that we can compare vertices with respect to this ordering.
    Let $Q$ denote the subset of vertices of $V(U) \cap V(C)$ which have a neighbor in $M$.
    Let $s$ and $t$ denote the first and the last element of $Q$ with respect to the fixed ordering.
    Let $P = (s=p_1,p_2,\dots,p_\ell=t)$ denote the path over $C$, see \Cref{fig:few_behaviors}.

    If $p_1=p_\ell$ and the number of incident legs is more than two, then we may simply remove two legs incident with $p_1$ -- a contradiction with minimality of $\natBeh(C)$.
    Assume $p_1 \ne p_\ell$.

    Assume that an edge $\{p_j,p_{j+1}\}$ is twice in $\natBeh(C)$.
    Then, removing the two traversals of $\{p_j,p_{j+1}\}$ from $\natBeh(C)$ still yields a behavior (has even degrees and each part is connected to $M$) and is cheaper -- reaching a contradiction again.

    As every edge $\{p_j,p_{j+1}\}$ is traversed once we have that the only vertices with odd degree in $\natBeh(C) \cap C$ are $p_1$ and $p_\ell$.
    To reach even degree in $\natBeh(C)$ these must have odd number of incident legs and all $p_j$ for $1 < j < \ell$ have even number of incident legs.
    If follows that there must be a vertex $p_j$ for $1 \le j \le \ell$ that is incident with at least two legs.
    Removing these two legs changes the degree of $p_j$ by two and does not disconnect it from $M$ as both $p_1$ and $p_\ell$ are still connected; a final contradiction which proves the observation.
\end{proof}

The most technical part of this section is the following lemma that allows us to, under some conditions, mark a non-terminal in the modulator as a terminal. %
For this to work, we want to do the following.
Take many components which share the impact of the natural behavior which touches the particular non-terminal.
Since there are many, many of them also share the impact of the actual behavior and have all pieces 2-legged.
For each of them we want to find a behavior that is half-way between the actual and the natural behavior (using the following lemma).
This behavior should touch the non-terminal and be of at most the same weight as the actual behavior.
Then we find two components for which the half-way behavior has the same impact and change these, so that we obtain a solution that visits the particular non-terminal.

\begin{lemma}[Blending lemma]\label{lem:path_mid_behavior}
	Let $M' \subseteq M$ (the set of actually visited vertices) and $C$ a component of~$R$ (therefore a path) such that $T(\natBeh(C)) \nsubseteq M'$. Let $v \in T(\natBeh(C)) \setminus M'$ and $A \in B(C)$ (the actual behavior) such that $T(A) \subseteq M'$ and such that each piece has two legs. Then there is a behavior $F \in B(C)$ such that $v \in T(F)$, $T(F) \subseteq T(A) \cup T(\natBeh(C))$, every connected component of $(C \cup M,  F)$ contains a vertex of $M'$, and $\wFn(F) \le \wFn(A)$.
\end{lemma}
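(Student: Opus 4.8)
The plan is to obtain $F$ from $A$ by a local modification around the single place where $\natBeh(C)$ attaches a leg to $v$, splicing in at most one edge and one leg coming from $\natBeh(C)$, and to pay for the change using the minimality of $\natBeh(C)$. First recall that, by \Cref{obs:nat_2_legged} and its proof, each piece of $\natBeh(C)$ is a simple subpath of the path $C$ (each of its edges used exactly once) carrying exactly one leg at each of its two endpoints; a singleton piece carries two legs at its single vertex. Since $v\in T(\natBeh(C))$, some leg $\{p,v\}$ of $\natBeh(C)$ is incident with $v$, and by the structure just recalled $p$ is an endpoint of its piece $J$ of $\natBeh(C)$; reversing the linear order of $C$ if necessary, I may assume $p$ is the left endpoint of $J$. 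I will also use that each piece of $A$ is a simple subpath of $C$ with one leg at each endpoint (this is how the lemma is used and how pieces look in a nice optimal solution by \Cref{lem:piece_two_legs}; otherwise one first reduces to this form). Let $I=(q_a,\dots,q_b)$ be the piece of $A$ containing $p$, with endpoint legs $\{q_a,\ell\}$ and $\{q_b,r\}$; note $\ell,r\in M'$ since $T(A)\subseteq M'$.

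Case 1: $p\in\{q_a,q_b\}$, say $p=q_a$ (the case $p=q_b$ is symmetric, and if $I$ is a singleton one simply redirects one of its two legs). I would set $F$ to be $A$ with the leg of $I$ at $p$ redirected from $\ell$ to $v$. No degree in $C$ changes, so $F$ is still a behavior with the same pieces (hence the same, legal, number of legs), the piece $I$ still touches $r\in M'$, so every piece of $F$ touches $M'$; moreover $v\in T(F)$ and $T(F)\subseteq T(A)\cup\{v\}\subseteq T(A)\cup T(\natBeh(C))$. For the weight: replacing the leg $\{p,v\}$ of $\natBeh(C)$ by $\{p,\ell\}$ again yields a valid behavior of $C$, so minimality of $\natBeh(C)$ gives $\wFn(\{p,\ell\})\ge\wFn(\{p,v\})$, whence $\wFn(F)\le\wFn(A)$.

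Case 2: $q_a<p<q_b$, so $p$ is an interior vertex of $I$; in particular $p$ is not the first vertex of $C$ and has a left neighbour $p^-$ on $C$ with $p^-\in V(I)$. Since $J$ starts at $p$, the piece $J'$ of $\natBeh(C)$ immediately to the left of $J$ ends at $p^-$; let $\{p^-,x\}$ be its (right) leg, so $x\in T(\natBeh(C))$. I would form $F$ from $A$ by deleting the edge $\{p^-,p\}$ — which splits $I$ into the two subpaths $(q_a,\dots,p^-)$ and $(p,\dots,q_b)$ — and adding the two legs $\{p^-,x\}$ and $\{p,v\}$. Then every vertex of $C$ keeps a nonzero even degree; the two new pieces touch $M'$ via the surviving legs $\{q_a,\ell\}$ and $\{q_b,r\}$; $v\in T(F)$ and $T(F)\subseteq T(A)\cup\{v,x\}\subseteq T(A)\cup T(\natBeh(C))$; since $I$ has at least three vertices, $A$ has at most $|V(C)|-2\le r-2$ pieces, so $F$ has at most $r-1$ pieces and hence at most $2r-2\le 2r$ legs, and no edge multiplicity exceeds two (the only edge whose multiplicity increased, $\{p,v\}$, was absent from $A$ as $v\notin M'$). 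Thus $F\in B(C)$. For the weight, compare $\natBeh(C)$ with the behavior obtained by \emph{merging} $J'$ and $J$ into a single subpath of $C$: this drops the legs $\{p^-,x\}$ and $\{p,v\}$ and adds the edge $\{p^-,p\}$, and it is a valid behavior, so minimality of $\natBeh(C)$ yields $\wFn(\{p^-,p\})\ge\wFn(\{p^-,x\})+\wFn(\{p,v\})$. Hence $\wFn(F)=\wFn(A)-\wFn(\{p^-,p\})+\wFn(\{p^-,x\})+\wFn(\{p,v\})\le\wFn(A)$.

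The main obstacle is exactly the weight bound in Case 2: a naive repair of the parity at the interior vertex $p$ (doubling a run of path edges, or sliding a leg along $I$ towards $p$) can cost far more than it saves, and the way around it is the observation that $\{p^-,p\}$ is precisely an edge $\natBeh(C)$ chose to avoid, paying instead for the two legs $\{p^-,x\}$ and $\{p,v\}$ — which is the merge inequality above. The remaining work is bookkeeping: checking degrees, connectivity to $M'$, the bound $T(F)\subseteq T(A)\cup T(\natBeh(C))$, and — using $|V(C)|\le r$ — that the number of legs of $F$ never exceeds $2r$ and that all multiplicities stay at most two.
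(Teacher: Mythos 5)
Your overall strategy---splice the natural leg $\{p,v\}$ into $A$ by a local modification and charge the cost to the minimality of $\natBeh(C)$ via an exchange argument---is the same as the paper's, and your Case~1 coincides with the paper's first case (redirecting an existing leg of $A$ at $p$ to $v$). The merge inequality you use in Case~2 is also an appealingly clean variant of the paper's exchange arguments. However, there is a genuine gap: both of your structural premises are false. Neither \Cref{obs:nat_2_legged} nor \Cref{lem:piece_two_legs} implies that a two-legged piece is a simple subpath of $C$ with exactly one leg at each endpoint and no doubled edges. A behavior must give \emph{every} vertex of $C$ nonzero even degree (after \Cref{rrule:short_circuit} all of $V(R)$ are terminals), so a piece may carry a ``tail'' of doubled path edges reaching vertices of $C$ that have no neighbour in $M$. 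For instance, if $C=c_1c_2c_3$ and only $c_1$ has neighbours in $M$, then every behavior of $C$ doubles both path edges and attaches both legs at $c_1$. Such doubled tails also survive in $F(S,C)$ for a nice optimal solution---the cleaning step at the start of the proof of \Cref{lem:piece_two_legs} removes a doubled edge only when both endpoints stay connected to $M$ afterwards---so there is no ``reduction to this form'' available, and \Cref{obs:nat_2_legged} only yields two legs per piece of $\natBeh(C)$, not the stronger structure you assert.

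Because of this, your Case~2 misses precisely the situations that occupy most of the paper's case analysis. First, the vertex $p$ carrying the natural leg to $v$ need not lie on the simple $u_\ell$--$u_r$ subpath of its piece of $A$: it can sit in a doubled tail of that piece (the paper's configurations \texttt{A}--\texttt{D}), where deleting a single path edge at $p$ does not restore parity and a different replacement (swapping a whole doubled subpath for the leg $\{p,v\}$) is needed. Second, the natural piece $J'$ immediately to the left of $p$ need not have its leg at $p^-$; its leg can be attached further left, with doubled edges of $J'$ reaching $p^-$ (the paper's configurations \texttt{F}--\texttt{I}). In that situation your ``merge'' move is not even well defined---there is no leg $\{p^-,x\}$ to drop or to add---and the correct object to splice in is the leg of $J'$ \emph{together with} the doubled subpath from $p^-$ to its attachment vertex, which is exactly the shortest-path term $P$ (resp.\ $P_{\ell'}$, $P_\ell$) appearing in the paper's weight inequalities. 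Your argument is therefore correct only in the special case where neither $A$ nor $\natBeh(C)$ uses any doubled edge near $p$, and the remaining cases require the additional constructions of \Cref{lem:path_mid_behavior:X:notE,lem:path_mid_behavior:FG:E,lem:path_mid_behavior:HI:E}.
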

\begin{proof}
	We construct a behavior $F$ for every actual behavior $A$ by modifying pieces of $A$ with respect to vertex $v$. Let $u$ be an arbitrary neighbor of $v$ in $(C \cup M, \natBeh(C))$.
	First, if $u$ has a neighbor in $M$ in graph $(C \cup M, A)$, say $w$, then we claim that the behavior $F$ obtained from $A$ by removing one occurrence of edge $\{u,w\}$ and adding one occurrence of edge $\{u,v\}$ is sought behavior: Indeed, assume that $\wFn(\{u,w\}) < \wFn(\{u,v\})$. Then by replacing $\{u,v\}$ with $\{u,w\}$ in the natural behavior $\natBeh(C)$ we obtain a behavior of a lower weight, which contradicts the minimality of $\natBeh(C)$. Hence $\wFn(\{u,v\}) \leq \wFn(\{u,w\})$ and $F$ is indeed the sought behavior.

	Hence, for the rest of the proof we assume that the vertex $u$ has no neighbor in $M$ in graph $(C \cup M, A)$.
	We may assume, due to \Cref{lem:piece_two_legs}, that a piece $Y$ of $A$ containing $u$ has two legs $\{w_\ell,u_\ell\}$ and $\{w_r,u_r\}$, where $w_\ell,w_r\in M'$ and $u_\ell,u_r\in (V(C)\cap V(Y)) \setminus \{u\}$. For the rest of the proof, we assume, without loss of generality, that $\{w_\ell,u_\ell\}$ is the lefter and $\{w_r,u_r\}$ is the righter leg of $Y$.
	See \Cref{fig:subtsp_dcsp_nat_act_behav} for a schematical presentation of the possible cases (pieces $Y$ of the actual behavior $A$ are in \textcolor{red}{red}).

	\begin{figure}[h!]
		\tikzstyle{snake} = [decorate,decoration={snake,amplitude=.4mm,segment length=1.5mm}]
		\begin{center}
	        \begin{tikzpicture}
				\node[draw,circle,black,label={$u$}] (v) at (0,0) {};
				\draw (v) edge (-1.15,0) edge (1.15,0);
				\draw[rounded corners,red,->] (-1.2,-0.25) -- (0.25,-0.25) -- (0.25,0.25) -- (-1.2,0.25);
				\node[orange] (A) at (1,0.5) {$\mathtt{A}$};
			\end{tikzpicture}
			\hspace*{0.1cm}
			\begin{tikzpicture}
				\node[draw,circle,black,label={$u$}] (v) at (0,0) {};
				\draw (v) edge (-1.15,0) edge (1.15,0);
				\draw[rounded corners,red,->] (-1.2,-0.25) -- (1.2,-0.25) -- (1.2,0.25) -- (-1.2,0.25);
				\node[orange] (A) at (1,0.5) {$\mathtt{B}$};
			\end{tikzpicture}
			\hspace*{0.1cm}
			\begin{tikzpicture}
				\node[draw,circle,black,label={$u$}] (v) at (0,0) {};
				\draw (v) edge (-1.15,0) edge (1.15,0);
				\draw[rounded corners,red,->] (1.2,-0.25) -- (-0.25,-0.25) -- (-0.25,0.25) -- (1.2,0.25);
				\node[orange] (A) at (1,0.5) {$\mathtt{C}$};
			\end{tikzpicture}
			\hspace*{0.1cm}
			\begin{tikzpicture}
				\node[draw,circle,black,label={$u$}] (v) at (0,0) {};
				\draw (v) edge (1.15,0) edge (-1.15,0);
				\draw[rounded corners,red,->] (1.2,-0.25) -- (-1.2,-0.25) -- (-1.2,0.25) -- (1.2,0.25);
				\node[orange] (A) at (1,0.5) {$\mathtt{D}$};
			\end{tikzpicture}
			\hspace*{0.1cm}
			\begin{tikzpicture}
				\node[draw,circle,black,label={$u$}] (v) at (0,0) {};
				\draw (v) edge (-1.15,0) edge (1.15,0);
				\draw[red,->] (-1.2,-0.25) -- (1.2,-0.25);
				\node[orange] (A) at (1,0.5) {$\mathtt{E}$};
			\end{tikzpicture}

			\vspace*{0.5cm}

			\begin{tikzpicture}
				\node[draw,circle,black,label={$u$}] (v) at (0,0) {};
				\draw (v) edge (-1.5,0) edge (1.5,0);
				\node[draw,circle,blue,label={0:$v$}] (v) at (0,-1.5) {};
				\draw[blue,->,rounded corners] (0.1,-1.3) -- (0.1,-0.2) -- (-0.1,-0.2) -- (-0.1,-1.3);
				\node[orange] (A) at (1,-0.75) {$\mathtt{F}$};
			\end{tikzpicture}
			\hspace*{0.05cm}
			\begin{tikzpicture}
				\node[draw,circle,black,label={$u$}] (v) at (0,0) {};
				\draw (v) edge (-1.5,0) edge (1.5,0);
				\node[draw,circle,blue,label={0:$v$}] (v) at (0,-1.5) {};
				\node[draw,circle,blue,label={0:$v'$}] (v') at (-1.5,-1.5) {};
				\draw[blue,->,rounded corners] (0, -1.25) -- (0,-0.25) -- (-0.25,-0.1) -- (-1.5,-0.1) -- (v');
				\node[orange] (A) at (1,-0.75) {$\mathtt{G}$};
			\end{tikzpicture}
			\hspace*{0.05cm}
			\begin{tikzpicture}
				\node[draw,circle,black,label={$u$}] (v) at (0,0) {};
				\draw (v) edge (-1.5,0) edge (1.5,0);
				\node[draw,circle,blue,label={180:$v$}] (v) at (0,-1.5) {};
				\node[draw,circle,blue,label={180:$v'$}] (v') at (1.5,-1.5) {};
				\draw[blue,->,rounded corners] (0,-1.25) -- (0,-0.25) -- (-0.25,-0.1) -- (-1.65,-0.1) -- (-1.65,0.1) [->]-- (1.5,0.1) -- (v');
				\node[orange] (A) at (1,-0.75) {$\mathtt{H}$};
			\end{tikzpicture}
			\hspace*{0.05cm}
			\begin{tikzpicture}
				\node[draw,circle,black,label={$u$}] (v) at (0,0) {};
				\draw (v) edge (-1.5,0) edge (1.5,0);
				\node[draw,circle,blue,label={0:$v$}] (v) at (0,-1.5) {};
				\draw[blue,->,rounded corners] (-0.1,-1.25) -- (-0.1,-0.25) -- (-0.25,-0.1) -- (-1.65,-0.1) -- (-1.65,0.1) -- (1.65,0.1) -- (1.65,-0.1) -- (0.25,-0.1) -- (0.1,-0.25) -- (0.1,-1.3);
				\node[orange] (A) at (1,-0.75) {$\mathtt{I}$};
			\end{tikzpicture}
		\end{center}
		\caption{List of all actual (\textcolor{red}{red}) and natural behaviors (\textcolor{blue}{blue}) relevant for the proof of~\Cref{lem:path_mid_behavior}.}
		\label{fig:subtsp_dcsp_nat_act_behav}
	\end{figure}
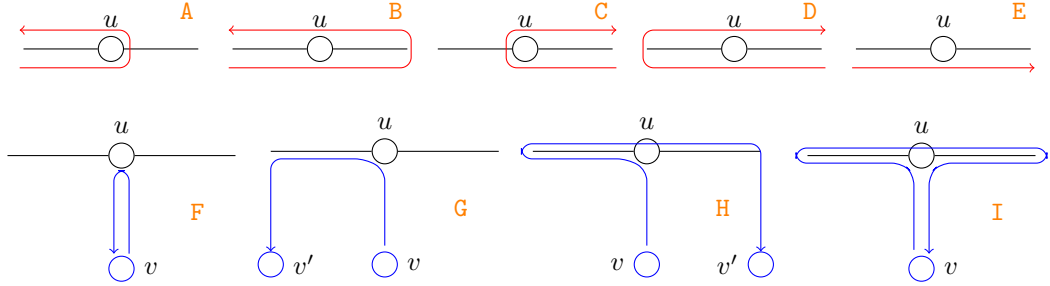

	We first deal with the case that $u$ is not part of the $u_\ell$-$u_r$-path in $C$, that is, both $u_\ell$-$u_r$ are left of $u$ or both are right of $u$ (denoted \texttt{A}-\texttt{D} on \Cref{fig:subtsp_dcsp_nat_act_behav}).

	\begin{claim}\label{lem:path_mid_behavior:X:notE}
		Let $\natBeh(C)$ be any natural behavior containing piece $X$ with at least one leg being $\{u,v\}$ and $A$ be an actual behavior containing one piece $Y$ from pieces marked \texttt{A}-\texttt{D} on \Cref{fig:subtsp_dcsp_nat_act_behav}, i.e., $u$ is not part of the $u_\ell$-$u_r$-path in $C$. Then there is a behavior $F$ such that $v \in T(F)$, $T(F) \subseteq T(A) \cup T(\natBeh(C))$, every connected component of $(C \cup M,  F)$ contains a vertex of~$M'$, and $\wFn(F) \le \wFn(A)$.
	\end{claim}
	\begin{claimproof}
		Let $A$ corresponds to actual behavior \texttt{A} or \texttt{B}. We construct a behavior $F$ by replacing the path $P = (w_r,u_r,\ldots,u)$ in $Y$ with the edge $\{u,v\}$. It is easy to see that $v$ is indeed part of $T(F)$ and the only touched vertices are $v$ and $w_\ell$, and both of them are elements of $T(A)\cup T(\natBeh(C))$. Moreover, $w_\ell\in M'$ and thus we satisfy the condition that every connected component of $(C\cup M, F)$ contains a vertex of $M'$. What remains to prove is that the weight of the behavior $F$ is not greater than the weight of the behavior $A$.

		Assume that $\wFn(\{u,v\})$ is strictly greater than the weight of the replaced path $P$. If we replace the edge $\{u,v\}$ with $P$ in the natural behavior $\natBeh(C)$, then we obtain a behavior with a lower weight, which contradicts the minimality of $\natBeh(C)$. Hence, $\wFn(\{u,v\})$ is at most $\wFn(P)$ and thus $\wFn(F) \leq \wFn(A)$.

		If $A$ corresponds to actual behavior \texttt{C} or \texttt{D}, then we construct a behavior $F$ by replacing the path $P = (w_\ell,u_\ell,\ldots,u)$ in $Y$ with the edge $\{u,v\}$. By the symmetric arguments as in the previous case, we obtain that $F$ satisfies all the conditions and it holds that $\wFn(F) \leq \wFn(A)$. This concludes the proof.
	\end{claimproof}

	For the rest of the proof we assume that $u$ is a part of the $u_\ell$-$u_r$-path in $C$, that is, the piece $Y$ is as denoted \texttt{E} on  \Cref{fig:subtsp_dcsp_nat_act_behav}. In particular, this implies that $u$ has two neighbors in $C$.
	Now we distinguish two cases for the piece $X$ of $\natBeh(C)$ containing $u$ according to whether it contains both the edges of $C$ incident with $u$ (at least once) or not (again \Cref{fig:subtsp_dcsp_nat_act_behav} contains a schematic depiction of some of the possible cases in \textcolor{blue}{blue}).

	\begin{claim}\label{lem:path_mid_behavior:FG:E}
		Let the natural behavior $\natBeh(C)$ contain piece $X$ that contains at most one neighbor of $u$ in $C$ (marked \texttt{F} or \texttt{G} on \Cref{fig:subtsp_dcsp_nat_act_behav}) and $A$ be an actual behavior containing piece $Y$ such that $u$ is a part of the $u_\ell$-$u_r$-path in $C$ (marked \texttt{E} in the same figure). Then there is a behavior $F$ such that $v \in T(F)$, $T(F) \subseteq T(A) \cup T(\natBeh(C))$, every connected component of $(C \cup M,  F)$ contains a vertex of $M'$, and $\wFn(F) \le \wFn(A)$.
	\end{claim}
	\begin{claimproof}
		Let $u'\in V(C)\cap N(u)$ be the neighbor of the vertex $u$ such that $u'$ is not part of the assumed piece of the natural behavior. We construct the desired behavior $F$ by replacing the edge $\{u',u\}$ with the edge $\{u,v\}$ in the actual behavior $A$. This replacement splits the piece into two disjoint pieces and leaves one of them one-legged. Since $u'$ is a terminal, there is a piece in the natural behavior $\natBeh(C)$ visiting $u'$. Thus, we complete $F$ by adding shortest path $P$ from $u'$ to any vertex of the modulator while using only edges that are part of the piece the vertex $u'$ is part of in $\natBeh(C)$.

		The vertex $v$ is clearly part of $T(F)$ and it holds that $T(F)\subseteq T(A)\cup T(\natBeh(C))$. We replaced one piece with two different pieces. This first constructed piece contains $w_\ell$ and the second contains $w_r$. Both of these vertices are part of $M'$ and we do not change remaining pieces of $A$, hence, every connected component of $(C\cup M,F)$ contains a vertex of $M'$.
		As $F$ does not miss any vertex of $C$, it is indeed a behavior.
		It remains to show that the weight of behavior $F$ is not greater than the weight of behavior $A$.

		Assume that $\wFn(\{u',u\}) < \wFn(\{v,u\}) + \wFn(P)$.
		We claim that, if we replace $\{u,v\}$ and $P$ with $\{u',u\}$ in $\natBeh(C)$, then we obtain a behavior with a lower weight, which contradicts the minimality of $\natBeh(C)$.
		Indeed, this edge multiset uses each edge at most twice, each connected component contains a vertex of $M$, whereas there are at most $2r$ edges incident with $M$ in total.
		If $u''$ is the vertex with a leg closest to $u'$ in the same piece of $\natBeh(C)$, then the subpath between $u'$ and $u''$ together with the leg forms $P$. However, this means that all the edges between $u'$ and $u''$ are doubled in $\natBeh(C)$ and remain at least once in the modified edge multiset. Hence the modified set does not miss any terminal and it is indeed a behavior, contradicting the minimality of $\natBeh(C)$.
		Hence, $\wFn(\{u',u\}) \geq \wFn(\{v,u\}) + \wFn(P)$ and thus $\wFn(F) \leq \wFn(A)$.
	\end{claimproof}

	It remains to deal with the case that the piece $X$ of $\natBeh(C)$ contains both neighbors of $u$ in $C$. Note that by \Cref{obs:nat_2_legged} we know that $X$ has two legs. Since the situation is symmetric for the piece $Y$, without loss of generality we can assume that the other leg of $X$ is right of $u$ (situations denoted \texttt{H} and \texttt{I} on \Cref{fig:subtsp_dcsp_nat_act_behav}).

	\begin{claim}\label{lem:path_mid_behavior:HI:E}
		Let the natural behavior $\natBeh(C)$ contain piece $X$ which contains $u$ and both of its neighbors in $C$ and $A$ be an actual behavior containing piece $Y$ such that $u$ is a part of the $u_\ell$-$u_r$-path in $C$ (marked \texttt{E} on \Cref{fig:subtsp_dcsp_nat_act_behav}). Then there is a behavior $F$ such that $v \in T(F)$, $T(F) \subseteq T(A) \cup T(\natBeh(C))$, every connected component of $(C \cup M,  F)$ contains a vertex of $M'$, and $\wFn(F) \le \wFn(A)$.
	\end{claim}
	\begin{claimproof}
	    Without loss of generality we assume that the other leg of $X$ is right of $u$, in particular all the edges to the left of $u$ in the same piece of $\natBeh(C)$ are double edges.
		By $\ell$ we denote the leftmost vertex of $X$. We start with the simpler case that $\operatorname{dist}(\ell,u) \geq \operatorname{dist}(u_\ell,u)$. We construct the behavior $F$ by replacing the edge $\{w_\ell,u_\ell\}$ with the path $P_1 = (v,u,\ldots,u_\ell)\subseteq X$ in the behavior $A$. It is easy to see that the first three conditions are indeed satisfied.

		Assume that $\wFn(\{w_\ell,u_\ell\}) < \wFn(P_1)$. Then we claim that by replacing $P_1$ with $\{w_\ell,u_\ell\}$ in $\natBeh(C)$, we obtain a behavior with a lower weight, which contradicts the minimality of $\natBeh(C)$.
		Indeed, this edge multiset uses each edge at most twice, each connected component contains a vertex of $M$, whereas there are at most $2r$ edges incident with $M$ in total. As all the edges to the left of $u$ were doubled in $\natBeh(C)$, the modified set does not miss any terminal and it is indeed a behavior, contradicting the minimality of $\natBeh(C)$.
		Hence $\wFn(\{w_\ell,u_\ell\}) \geq \wFn(P_1)$ and thus $\wFn(F) \leq \wFn(A)$.

		Now let us turn to the case that $\operatorname{dist}(u_\ell,u) > \operatorname{dist}(\ell,u)$. By $\ell'$ we denote left neighbor of $\ell$ that is not part of $X$, i.e., $\ell'\in (V(C) \cap N(\ell)) \setminus V(X)$. Since $\ell'$ is a terminal and is not visited by $X$, there is a piece $X'$ such that $X'$ visits $\ell'$. By $P_{\ell'}$ we denote a path with smallest weight connecting $\ell'$ with modulator in piece $X'$ and by $P_{\ell}$ we denote the path $(v,u,\ldots,\ell)$. We construct the behavior $F$ by removing the edge $\{\ell',\ell\}$ from $A$ and adding $P_{\ell'}$ and $P_\ell$. It is easy to verify that $v\in T(F)$ and $T(F)\in T(A)\cup T(\natBeh(C))$. It also holds that every connected component of $(C\cup M, F)$ contains vertex of $M'$ as we created two pieces where the first contains $w_\ell$ and the second contains $w_r$ and both vertices were part of $M'$ already in $A$. What remains to show is that $\wFn(F) \leq \wFn(A)$.

		Assume that $\wFn(P_\ell) + \wFn(P_{\ell'}) > \wFn(\{\ell',\ell\})$.
		Then we claim that by exchanging $P_\ell$ and $P_{\ell'}$ for $\{\ell', \ell\}$ in the natural behavior, we obtain a behavior with a lower weight, which is a contradiction with the minimality of $\natBeh(C)$.
		Indeed, this edge multiset uses each edge at most twice, each connected component contains a vertex of $M$, whereas there are at most $2r$ edges incident with $M$ in total.
		If $\ell''$ is the vertex with a leg closest to $\ell'$ in the same piece of $\natBeh(C)$, then the subpath between $\ell'$ and $\ell''$ together with the leg forms $P_{\ell'}$. However, this means that all the edges between $\ell'$ and $\ell''$ are doubled in $\natBeh(C)$ and remain at least once in the modified edge multiset. Hence the modified set does not miss any terminal and it is indeed a behavior, contradicting the minimality of $\natBeh(C)$.
		Therefore $\wFn(P_\ell) + \wFn(P_{\ell'}) \le \wFn(\{\ell',\ell\})$ and we get that $\wFn(F) \leq \wFn(A)$.
	\end{claimproof}

	Using the constructions described in \Cref{lem:path_mid_behavior:FG:E,lem:path_mid_behavior:HI:E} we finished the description of behavior $F$ for all remaining cases. Hence, the proof is complete and the lemma holds.
\end{proof}

Now, we are ready to present the reduction rule used in this case.
The colors have the same meaning as in \Cref{rul:component_rule}. 

\begin{rrule}\label{rul:subset_component_rule}
 For each pair $I, I' \in \allImp$ of impacts if there are at most $2|\allImp|^{2}+2|M|$ components $C$ with finite $P(C,I,I')$, then mark all of them in red.
 Otherwise mark $2|\allImp|^{2}+2|M|$ components $C$ with the least $P(C,I,I')$ in red.

 For each pair of vertices $u,v \in M$, if there is a component $C$ in $R$ such that $G_C$ contains a $u$-$v$-path, then mark the component which contains the shortest such path in blue.

 For each $I \in \allImp$, if there are unmarked components $C$ with $\Imp(\natBeh(C))=I$, then let $I=(T,D)$ and do the following.
 If $T \subseteq \WP$, then if the number of such components is odd, then mark one arbitrary such component in green.
 If the number of such components is even, then mark two arbitrary such components in green.

 If $T \nsubseteq \WP$ and there are at most $\big((r+1)^{4r}\cdot 2^{4r+1} + k\big)\cdot |\allImp|$ unmarked components $C$ with $\Imp(\natBeh(C))=I$, then mark them all in yellow.

 If $T \nsubseteq \WP$ and there are more than $\big((r+1)^{4r}\cdot 2^{4r+1} + k\big)\cdot |\allImp|$ unmarked components $C$ with $\Imp(\natBeh(C))=I$, then add $T$ to $\WP$.

 If $\WP$ was not changed, then for each unmarked component $C$, remove $C$ from $G$ and reduce $\budget$ by $\wFn(\natBeh(C))$.
\end{rrule}
\begin{proof}[Safeness]
    Let $(G,\WP,\wFn,\budget)$ be the original instance and $(\widehat{G},\widehat{\WP},\widehat{\wFn},\widehat{\budget})$ be the new instance resulting from the application of the rule.
    Note that $\widehat{\wFn}$ is just the restriction of $\wFn$ to $\widehat{G}$ which is a subgraph of $G$.
    Let $\mathcal{C}^-$ be the set of components of $C$ removed by the rule.
    Note that for each $I \in \allImp$ there is an even number of components $C \in \mathcal{C}^-$ with $\Imp(\natBeh(C))=I$ and that $\widehat{\budget} = \budget - \sum_{C \in \mathcal{C}^-} \wFn(\natBeh(C))$.
    This safeness proof is similar to the safeness of \Cref{rule:wrp_vc,rul:component_rule}.
    We first show that if the new instance is a \YESi, then so is the original one.

    This direction goes similarly to the safeness of \Cref{rul:component_rule}.
    Let $\widehat{S}$ be a solution walk in the new instance and let $\widehat{G}_S$ be the corresponding multigraph formed by edges of $\widehat{S}$.
    Note that the total weight of $\widehat{G}_S$ is at most $\widehat{\budget}$.
    If $\widehat{\WP} \neq \WP$, then $R^-= \emptyset$, $G=\widehat{G}$, $\widehat{\budget}=\budget$, and $S$ is also a solution for the original instance.
    Otherwise we have for each $C \in \mathcal{C}^-$ that $\Imp(\natBeh(C)) \subseteq \WP$.
    Then we construct a multigraph $G_S$ by adding to $\widehat{G}_S$ for each component $C$ in $\mathcal{C}^-$ the vertex set $V(C)$ together with the edge set $\natBeh(C)$.
    Since each connected component of $(M \cup C, \natBeh(C))$ contains a vertex of $M \cap W$ and $\widehat{G}_S$ is connected, $G_S$ is also connected.
    Furthermore, for each $I \in \allImp$ the number of unmarked components $C$ such that $\Imp(\natBeh(C))=I$ is even.
    Therefore, degrees of vertices in $\widehat{G}_S$ change by an even number and all the added vertices have even degree.
    Hence $G_S$ is Eulerian and contains all vertices of $W$.
    Since the weight of the added edges is exactly $\sum_{C \in \mathcal{C}^-} \wFn(\natBeh(C))$ it follows that $(G,\WP,\wFn,\budget)$ is a \YESi.

    For the other direction, we again follow the safeness of \Cref{rul:component_rule}.
    The main difference is in the case where there are ``many'' natural behaviors that are connected to some vertices of $M \setminus W$.
    To tackle this case, we use the same logic as in the safeness of \Cref{rule:wrp_vc} to see that the solution may be altered slightly (using \Cref{lem:path_mid_behavior}) to contain such vertices.
    Such an alteration of the solution always exists so by adding the non-terminal vertices to $W$ we create an equivalent instance.

    Suppose that the original instance $(G,\WP,\wFn,\budget)$ is a \YESi.

    Suppose first that $\widehat{\WP} \neq \WP$.
    Let $S$ be a solution that traverses all vertices of $\WP$ and the most vertices of $\widehat{\WP} \setminus \WP$ among all solutions.
    Let $G_S$ be the corresponding multigraph formed by edges of $S$.
    If $S$ traverses all the vertices of $\widehat{\WP}$, then it is also a solution for $(\widehat{G},\widehat{\WP},\widehat{\wFn},\widehat{\budget})$.
    Suppose $S$ does not traverse a vertex $v \in \widehat{\WP} \setminus \WP$.
    Since $v$ was added to $\widehat{\WP}$, there is an impact $I=(T,F) \in \allImp$ with $v \in T$ and at least $\big((r+1)^{4r}\cdot 2^{4r+1} + k\big)\cdot |\allImp|$ components $C$ that share impact of their natural behavior $I = \Imp(\natBeh(C))$.
    By the pigeonhole principle there are at least $(r+1)^{4r}\cdot 2^{4r+1} + k$ components $C_i$ that share their actual impact $\Imp(F(S,C_i))$.
    Let $C_I$ be the subset of these components for which all pieces of the actual behavior have two legs.
    By \Cref{lem:piece_two_legs}, at most $k$ pieces of components with the same impact of the actual behavior have more than two legs, hence $|C_I| \ge (r+1)^{4r}\cdot 2^{4r+1}$.
    Now, all components $C_i \in C_I$ have actual behavior $F(S,C_i)$ such that $T(F(S,C_i)) \subseteq W$ and each its piece has two legs, and also a natural behavior $\natBeh(C_i)$ such that $v \in T(\natBeh(C_i))$ while $v \not\in W$.
    Hence, we may apply \Cref{lem:path_mid_behavior} on each $C_i \in C_I$ to obtain a behavior $F_i$.
    We know that $v \in T(F_i)$, $T(F_i) \subseteq T(F(S,C_i)) \cup T(\natBeh(C_i))$, every connected component of $(C_i \cup M, F_i)$ contains a vertex of $M \cap W$, and that $\wFn(F_i) \leq \wFn(F(S,C_i))$.
    For each behavior $F_i$ we have $|T(F_i)| \le 4r$ as $|T(F(S,C_i))| \le 2r$ and $|T(\natBeh(C_i))| \le 2r$.
    Because of our choice of $C_I$ for each $F_i$ impacts $I=\Imp(C_i \cup M, \natBeh(C_i))$ and $\Imp(C_i \cup M, F(S,C_i))$ are fixed.
    The behaviors $F_i$ may have in $(V(C_i) \cup T(F_i), F_i)$ at most $r$ connected components.
    Each vertex in $T(F(S,C_i)) \cup T(\natBeh(C_i))$ may belong to one of these $r$ connected components of $T(F_i)$ or to none.
    Also, it is incident either to odd or even number of edges from $F_i$.
    These two properties fully define possible impacts of the behavior $F_i$.
    Hence, there are at most ${(r+1)}^{4r} \cdot 2^{4r}$ different impacts.
    As we have at least twice as many components in $C_I$ we are guaranteed to have two components $C_a,C_b \in C_I$ which have respective $F_a$ and $F_b$ such that $\Imp(F_a) = \Imp(F_b)$.

    We remove from $G_S$ the behaviors $F(S,C_a)$ and $F(S,C_b)$ (of components $C_a$ and $C_b$) and add vertex $v$ together with the edges of $F_a$ and $F_b$ to obtain $\widehat{G}_S$.
    By \Cref{lem:path_mid_behavior} this does not increase the cost.
    As the two changes to the impacts were identical it follows that the degrees of vertices in $M$ changed by an even amount, the vertices inside of $C_a$ and $C_b$ have even degree in $\widehat{G}_S$ by the construction, and all the other vertex degrees remain unchanged, thus $\widehat{G}_S$ has all degrees even.
    The connectivity did not change because all components of $C_I$ have the same impact as $F(S,C_a)$ and $F(S,C_b)$, while each piece of $F_a$ and $F_b$ is connected to a vertex in $M \cap W$.
    Therefore $\widehat{G}_S$ is Eulerian and the corresponding trail constitutes a solution $\widehat{S}$ that traverses all vertices of $\WP$ and more vertices of $\widehat{\WP} \setminus \WP$ than $S$, contradicting the choice of $S$.
    Hence there is a solution $S$ traversing all vertices of $\widehat{\WP}$ proving the safeness of the rule in the case $\widehat{\WP} \neq \WP$.

    The proof for the case $\widehat{\WP} = \WP$ is exactly the same as the proof of safeness of \Cref{rul:component_rule}, since for each $C$ in $\mathcal{C}^-$ we have $T(\natBeh(C)) \subseteq W$, and, hence we never encounter any non-terminal vertices in course of the proof.
\end{proof}

To prove \Cref{thm:WRP:modulatorToConstPathsKernel} we estimate the number of vertices and edges in the reduced instance; \Cref{thm:WRP:modulatorToConstPathsKernel} then follows using \Cref{lem:magic_kernel}.

\begin{lemma}\label{lem:sTSP:modulatorToConstPaths:kernelBound}
    After \Cref{rul:subset_component_rule} has been applied, the number of components is bounded by $2(|\allImp|^{2}+2|M|)|\allImp|^{2} + \binom{|M|}{2} + 2|\allImp| + \big((r+1)^{4r}\cdot 2^{4r+1} + k\big)\cdot |\allImp|^2 = k^{\O(r)}$.
    Hence, the number of vertices and edges in the reduced graph~$G$ is $k^{\O(r)}$.
\end{lemma}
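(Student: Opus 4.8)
The proof will be a straightforward counting argument, paralleling that of \Cref{lem:TSP:modulatorToConstComponents:kernelBound}, with one extra color to account for. First I would observe that \Cref{rul:subset_component_rule} either leaves the instance unchanged, adds some touched set $T$ to $\WP$ (removing nothing), or---when $\WP$ is not changed---marks a collection of components and deletes all the unmarked ones. Each time the rule adds a set to $\WP$ it strictly decreases the number of non-terminal vertices, which is at most $|M| \le k$; hence after at most $k$ such steps the rule reaches a state in which $\WP$ is no longer modified, and it is this final application that removes components. It therefore suffices to bound the number of components that survive in that final state, i.e.\ the number of marked components, since every unmarked component is deleted.

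Next I would count the marked components by color. For every ordered pair $I, I' \in \allImp$ at most $2|\allImp|^{2}+2|M|$ components are marked red, giving at most $(2|\allImp|^{2}+2|M|)\,|\allImp|^{2}$ red components. For every unordered pair $u,v \in M$ at most one component is marked blue, giving at most $\binom{|M|}{2}$ blue components. For every impact $I \in \allImp$ at most two components are marked green, giving at most $2|\allImp|$ green components. Finally, for every impact $I=(T,D) \in \allImp$ with $T \nsubseteq \WP$ at most $\big((r+1)^{4r}\cdot 2^{4r+1} + k\big)\cdot |\allImp|$ components are marked yellow, giving at most $\big((r+1)^{4r}\cdot 2^{4r+1} + k\big)\cdot |\allImp|^{2}$ yellow components. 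Summing the four quantities yields exactly the claimed bound; by \Cref{obs:num_impacts} we have $|\allImp| = k^{\O(r)}$, and $|M| \le k$, so each summand---and hence the total number of surviving components---is $k^{\O(r)}$.

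To pass from components to the size of the reduced graph, I would use that $R$ is a disjoint union of paths with at most $r$ vertices each: the surviving components contribute at most $r \cdot k^{\O(r)} = k^{\O(r)}$ vertices, and together with the $|M| \le k$ vertices of the modulator this gives $k^{\O(r)}$ vertices in total. For the edges, each surviving component $C$ has at most $\binom{r}{2}$ internal edges and at most $r|M| \le rk$ edges to $M$, so the components contribute $k^{\O(r)}$ edges; adding the at most $\binom{|M|}{2}$ edges inside $M$ keeps the count at $k^{\O(r)}$. I do not expect a real obstacle at this point: all the genuine work has already been done in \Cref{lem:piece_two_legs}, \Cref{lem:path_mid_behavior}, and the safeness argument of \Cref{rul:subset_component_rule}; the only mild care needed here is the observation above that one must wait until $\WP$ stabilizes before the component count applies.
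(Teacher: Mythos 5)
Your proposal is correct and follows essentially the same route as the paper: count the marked components color by color (red, blue, green, yellow), sum the four bounds, and invoke \Cref{obs:num_impacts} to get $k^{\O(r)}$. The extra remarks you add about waiting for $\WP$ to stabilize and about converting the component count into vertex and edge counts are sound elaborations of steps the paper leaves implicit.
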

\begin{proof}
    As \Cref{rul:subset_component_rule} marks some of the components and removes the unmarked ones, we just need to count the total number of marked components.
    The rule marks at most $2|\allImp|^{2}+2|M|$ components in red for every $I, I' \in \allImp$.
    It marks at most one component in blue for every pair of vertices $u,v \in M$.
    It marks at most two components in green for every $I \in \allImp$.
    And finally, it marks at most $\big((r+1)^{4r}\cdot 2^{4r+1} + k\big)\cdot |\allImp|$ components in yellow for every $I \in \allImp$.

    Hence, there are at most $2(|\allImp|^{2}+2|M|)|\allImp|^{2}$ red, $\binom{|M|}{2}$ blue, $2|\allImp|$ green, and $\big((r+1)^{4r}\cdot 2^{4r+1} + k\big)\cdot |\allImp|^2$ yellow components.
    By \Cref{obs:num_impacts}, this is $k^{\O(r)}$ components in total.
\end{proof}

Now, we are ready to prove \Cref{thm:WRP:modulatorToConstPathsKernel} which we repeat for reader's convenience.

\thmWRPmodulatorToConstPathsKernel*
\begin{proof}
    The size of the kernel follows from \Cref{lem:sTSP:modulatorToConstPaths:kernelBound}.
    It remains to show that \Cref{rul:subset_component_rule} can be applied in polynomial time.

    As $G$ contains at most $n(k+r)$ edges, the connected components of $R$ can be found in $\O(n(k+r))$ time.
    Let $C$ be a component of $R$.
    We need to find $\natBeh(C)$ and for each $I' \in \allImp$ we need $\min\{\wFn(F) \mid F \in \allBeh(C,I')\}$ in order to determine $P(C,I,I')$ for each $I,I' \in \allImp$.
    We obtain both by iterating over all behaviors $F$ of $C$.
    Namely, there are at most $(kr+1)^{2r}$ options for the at most $2r$ edges between $C$ and $M$ contained in $F$ and $3^{\binom{r}{2}}$ options for the multiplicity of edges within $C$ contained in $F$.
    For each combination of these options we check in $\O(r^2)$ time whether the resulting $F$ is a behavior of $C$ and if so, determine its impact.
    We store for each $I' \in \allImp$ the minimum cost behavior with this impact.
    As $|\allImp| = k^{\O(r)}$ by \Cref{obs:num_impacts}, $P(C,I,I')$ for each $I,I' \in \allImp$ can be computed in $k^{\O(r)}2^{\O(r^2)}$ time.

    As there are at most $n$ components, in $k^{\O(r)}2^{\O(r^2)} \cdot n$ time we can find the components to be marked in red.
    As there are at most $kr+r^2$ edges in each $G_C$, for each $u,v \in M$ we can compute the shortest path between $u$ and $v$ in $G_C$ in $\O((kr+r^2)^2)$ time using Dijkstra's algorithm.
    Hence, the components to be marked in blue can be found in $\O(nk^2 \cdot (kr+r^2)^2) \subseteq k^{\O(r)}2^{\O(r^2)} \cdot n$ time.
    Then, all the yet unmarked components $C$ are grouped based on their $\natBeh(C)$ which were computed in $k^{\O(r)}2^{\O(r^2)} \cdot n$.
    Each group gets checked if $T(\Imp(\natBeh(C))) \subseteq W$ in $\O(r) \cdot n$ which gives us all the components marked green in $k^{\O(r)}2^{\O(r^2)} \cdot n$ time.
    Each group where $T(\Imp(\natBeh(C))) \not\subseteq W$ gets checked whether it contains
    at most $\big((r+1)^{4r}\cdot 2^{4r+1} + k\big)\cdot |\allImp|$ components.
    If so, then all the components get marked yellow, otherwise, $T(\Imp(\natBeh(C)))$ gets added to $W$.

    After finishing the marking, either $W$ was expanded or the unmarked components are removed in $\O(n)$ time (as $\wFn(\natBeh(C))$ was already computed previously).
    It remains to use \Cref{lem:magic_kernel} and the theorem follows.
\end{proof}

\section{No Polynomial Kernel with Respect to Modulator to Disjoint Cycles}\label{sec:NoTPKwrtDisjCycles}

In this section, we exclude, under reasonable theoretical assumptions, existence of the polynomial (Turing) kernel with respect to the size $k$ of a modulator to disjoint cycles. Modulator to disjoint cycles can be also seen as a modulator to graphs of treewidth two. We provide a polynomial parameter transformation from \textsc{Multicolored Clique} parameterized by $k \log n$, which is known to be \WKc~\cite{HermelinKSWW15}.
To simplify our polynomial parameter transformation, we use a gadget notation. We start with a simple general-purpose gadget for selection.

\subparagraph*{Selection-gadget.} The basic building block of our \emph{selection-gadget} is a \emph{cherry}. A cherry $X$ is a simple graph with $V(X)=\{x^0,x^1,x^*\}$ and $E(X) = \{\{x^0,x^*\},\{x^1,x^*\}\}$. The vertices $x^0$ and $x^1$ are non-terminals and we call them ports. On the contrary, the vertex $x^*$ is a terminal. A selection gadget $\mathbb{S}$ of length $\ell$ consist of $\ell$ cherries $X_1,\ldots,X_\ell$ connected in the following way. For every $i\in\{1,\ldots,\ell-1\}$ we add edges $\{x_i^*,x_{i+1}^0\}$ and $\{x_i^*,x_{i+1}^1\}$. To complete the construction of the selection gadget, we add edges $\{x_\ell^*,x_{1}^0\}$ and $\{x_\ell^*,x_{1}^1\}$. In other words, the selection gadget is a cyclic connection of cherries.
\smallskip

\begin{lemma}\label{lem:selection_gadget}
	Let $\mathbb{S}$ be a selection-gadget of length $\ell$, where $\ell \geq 3$, with cherries $X_1,\ldots,X_\ell$. An optimal solution of the instance of the \sTSP problem on graph $\mathbb{S}$ is of size $2\ell$ and for every $i\in\{1,\ldots,\ell\}$ the solution visits either $x_i^0$ or $x_i^1$ but not both.
\end{lemma}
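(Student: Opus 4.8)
Throughout I would exploit that every edge of $\mathbb{S}$ — be it a cherry edge $\{x_i^b,x_i^*\}$ or a connecting edge $\{x_i^*,x_{i+1}^b\}$ (indices modulo $\ell$) — joins a terminal $x_i^*$ to a port; hence $\mathbb{S}$ is bipartite with the terminals on one side and the $2\ell$ ports on the other, and every port has exactly two neighbours, namely $x_{c-1}^*$ and $x_c^*$ when the port belongs to the cherry $X_c$. The waypoint set is $\{x_1^*,\ldots,x_\ell^*\}$ and $\mathbb{S}$ is unweighted, so the size of a solution $S$ equals $|E(G_S)|$ counted with multiplicity, where $G_S$ is the associated (connected, Eulerian) multigraph.

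\emph{Lower and upper bound on the optimum.} By bipartiteness each edge of $G_S$ is incident with exactly one terminal, so $\sum_{i=1}^\ell \deg_{G_S}(x_i^*)=|E(G_S)|$. Each terminal is visited, hence has positive even degree in $G_S$, so this sum is at least $2\ell$ and thus $|E(G_S)|\ge 2\ell$. Conversely, choosing for each cherry $X_i$ one of its ports $x_i^{b_i}$, the closed walk $x_1^*,x_2^{b_2},x_2^*,x_3^{b_3},\ldots,x_\ell^*,x_1^{b_1},x_1^*$ visits all terminals and uses exactly $2\ell$ edges; therefore the optimum is exactly $2\ell$, and in particular solutions visiting exactly one port per cherry exist.

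\emph{Structure of an optimal solution.} Fix an optimal $S$; by \Cref{lem:savy_optimal_solution} we may take $G_S$ to be a connected Eulerian multigraph with $|E(G_S)|=2\ell$. From the displayed degree identity together with parity and positivity, every terminal has degree exactly $2$ in $G_S$. Now take an Euler tour of $G_S$. Since $\mathbb{S}$ is bipartite, the tour alternates between terminals and ports; it has $2\ell$ edges, hence $\ell$ terminal-occurrences and $\ell$ port-occurrences. All $\ell$ terminals are visited and there are only $\ell$ terminal-occurrences, so each terminal occurs exactly once in the tour. Write the cyclic order in which the terminals appear along the tour as $x_{\sigma(1)}^*,\ldots,x_{\sigma(\ell)}^*$, and let $p_j$ be the port lying between $x_{\sigma(j)}^*$ and $x_{\sigma(j+1)}^*$ in the tour. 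Because $\ell\ge 3$, the terminals $x_{\sigma(j)}^*$ and $x_{\sigma(j+1)}^*$ are distinct, and the two tour-edges at this occurrence of $p_j$ join it to these two distinct terminals; as $p_j$ has only the two neighbours $x_{c-1}^*,x_c^*$ for its cherry $X_c$, we conclude $\{x_{\sigma(j)}^*,x_{\sigma(j+1)}^*\}=\{x_{c-1}^*,x_c^*\}$, i.e.\ $\sigma(j)$ and $\sigma(j+1)$ are consecutive modulo $\ell$.

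\emph{Conclusion and the main obstacle.} Thus $\sigma$ is a permutation of $\{1,\ldots,\ell\}$ whose cyclically consecutive entries differ by $\pm 1$ modulo $\ell$ — a Hamiltonian cycle of the cycle graph $C_\ell$ — which for $\ell\ge 3$ can only be the two orientations of $C_\ell$ itself; hence the tour visits the terminals in the natural cyclic order. Consequently, for every $i$ the ``link'' between $x_{i-1}^*$ and $x_i^*$ is traversed exactly once, through exactly one port, and that port, being adjacent to both $x_{i-1}^*$ and $x_i^*$, must be $x_i^0$ or $x_i^1$; since the Euler tour exhausts all $2\ell$ edges of $G_S$, no other port of $X_i$ is touched. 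Hence the optimal solution visits, for each $i$, exactly one of $x_i^0,x_i^1$. The step I expect to require the most care is the Euler-tour bookkeeping: arguing cleanly that each terminal occurs exactly once in the tour, that the two tour-edges at an inter-terminal port necessarily reach two \emph{distinct} terminals (so a port of ``unexpected'' degree $4$ cannot spoil the argument), and pinpointing precisely where $\ell\ge 3$ enters, since for $\ell=2$ the ``one port per cherry'' conclusion genuinely fails.
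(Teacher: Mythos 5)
Your proof is correct, and it rests on the same key observation as the paper's: in an optimal solution every terminal is incident with exactly two edge traversals, which together with connectivity forces the walk to thread through exactly one port per cherry. The execution differs in flavour, though. The paper argues locally by contradiction at a single terminal (if both traversals at $x_i^*$ go to the ``same side'', the walk gets stuck and cannot reach the remaining terminals), whereas you argue globally: you read the structure off an Euler tour of the solution multigraph, use bipartiteness to count $\ell$ terminal occurrences and $\ell$ port occurrences, and observe that the cyclic order of terminal occurrences must be a Hamiltonian cycle of $C_\ell$, hence the natural cyclic order. Your global formulation is arguably tighter --- it handles potential doubled edges and higher-multiplicity ports uniformly, makes explicit exactly where $\ell\ge 3$ enters (ruling out the degenerate $\sigma(j+2)=\sigma(j)$ backtrack), and supplies the matching upper bound of $2\ell$ via an explicit walk, which the paper leaves implicit. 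I see no gaps.
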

\begin{proof}
	As each terminal needs to be incident with two traversals of the solution and as no terminals are adjacent it is clear that the optimum walk is of size at least $2\ell$.
	A solution which takes at most $2\ell$ edges has each terminal incident with exactly two solution traversals.
	Assume, for a contradiction, that for some $x_i^*$ both of the solution traversals go to the `same side', e.g., both go without loss of generality (due to symmetry) to $x_i^1$ or $x_i^0$.

	If one edge is traversed twice, e.g., $\{x_i^*,x_i^1\}$, then the solution walk needs to continue from $x_i^1$ further to reach other terminals.
	It needs to traverse the edge $\{x_i^1,x_{i+1}^*\}$ twice, and no other edge traversal may be done incident with $x_{i+1}^*$ as two were already spent.
	In that case, other terminals are never reached, a contradiction.
	We proceed similarly for two traversals of $\{x_i^*,x_i^0\}$.

	If both edges $\{x_i^*,x_i^1\}$ and $\{x_i^*,x_i^0\}$ are used once, then the walk must continue as well, using $\{x_i^1,x_{i+1}^1\}$ once and $\{x_i^0,x_{i+1}^1\}$ once.
	Again, the solution used two edges incident with $x_{i+1}^*$ so there is no hope to reach other terminals, a contradiction.

	Thus, for each terminal, the solution traverses one edge to $x_i^1$ or $x_i^0$ and a second edge to $x_{i-1}^1$ or $x_{i-1}^0$.
	To constitute a walk, these traversals need to join on a single non-terminal between each pair of consecutive terminals.
	This forms a closed walk over the terminals, using $2\ell$ edges and for each $i\in\{1,\ldots,\ell\}$ visiting exactly one of $x_i^1$ or $x_i^0$ but not both.
\end{proof}

\subparagraph*{Cycle-gadget.} A \emph{cycle-gadget} $\mathcal{B}$ of size $\ell$ is a cycle with $3\ell$ vertices $(\tau_{1}^1,\tau_{1}^2,\tau_{1}^3,\tau_{2}^1,\ldots,\tau_{\ell}^3)$. For every $i\in\{1,\ldots,\ell\}$ we call the vertices $\{\tau_{i}^1,\tau_{i}^2,\tau_{i}^3\}$ a \emph{triplet} (denoted $\tau_i$). All vertices of the cycle-gadget are terminals.
Let $\mathcal{B}$ be a cycle-gadget of size at least $1$ and $v$ be a vertex that is not part of $\mathcal{B}$. In our constructions, we are connecting vertices with triplets of cycle-gadgets, each triplet with exactly one vertex. Hence, if we say that we connect triplet $\tau_i$ of $\mathcal{B}$ with vertex $v$, then we add two edges $\{v,\tau_{i}^1\}$ and $\{v,\tau_{i}^2\}$. Note that the vertex $\tau_{i}^3$ has always degree two.

\thmsTSPWKcomplete*
\begin{proof}
	Let $\mathcal{I} = (G,k,c)$ be an instance of the \textsc{Multicolored Clique} parameterized by $k\log n$. At first, we extend the set of vertices of the graph $G$ by isolated vertices so that every color class contains exactly $N=2^{\lceil\log n\rceil}$ vertices. Note that these additional vertices do not change the solution of $\mathcal{I}$. We call graph vertices $v_a^i$, where $1\leq i\leq k$ and $0\leq a < N$. For the rest of the proof, let $G$ be an input graph with the additional vertices.

	Now, we construct an equivalent instance $\mathcal{I}'$ of the \sTSP problem. First, we introduce one selection-gadget of size $k\log N$ with cherries denoted by $X_{i,j}$, where $1\leq i \leq k$ and $1 \leq j \leq \log N$. This gadget introduces $\log N$ cherries for every color class. We use these cherries to select a specific vertex by encoding its number in binary. Next, for every \emph{non-edge} $\{v_a^i,v_{a'}^{i'}\}\not\in E(G)$, $i\neq i'$, we introduce a cycle-gadget $\mathcal{B}(\{v_a^i,v_{a'}^{i'}\})$ of size $2\log N$. Let $\operatorname{bit}(y,\ell)$ be a function returning value of the $\ell$-th bit of the binary representation of the positive integer $y$. For the first $\log N$ triplets, we connect every triplet $\tau_j$, where $j\in\{1,\ldots,\log N\}$, to the port $x_{i,j}^{\neg\operatorname{bit}(a,j)}$ of the selection-gadget. For the remaining triplets, we connect every triplet $\tau_{(\log N) + j}$, where $j \in \{1,\ldots,\log N\}$, to the port $x_{i',j}^{\neg\operatorname{bit}(a',j)}$ of the selection-gadget. To complete the construction, we set the budget value to $\budget = 2k\log N + (3\cdot 2\log N + 1)\cdot\left(\binom{k}{2}N^2 - |E(G)|\right)$.

	\begin{claim}\label{lem:cycle_gadget_works}
		Let $\mathbb{S}$ be a selection-gadget for vertices of $\mathcal{I}$ and $\mathcal{B}(\{v_{a}^i,v_{a'}^{i'}\})$ be a single cycle-gadget of size $2\log N$ connected to $\mathbb{S}$ according to the previous construction. Then any optimal solution $S$ of \sTSP traverses at least $3\cdot2\log N + 1$ edges incident with $\mathcal{B}(\{v_{a}^i,v_{a'}^{i'}\})$. Moreover, if $S$ traverses exactly $3\cdot 2\log N + 1$ edges, then there is a triplet $\tau_i$ and corresponding port connected to $\tau_i$ such that the part of $S$ incident with the cycle-gadget together with edges connecting $\tau_i$ to its corresponding port forms a closed walk.
	\end{claim}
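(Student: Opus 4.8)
The plan is to argue entirely inside the multigraph $G_S$ of a \emph{nice} optimal solution $S$ (which exists by \Cref{lem:savy_optimal_solution}), by counting edge traversals incident with the cycle-gadget $\mathcal{B} := \mathcal{B}(\{v_{a}^i,v_{a'}^{i'}\})$. Call the $3\cdot 2\log N$ edges of the cycle underlying $\mathcal{B}$ its \emph{inner} edges, and the $2\cdot 2\log N$ edges joining triplets of $\mathcal{B}$ to ports of $\mathbb{S}$ its \emph{connection} edges; by the construction these are the only edges incident with $V(\mathcal{B})$. Let $a$ and $b$ denote the number of traversals of inner and connection edges in $S$, respectively. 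Since every vertex of $\mathcal{B}$ is a waypoint it is visited by $S$, so it has degree at least $2$ in $G_S$; an inner-edge traversal contributes $2$ and a connection-edge traversal contributes $1$ to $\sum_{v\in V(\mathcal{B})}\deg_{G_S}(v)$, whence
\[
	2a+b \;\ge\; 2\cdot 3\cdot 2\log N \;=\; 12\log N .
\]

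Next I would pin down $b$. The connection edges form exactly the edge cut separating $V(\mathcal{B})$ from the rest of the graph, and a closed walk crosses any cut an even number of times, so $b$ is even; moreover $b\ge 1$, because $\mathbb{S}$ contains waypoints (the vertices $x_{i,j}^*$) outside $\mathcal{B}$, so $S$ must leave $\mathcal{B}$ at least once. Hence $b\ge 2$ and
\[
	a+b \;=\; \tfrac{(2a+b)+b}{2} \;\ge\; \tfrac{12\log N+2}{2} \;=\; 3\cdot 2\log N+1 ,
\]
which proves the first statement.

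For the ``moreover'' part, assume $a+b = 3\cdot 2\log N+1$ exactly. Since $b$ is even with $b\ge 2$ and $2a+b\ge 12\log N$, the only possibility is $b=2$, $a=3\cdot 2\log N-1$ and $2a+b=12\log N$; in particular \emph{every} vertex of $\mathcal{B}$ has degree exactly $2$ in $G_S$. Let $H$ be the sub-multigraph of the cycle underlying $\mathcal{B}$ formed by the inner-edge traversals of $S$, and let $w_1,w_2$ be the (at most two) vertices of $\mathcal{B}$ incident with a connection-edge traversal. Every vertex of $\mathcal{B}$ other than $w_1,w_2$ is a port-free vertex whose degree $2$ in $G_S$ comes solely from inner edges, hence it has degree $2$ in $H$; as $H$ is a sub-multigraph of a cycle using each edge at most twice, its connected components are paths, and since $G_S$ can leave $V(\mathcal{B})$ only through $w_1$ or $w_2$, connectivity of $G_S$ forces each component of $H$ to contain $w_1$ or $w_2$, so $H$ has at most two components. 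A short edge count — a path component must double its end edge unless that endpoint is some $w_i$, and a port-free vertex cannot be isolated in $H$ — shows that the two-component configuration, and the configuration where one connection edge is used twice (so $w_1=w_2$ is isolated in $H$), would each need strictly more than $a = 3\cdot 2\log N-1$ inner-edge traversals, using $2\log N\ge 2$; this is a contradiction. Therefore $H$ is a single path spanning $V(\mathcal{B})$ with $3\cdot 2\log N-1$ distinct edges, i.e.\ the underlying cycle minus one edge, whose two degree-$1$ endpoints are exactly $w_1$ and $w_2$. Thus $w_1,w_2$ are adjacent on that cycle; the only adjacent pair of port-incident vertices of $\mathcal{B}$ is $\{\tau_j^1,\tau_j^2\}$ of a single triplet $\tau_j$ (joined to one port $p_j$, since within a cycle-gadget all triplets go to distinct ports), so the two connection-edge traversals are $\{p_j,\tau_j^1\}$ and $\{p_j,\tau_j^2\}$, and together with the Hamiltonian path $H$ they form one closed walk, as claimed.

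I expect the last paragraph — the case analysis of the all-degree-two sub-multigraph $H$, ruling out doubled inner/connection edges and the two-component configuration purely from the edge budget $a=3\cdot 2\log N-1$ and the connectivity of $G_S$ — to be the only delicate point; the degree counting of the first two paragraphs is routine, and identifying $\{\tau_j^1,\tau_j^2\}$ as the unique adjacent port-incident pair is immediate from the construction.
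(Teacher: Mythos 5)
Your proof is correct and follows essentially the same route as the paper's: a degree-count over the $6\log N$ terminals of the cycle-gadget combined with the parity/positivity of the cut gives the lower bound $3\cdot 2\log N+1$, and in the tight case exactly two connection edges leave the gadget and must attach to the same port. Your treatment of the tight case (showing the inner traversals form a Hamiltonian path of the cycle whose endpoints are the unique adjacent port-incident pair $\{\tau_j^1,\tau_j^2\}$) is more detailed than the paper's terse "otherwise some of the terminals would not be traversed," but it is the same argument made explicit.
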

	\begin{claimproof}
		Each terminal of the cycle-gadget must be visited, hence, each vertex in it has at least two incident edges in the solution.
		In total, we have exactly $3\cdot2\log N + L/2$ edges in the cycle-gadget, where $L$ is the number of edges which connect to the vertices of the selection-gadget.
		As the cycle-gadget must be entered and exited from the selection-gadget at least once, we have at least $3\cdot2\log N + 1$ edges in every cycle-gadget.

		If the number of traversed edges is exactly $3\cdot2\log N + 1$, then the cycle-gadget has exactly two outgoing edges $e_1,e_2$.
		The edges $e_1,e_2$ must be incident to the same port of the selection-gadget, otherwise some of the terminals would not be traversed.
	\end{claimproof}

	By \Cref{lem:selection_gadget,lem:cycle_gadget_works} the solution traverses the selection-gadget and visits exactly one vertex from $\{x_{i,j}^0,x_{i,j}^1\}$ for each cherry $X_{i,j}$. If all cycle-gadgets are connected to a $x_{i,j}^1$, then they may be traversed, if not, the solution does not exist.

	Now we prove that the instances are equivalent. First, for a contradiction, assume that $G$ does not contain a multicolored clique of size $k$, but we found a solution for our instance.
	We observe, that for each color $i$ the chosen port of $X_{i,j}$ for $1 \le j \le \log N$ represents one vertex of the original instance.
	As there is no multicolored clique of size $k$ in $G$ the solution must choose some pair of valuations representing vertices $v_{a}^i$ and $v_{a'}^{i'}$ such that there is no edge between them.
	As this is an non-edge there is a cycle-gadget $\mathcal{B}(\{v_{a}^i,v_{a'}^{i'}\})$ in our instance.
	Hence, such solution cannot traverse all terminals with the prescribed budget, which is a contradiction.

	In the opposite direction, let $\mathcal{I}$ be a \YESi with a solution $S=\{v_{a_1}^{1},v_{a_2}^{2},\ldots,v_{a_k}^{k}\}$ of size $k$ in $G$.
	Then, there is a solution $S'$ of $\mathcal{I}'$ such that it visits vertices $K=\{x_{i,j}^{\operatorname{bit}(a_i,j)}\mid j\in\{1,\ldots,\log N\},{i\in\{1,\ldots,k\}}\}$.
	Since $S$ is a complete graph there is at least one triplet for every cycle-gadget connected to at least one vertex of $K$. Hence, a solution of the weight $\budget$ exists.

	We presented a polynomial parameter transformation from \textsc{Multicolored Clique} parameterized by $k \log n$ to \sTSP where the parameter has size $3 k \log n$. Therefore, \sTSPshort is \WKh.
\end{proof}

\section{No Polynomial Kernel with Respect to Fractioning Number and Degree-Treewidth}\label{sec:NoPKwrtFN}
We already claimed the following result excluding existence of the polynomial kernel for the \TSP problem with respect to the number of structural parameters.
\lemTSPnoPKwrtFN*

We prove \Cref{lem:TSP:noPKwrtFN} using the following two auxiliary lemmas.
\begin{lemma}\label{lem:HP_cross_compose_TSP_fn}
	\textsc{Hamiltonian Path} AND-cross-composes into \TSPshort parameterized by the fractioning number of the input graph.
\end{lemma}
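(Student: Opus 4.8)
The plan is to give an explicit AND-cross-composition. First I would fix the polynomial equivalence relation that puts two strings into the same class exactly when both encode graphs on the same number $n$ of vertices, together with one extra class for malformed strings on which the composition outputs a fixed trivial \textsc{no}-instance. Thus it suffices to treat the case where I am given graphs $G_1,\dots,G_t$, each on $n$ vertices, and to build a single unweighted \TSPshort instance $(H,\budget)$ with $\fn(H)\le\poly(n,\log t)$ that is a \YESi exactly when \emph{every} $G_i$ has a Hamiltonian path; tiny instances ($n\le 1$) are trivial and are handled uniformly, so one may also assume $n\ge 2$.

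The construction I would use is the universal-vertex trick glued together at a single vertex: let $H$ be the disjoint union $G_1\sqcup\dots\sqcup G_t$ plus one fresh vertex $u$ made adjacent to every vertex of every $G_i$; set all edge weights to $1$ and $\budget=t(n+1)$. This is computable in time $\poly(\sum_i|x_i|)$. For the parameter, removing $u$ leaves exactly the components $G_1,\dots,G_t$, each of size $n$, so $\{u\}$ is an $n$-fractioning set and $\fn(H)\le n$, well within the allowed bound. For the easy direction, if each $G_i$ has a Hamiltonian path with endpoints $a_i,b_i$, then the closed walk $u,a_1,\dots,b_1,u,a_2,\dots,b_2,u,\dots,b_t,u$ visits all of $V(H)$ using $n-1$ edges inside each $G_i$ and two edges at $u$ per block, that is, exactly $t(n-1)+2t=t(n+1)=\budget$ edges.

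For the converse, suppose $H$ admits a closed walk $S$ of weight at most $\budget$ visiting every vertex, and let $G_S$ be the corresponding connected all-even-degree multigraph. Since $u$ is a cut vertex separating the $G_i$'s, $E(G_S)$ partitions as $\bigsqcup_i E(H_i)$, where $H_i$ is the submultigraph induced on $V(G_i)\cup\{u\}$; every vertex of $V(G_i)$ has even degree $\ge 2$ in $H_i$, and $u$ has even degree $\ge 2$ in $H_i$ (the block $G_i$ is visited and $u$ is its only exit), so $|E(H_i)|\ge n+1$ and hence $|E(G_S)|=\sum_i|E(H_i)|\ge t(n+1)=\budget$. Equality must then hold in every $H_i$, forcing all its degrees to be exactly $2$, so $H_i$ is a vertex-disjoint union of cycles (length-$2$ doubled-edge ``cycles'' allowed, since $S$ may reuse an edge), one of which contains $u$; any other cycle would be a connected component of $G_S$ avoiding $u$, impossible as $G_S$ is connected. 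Thus each $H_i$ is a single $(n+1)$-cycle through $u$ and all of $V(G_i)$, i.e.\ a Hamiltonian cycle of $G_i+u$, which exists if and only if $G_i$ has a Hamiltonian path. Combining both directions, and since \textsc{Hamiltonian Path} is \NPh, this yields the claimed AND-cross-composition. I expect the converse direction to be the main obstacle: one must be careful with the partition of $E(G_S)$ across the cut vertex $u$, with the degree-parity bookkeeping that pins each $|E(H_i)|$ to $n+1$, and with the use of connectedness of $G_S$ to exclude the stray cycles (in particular doubled edges).
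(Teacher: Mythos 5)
Your construction (disjoint union of the $G_i$ plus a single apex vertex adjacent to everything, unit weights, budget $t(n+1)$) and your forward direction are exactly those of the paper, and your converse—counting edges blockwise at the cut vertex via the Eulerian multigraph and forcing equality—is a correct, if slightly differently packaged, version of the paper's component-counting argument. The proposal is correct and takes essentially the same approach as the paper.
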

\begin{proof}
  Equivalence relation $\mathcal{R}$ considers as equivalent all malformed instances and all the well-formed instances with the same number of vertices. It is easy to see that this is a polynomial equivalence relation and we may assume that all input graphs have $k$ vertices.

  For $t$ equivalent instances $G_1=(V_1,E_1),\ldots,G_t=(V_t,E_t)$ of the \textsc{Hamiltonian Path} problem we construct an instance $\mathcal{I}=(G=(V,E),\wFn,\budget)$ of the \TSP as follows. We create the graph $G$ by taking the disjoint union of graphs $G_1,\ldots,G_t$ and adding a new apex vertex $v$ connected to all other vertices. Our weight function $\wFn$ assigns each edge $e\in E(G)$ the weight $1$ and we set the budget to be $\budget=t\cdot(k-1)+2t = t\cdot (k+1)$. We can clearly construct $(G,\wFn,\budget)$ in polynomial time with respect to the sum of all input instances and since $\fn(G_i) \leq k$ for every $i\in[t]$, it is immediate that $\fn(G)\leq k+1$.

  Assume first that there is a solution $P_i$ for every $G_i$, $i\in[t]$. We construct a solution of $\mathcal{I}$ by ``gluing'' solutions $P_i$ using the apex vertex $v$. The desired walk in $\mathcal{I}$ is a walk $W=v,P_1,v,P_2,\ldots,P_t,v$. Clearly, $W$ visits all vertices of $G$. Moreover, since the vertex $v$ is visited once before and once after each path $P_i$ and the length of each $P_i$ is exactly $k-1$, it follows that the $\sum_{e\in E(W)}e=t(k-1)+2t=t(k+1) = \budget$.

  In the other direction, let $W$ be a solution walk for the constructed \TSPshort instance $\mathcal{I}$. We observe that the solution visits every graph $G_i$ at least once and the cost for every such graph is at least $k - q_i + 2q_i = k+q_i$, where $q_i$ is the number of connected components in the intersection of the solution and~$G_i$.
  Since $q_i \ge 1$ for all $i \in [t]$ and since we have $\sum_{i = 1}^t (k+q_i) \le \budget = t \cdot (k+1)$, we get that $\sum_{i = 1}^t q_i \le k$.
  Therefore, $q_i = 1$ for all $i \in [t]$ and thus the intersection of the solution and~$G_i$ is a Hamiltonian path as desired.
\end{proof}

\begin{lemma}\label{lem:HP_cross_compose_TSP_degtw}
	\textsc{Hamiltonian Path} AND-cross-composes into \TSPshort parameterized by the combined parameter treewidth $\tw(G)$ and maximum degree $\Delta(G)$ of the input graph.
\end{lemma}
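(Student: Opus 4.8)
The plan is to follow the template of \Cref{lem:HP_cross_compose_TSP_fn}, but to replace the single apex vertex---which has degree $\Theta(tk)$ and is therefore useless once we must bound $\Delta(G)$---by a ``necklace'' of $t$ low-degree connector vertices arranged in a cycle. I would keep the polynomial equivalence relation from \Cref{lem:HP_cross_compose_TSP_fn} (all malformed instances in one class, well-formed instances grouped by number of vertices), so I may assume the input instances $G_1,\dots,G_t$ of \textsc{Hamiltonian Path} all have exactly $k$ vertices; padding the list with copies of the $k$-vertex path (a well-formed yes-instance, hence harmless for an \textsc{And}-composition) additionally lets me assume $t\ge 3$.

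\textbf{Construction.} Let $G$ be the disjoint union of $G_1,\dots,G_t$ together with $t$ new \emph{connector} vertices $c_1,\dots,c_t$, where for each $i$ I make $c_i$ adjacent to every vertex of $V(G_i)\cup V(G_{i-1})$ (indices modulo $t$). All edges get weight $1$ and I set $\budget=t(k+1)$; note $G$ has exactly $t(k+1)$ vertices. The intended solution is the Hamiltonian cycle $c_1,P_1,c_2,P_2,\dots,c_t,P_t,c_1$ obtained by stringing together Hamiltonian paths $P_i$ of the $G_i$. For the parameter: $\Delta(G)\le 2k$ (a connector sees two blocks, a block vertex sees its at most $k-1$ old neighbours plus the two connectors touching its block), and $\tw(G)\le k+2$, witnessed by the path decomposition with bags $B_i=V(G_i)\cup\{c_i,c_{i+1},c_1\}$ (the extra $c_1$ only repairs the cyclic wrap-around). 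Hence $\tw(G)+\Delta(G)=\O(k)=\O(\max_i|x_i|)$, and the reduction is clearly polynomial in $\sum_i|x_i|$.

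\textbf{Correctness.} If every $G_i$ has a Hamiltonian path, the cycle above is a closed walk of weight exactly $\budget$. Conversely, given a closed walk $S$ of weight at most $\budget$ visiting every vertex, I would inspect the multigraph $G_S$: it is connected, has all degrees even and at least $2$, and at most $t(k+1)$ edges over $t(k+1)$ vertices, which forces every degree to equal $2$; since $G_S$ is connected this makes it a single Hamiltonian cycle of $G$ (a repeated edge would be an isolated $2$-cycle). Deleting the $t$ connectors from this cycle leaves at most $t$ paths, each contained in one $V(G_j)$ (distinct blocks are non-adjacent), and together they partition $\bigcup_i V(G_i)$, so each block contains at least one such path. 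Each such path sends exactly two $G_S$-edges out of its block (to connectors), so the blocks receive $2\sum_j p_j$ connector-incident edges in total, where $p_j\ge 1$ counts the paths in block $j$; but the $t$ connectors have total degree exactly $2t$ and all their incident edges go to blocks, so $\sum_j p_j=t$ and hence $p_j=1$ for all $j$. A single path covering all of $V(G_j)$ using only edges of $G_j$ is a Hamiltonian path of $G_j$, so all instances are yes-instances. Combined with the $\NP$-hardness of \textsc{Hamiltonian Path} and \Cref{pro:nph_or_cross_composition}, this yields the claimed non-existence of a polynomial kernel.

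\textbf{Main obstacle.} The delicate point is making $\Delta(G)$ \emph{and} $\tw(G)$ small simultaneously: a single apex keeps treewidth low but blows up the degree, whereas a long ``spine'' of connectors keeps the degree low but slackens the edge count, so a minimum closed walk is no longer forced to be a Hamiltonian cycle. The cyclic necklace is precisely the structure that is both of bounded degree and of treewidth $\le k+2$, while still being tight enough that an optimal closed walk must be a Hamiltonian cycle; the remaining care goes into the combinatorial step that such a Hamiltonian cycle of $G$ necessarily decomposes, block by block, into one Hamiltonian path per $G_i$ (the degree-counting on the connectors above), together with the easily overlooked wrap-around fix in the tree decomposition.
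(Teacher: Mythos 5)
Your proposal is correct and matches the paper's proof essentially verbatim: the same necklace of $t$ connector vertices each joined to two consecutive blocks, the same unit weights and budget $\budget=t(k+1)$, and the same bounds $\Delta(G)=2k$ and $\tw(G)\le k+2$. Your converse direction is in fact somewhat more careful than the paper's (the explicit degree count forcing $G_S$ to be a Hamiltonian cycle, and the arc-counting argument showing each block receives exactly one arc), but it follows the same route.
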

\begin{proof}
	To prove the lemma, we use a construction similar to one used in the proof of \Cref{lem:HP_cross_compose_TSP_fn}. The equivalence relation $\mathcal{R}$ remains the same and the instance $\mathcal{I}=(G=(V,E),\wFn,\budget)$ of the \TSP is constructed as follows. We start the construction of $G$ by taking disjoint union of $G_i$ for all $i\in[t]$. Then, we add a new vertex $v_i$ for every $i\in[t]$ and add an edge ${v_i,u}$ for every $u\in V(G_i)\cup V(G_{(i+1)\bmod t})$. The weight function $\wFn$ is constant function assigning each edge a weight $1$, and the budget remains $\budget = t\cdot(k-1)+2t = t\cdot (k+1)$. Note that since $\tw(G_i)\leq k$ and $\Delta(G_i) \leq k-1$ for every $i\in[t]$, it follows that $\tw(G) \leq k + 2$ and $\Delta(G) = 2k$ due to the added vertices $v_i$.

	Assume first that there is a solution $P_i$ for every $G_i$, $i\in[t]$. We construct a solution of $\mathcal{I}$ by ``gluing'' solutions $P_i$ using the added vertices $v_i$, $i\in[t]$. The desired walk in $\mathcal{I}$ is a walk $W=v_t,P_1,v_1,P_2,v_2,P_3,v_3,\ldots,P_t,v_t$. Clearly, $W$ visits all vertices of $G$. Moreover, for every vertex $v_i$, $i\in[t]$ the walk $W$ uses exactly two edges incident with $v_i$. The length of every $P_i$ is exactly $k-1$, hence $\sum_{e\in E(W)}e=t(k-1)+2t=t(k+1) = \budget$.

	In the opposite direction, let $W$ be a solution walk for the constructed \TSPshort instance $\mathcal{I}$. By definition, $W$ is a closed walk visiting each vertex at least once and thus each vertex is incident with even number of edges in $W$. Suppose that there is a vertex $v\in W$ such that $v$ is visited at least twice by $W$. Then the total weight of $W$ is at least $2(tn+t+1)/2 = t(n+1) + 1$, however the budget $\budget$ is $t(n+1)$. Consequently, $W$ visits every vertex exactly once and therefore is a path. It follows that a path $P_i = G_i\cap W$, where $i\in[t]$, is a Hamiltonian path in the original instance.
\end{proof}

\begin{proof}[Proof of \Cref{lem:TSP:noPKwrtFN}]
	\Cref{pro:nph_or_cross_composition} together with \Cref{lem:HP_cross_compose_TSP_fn} implies that \TSP parameterized by the fractioning number of the input graph does not admit a polynomial compression, unless $\NP\subseteq\coNPpoly$. The same holds for \Cref{pro:nph_or_cross_composition} and \Cref{lem:HP_cross_compose_TSP_degtw}, which completes the proof.
\end{proof}

\section{Conclusions}\label{sec:conclusions}

The core focus of this work is kernelization of the \TSP.
To stimulate further research in this area we would like to promote some follow up research directions.
Design of ``local'' rules might be impossible (as was mentioned in the case of feedback edge set number) and therefore we occasionally have to consider generalizations of this problem that give us more power when designing the reductions.
An interesting open problem that remains in this area is whether \TSPshort does admit a polynomial kernel with respect to the feedback vertex number.
Towards this we have provided several steps that suggest it should be possible to design a polynomial kernel.
It should be noted that in order to do so, one could try to ``lift'' our arguments to modulator to trees of any size.
Note that this is not possible for cycles.

Yet another interesting line of work is the one of $q$-path vertex cover ($q$-pvc).
This is a generalization of the vertex cover number---a graph $G$ has $q$-pvc of size $k$ if it has a modulator $M$ (with $|M|=k$) such that in $G \setminus M$ there is no path of length $q$.
It is not hard to see that some of our arguments can be applied for components of $G \setminus M$ that are stars (of arbitrary size). Therefore, it should be possible to give a polynomial kernel for $3$-pvc.
Thus, the question arises: Is there a polynomial kernel with respect to $r$-pvc for constant $r$?
In this case we are more skeptical and believe that the correct answer should be negative.

\bibliography{wayprout}

\end{document}